\def\insrt#1{\blue {#1}} 
\definecolor{shadecolor}{rgb}{0.9, 0.9, 0.81}
\numberwithin{equation}{section}
\newtheorem{thm}{Theorem}
\newtheorem{prop}{Proposition}[section]
\newtheorem{assumption}{Assumption}[section]
\newtheorem{defn}[prop]{Definition}
\newtheorem{lemma}[prop]{Lemma}
\newtheorem{remark}{Remark}
\newcommand{\bpm}{\begin{pmatrix}}
\def\epm{\end{pmatrix}}
\def\le{\left}
\def\ri{\right}
\def\be{\begin{equation}}
\def\ee{\end{equation}}
\def\bea{\begin{eqnarray}}
\def\eea{\end{eqnarray}}
\def\bp{\begin{prop}}
\def\ep{\end{prop}}
\def \pa{\partial}
\def\wh{\widehat}
\def\1{{\bf 1}}
\def\bs{\boldsymbol}
\def\d{d}
\def\eq{\begin{equation}}
\def\endeq{\end{equation}}
\def\eqarr{\begin{eqnarray}}
\def\endeqarr{\end{eqnarray}}
\def\eqnn{\begin{equation*}}
\def\endeqnn{\end{equation*}}
\def\ds{\displaystyle}
\def\e{\varepsilon}
\newcommand{\C}{\mathbb{C}}
\newcommand{\R}{\mathbb{R}}
\definecolor{light-blue}{rgb}{0.8,0.85,1}
\definecolor{blue}{rgb}{0,0,1}
\definecolor{red}{rgb}{1,0,0}
\def\red#1{\textcolor[rgb]{0.9, 0, 0}{#1}}
\def\blue#1{\textcolor[rgb]{0,0,1}{#1}}
\begin{document}

\baselineskip 16pt plus 1pt minus 1pt

\title{{\bf Spectra of Random Hermitian Matrices with a Small-Rank External Source:  The critical and near-critical regimes}}


\author{{\bf M. Bertola}\footnote{Department of Mathematics and Statistics, Concordia University, Montr\'eal, Qu\'ebec H3G1M8 and Centre de recherches math\'ematiques, Universit\'e de Montr\'eal, Qu\'ebec H3T1J4;  bertola@mathstat.concordia.ca},
{\bf R. Buckingham}\footnote{Department of Mathematical Sciences, University of
Cincinnati, Ohio 45221; buckinrt@uc.edu},
{\bf S. Y. Lee}\footnote{Department of Mathematics, California Institute of Technology, Pasadena, California 91125; duxlee@caltech.edu},
{\bf V. Pierce}\footnote{Department of Mathematics, University of Texas -- Pan American, Edinburg, Texas, 78539; piercevu@utpa.edu}}

\date{\today}
\maketitle

\begin{abstract}
Random Hermitian matrices are used to model complex 
systems without time-reversal invariance.  
Adding an external source to the model can have the effect of 
shifting some of the matrix eigenvalues, which corresponds to 
shifting some of the energy levels of the physical system. 
We consider the case when the $n\times n$ external 
source matrix has two distinct real eigenvalues:  $a$ with multiplicity 
$r$ and zero with multiplicity $n-r$.  For a Gaussian potential, it 
was shown by P\'ech\'e \cite{Peche:2006} that when $r$ is fixed or grows 
sufficiently slowly with $n$ (a small-rank source),  
$r$ eigenvalues are expected to exit the main bulk for $|a|$ large 
enough.  Furthermore, at the critical value of $a$ when the outliers are 
at the edge of a band, the eigenvalues at the edge are described 
by the $r$-Airy kernel.  We establish the universality of the $r$-Airy 
kernel for a general class of analytic potentials for 
$r=\mathcal{O}(n^\gamma)$ for $0\leq\gamma<1/12$.
\end{abstract}

\tableofcontents

\section{Introduction}

Fix a Hermitian matrix ${\bf A}$.  Equip the space of $n\times n$ Hermitian 
matrices ${\bf M}$ with the probability measure
\eq
\label{prob-measure}
\mu_n(d{\bf M}) = \frac{1}{Z_n}e^{-n\text{Tr}(V({\bf M})-{\bf A}{\bf M})}d{\bf M}; \quad Z_n:=\int e^{-n\text{Tr}(V({\bf M})-{\bf A}{\bf M})}d{\bf M},
\endeq
where $d{\bf M}$ is the entry-wise Lebesgue measure
and the integration is over all Hermitian matrices.  The eigenvalues 
of ${\bf M}$ represent the energy levels of a system without
time-reversal invariance \cite{Wigner:1951}.  When the \emph{external field} 
${\bf A}$ is nonzero and $V({\bf M})={\bf M}^2/2$, this measure arises in the 
study of Hamiltonians that can be written as the sum of a random matrix and a 
deterministic source matrix \cite{Brezin:1996}.

When ${\bf A}={\bs 0}$ (no external source) and 
$V({\bf M})={\bf M}^2/2$, \eqref{prob-measure}
describes the Gaussian Unitary Ensemble, or GUE.  For reasonable choices 
of $V(z)$ the spectrum tends to accumulate on fixed bands on the real axis.  
Introducing an external field ${\bf A}$ can have the 
effect of perturbing the expected position of the spectrum.  
For example, Aptekarev, Bleher, and
Kuijlaars \cite{Bleher:2004b, Aptekarev:2005, Bleher:2007} studied the 
Gaussian case when the matrix ${\bf A}$ has two eigenvalues $\pm a$, each of
multiplicity $n/2$.  When $a$ is sufficiently small (the 
\emph{subcritical} case), 
the eigenvalues of 
${\bf M}$ accumulate with probability one on a single interval, just as 
when ${\bf A}={\bs 0}$.  As $a$ increases the interval splits into two 
(the \emph{supercritical} case).  There is a transitional value of $a$ 
between these two cases (the \emph{critical} case) where the local eigenvalue 
density near where the bands are about to split is described by 
the \emph{Pearcey process}.  See \cite{McLaughlin:2007} and 
\cite{Bleher:2010} for further studies of large-rank external field 
models, and \cite{Adler:2009b} for recent results on the universality of 
the Pearcey process.

We are interested instead in \emph{small-rank} sources of the form
\eq
\label{A}
{\bf A} = \text{diag}(\underbrace{a,\dots,a}_r,\underbrace{0,\dots,0}_{n-r})
\endeq
assuming that $r(n)=\mathcal{O}(n^\gamma)$, $0\leq\gamma<1/12$.  
The ratio of $r$ to $n$, which is 
asymptotically small, will be denoted as 
\eq\label{kappa}
\kappa:=\frac{r}{n} = \mathcal{O}(n^{\gamma-1}).
\endeq
The limiting distribution of the largest
eigenvalue for such a small-rank external source in the Gaussian case, 
$V({\bf M})={\bf M}^2/2$, was studied by 
P\'ech\'e \cite{Peche:2006}.  Again three distinct behaviors were observed.  
For $a$ sufficiently close to zero, i.e. the subcritical case, the largest 
eigenvalue is expected to lie at the right band endpoint and
behave as the largest eigenvalue of an $n\times n$ GUE matrix.  For $a$ 
large enough, i.e. the supercritical case, $r$ eigenvalues are expected to 
exit the bulk and be distributed as the eigenvalues of an
$r\times r$ GUE matrix.  In the transitional critical case, when the outliers
lie very near the band endpoint, the distribution for the largest eigenvalue
is an extension of the standard GUE Tracy-Widom function
\cite{Tracy:1994} when $r=0$ (see also \cite{Baik:2005,Baik:2006,Adler:2009a}).
These functions, denoted by $F_r(x)$, were first discovered by
Baik, Ben-Arous, and P\'ech\'e \cite{Baik:2005} in the context of sample
covariance, or Wishart, matrices and were shown to be probability
distributions by Baik \cite{Baik:2006}.  Adler, Del\'epine, and
van Moerbeke \cite{Adler:2009a} showed that these distributions also appear 
when 
$n$ non-intersecting Brownian motions start from $x=0$ when $t=0$ with $n-r$
conditioned to end at $x=0$ when $t=1$ and $r$ conditioned to end at
$x=a$ when $t=1$.  The walkers all start out in a single group, but at a
critical time depending on $a$, a group of $r$ walkers separates from the
main bulk.  At this critical time the walkers on the edge where separation
is about to occur follow the \emph{$r$-Airy process}, which is connected with
$F_r(x)$. See \cite{Baik:2006} for other processes in which the functions
$F_r(x)$ arise.

In this paper we extend P\'ech\'e's result in the critical case to more 
general functions 
$V({\bf M})$.  Our specific assumptions are listed in Section \ref{results}, 
but we essentially assume $V({\bf M})$ is a generic analytic potential with 
sufficient growth at infinity.  This establishes a new universality class 
of matrix ensembles with the local eigenvalue density near the critical point 
described by the $r$-Airy process.  The universality of the supercritical 
and subcritical cases has been considered separately \cite{Bertola:2010}.  

In the case of rank one perturbation (i.e. $r=1$), the recent paper \cite{Baik-Wang:2010} by Baik and Wang has described the limiting distribution of the largest eigenvalue for all the possible cases including the critical case that we consider here.  

\subsection{The kernel and its connection to multiple orthogonal polynomials}

Let $p_m(\lambda_1,\dots,\lambda_m)$ be the probability density that
the $n\times n$ matrix ${\bf M}$ chosen using \eqref{prob-measure} has 
eigenvalues
$\{\lambda_1,\dots,\lambda_m\}$ (here $m\leq n$).  Then, when the $\lambda_i$
are distinct, the $m$-point correlation function is
$\frac{n!}{(n-m)!}p_m(\lambda_1,\dots,\lambda_m)$.  Br\'ezin and Hikami
\cite{Brezin:1996, Brezin:1997a, Brezin:1997b, Brezin:1998} showed that
in the Gaussian case, the $m$-point correlation functions can
all be expressed in terms of a single kernel $K(x,y)$:
\eq
\frac{n!}{(n-m)!}p_m(\lambda_1,\dots,\lambda_m) = \det(K(\lambda_i,\lambda_j))_{i,j=1,\dots,m}.
\endeq
Zinn-Justin \cite{Zinn-Justin:1997,Zinn-Justin:1998} extended this result to
the case of more general $V({\bf M})$.  We will find the leading term in 
the large-$n$ asymptotic expansion of the kernel in the critical regime 
near the critical endpoint.

Bleher and Kuijlaars \cite{Bleher:2004a}
showed that the kernel can be written in terms of 
{\it multiple orthogonal polynomials}.  Furthermore, these multiple 
orthogonal polynomials can be written in terms of the solution to a 
certain \emph{Riemann-Hilbert problem}.   Specifically, suppose 
${\bf Y}(z)$ is a $3\times 3$
matrix-valued function of the complex variable $z$ satisfying
\eq
\label{rhp}
\begin{cases}
{\bf Y}(z) \text{ is analytic for } z\notin\mathbb{R}, \\
{\bf Y}_+(x) = {\bf Y}_-(x)\bpm 1 & e^{-nV(x)} & e^{-n(V(x)-ax)} \\ 0 & 1 & 0 \\ 0 & 0 & 1 \epm \text{ for } x\in\mathbb{R}, \\
{\bf Y}(z) = \left({\bf I}+\mathcal{O}\left(\frac{1}{z}\right)\right)\bpm z^n & 0 & 0 \\ 0 & z^{-(n-r)} & 0 \\ 0 & 0 & z^{-r} \epm \text{ as } z\to\infty.
\end{cases}
\endeq
Here ${\bf Y}_\pm(x) := \lim_{\varepsilon\to 0}{\bf Y}(x\pm i\varepsilon)$ denote the
non-tangential limits of ${\bf Y}(z)$ as $z$ approaches the real axis from the
upper and lower half-planes.  Whenever posing a Riemann-Hilbert problem
we assume (unless otherwise stated) that the solution has uniformly 
H\"older continuous boundary
values with any exponent $p\in(0,1]$ along the jump contour when approached 
from either side.
Under our assumption (iv) in Section \ref{results}, the unique 
solution ${\bf Y}(z)$ can be written
explicitly in terms of multiple orthogonal polynomials of the second
kind (see \cite{Bleher:2004b}, Section 2).  In the case of two distinct  
eigenvalues $a$ and $0$, which is our case, 
the kernel $K_n(x,y)$ may be written in terms of the function ${\bf Y}(z)$ as
\eq
\label{mop-kernel}
K_n(x,y) = \frac{e^{-\frac{1}{2}n(V(x)+V(y))}}{2\pi i(x-y)} \bpm 0 & 1 & e^{nay} \epm {\bf Y}(y)^{-1}{\bf Y}(x)\bpm 1 \\ 0 \\ 0 \epm.
\endeq
To analyze the asymptotic behavior of ${\bf Y}$ we will use the standard 
\emph{nonlinear steepest descent method} for Riemann-Hilbert problems, as 
well as certain
ideas introduced by Bertola and Lee \cite{Bertola:2007,Bertola:2008} to 
study the
first finitely many eigenvalues in the birth of a new spectral band for
the random Hermitian matrix model without source.

A potential alternate method for establishing universality for finite 
$r$ would be to use Baik's result \cite{Baik:2008} writing the kernel 
$K_n(x,y)$ in terms of the standard (not multiple) orthogonal
polynomials.  The rank of the matrices in this alternate expression grows 
with $r$, whereas the size of the Riemann-Hilbert problem \eqref{rhp} 
grows with the number of \emph{distinct} eigenvalues.  As such, for 
growing $r$ it is more convenient to analyze the Riemann-Hilbert problem 
for multiple orthogonal polynomials.

\subsection{Definition of the critical regime}

We recall the setting of our work \cite{Bertola:2010}.
Let $g(z)$ be the $g$-function associated with the orthogonal
polynomials with potential $V(z)$ (see, for instance, \cite{Deift:1998-book} 
or \cite{Deift:1999a}).  
It may be written as
\bea
\label{mathfrak-g}
{\mathfrak g}(z;\kappa):=\frac{1}{1-\kappa/2}\int_{\mathbb R}\log(z-s)\rho_{\mbox{\scriptsize min}}(s;\kappa)ds=\log(z)+{\cal O}\left(\frac1z\right),
\eea
where $\rho_\text{min}d s = \rho_\text{min}(s;\kappa) ds$ is the unique measure minimizing the 
functional
\bea
\mathcal{F}[\rho]:=\int_\mathbb{R}V(s)\rho(s)ds-\int_\mathbb{R}\int_\mathbb{R}\rho(s)\rho(s')\log|s-s'|ds\,ds'\ ,\qquad 
\int_\R \rho(s) \d s = 1 - \frac \kappa 2.
\eea 
Here $\kappa$ is a small parameter which is identified with the ratio $\kappa = \frac r n$.
The variational equations are equivalent to the statement that 
\cite{Saff:1997-book} there exists a real constant $\ell_1 = \ell_1(\kappa)$  such that 
\be
\text{Re}\big[V(x) - (2-\kappa) \mathfrak g(x;\kappa)- \ell_1(\kappa)\big]
 \le\{ \begin{array}{cl}
\geq 0 & \text{ on } \R \setminus {\rm supp}\,\rho_\text{min},\\
\equiv 0 & \text{ on } {\rm supp}\,\rho_\text{min}.
\end{array}
\ri.
\label{varprob}
\ee
We shall denote 
\be
g(z):= \mathfrak g(z;0)\ ,\qquad l_1:= \ell_1(0).
\ee
Our first assumption  will be 
\begin{assumption}
The unperturbed ($\kappa=0$) variational  problem is {\bf regular} in the sense of \cite{Kuijlaars:2000} which means that the inequalities in \eqref{varprob} are strict {\em and} the behavior of $V(x) - 2g - l_1$ at any boundary point $x=\xi$  of the support of $\rho$ is asymptotic to $\propto(x-\xi)^\frac 32$ (approaching $\xi$ from the complement of the support). 
\end{assumption}

 It has been shown in \cite{Kuijlaars:2000}  at Theorem 1.3\footnote{In the theorem, the changing parameter is essentially $\kappa$ after rescaling.} that, 
for real-analytic $V(x)$,
\begin{description}
\item[i)] If $V(x)$ is regular for $\kappa=0$ then $V(x)$ is still regular for $\kappa$ small enough, and
\item[ii)] The locations of the spectral edges (the $\alpha$'s and $\beta$'s) 
are real-analytic functions of $\kappa$ such that the bands of the support of 
the equilibrium measure stay separated as $\kappa$ ranges in a small open set around $\kappa=0$.
\end{description}
In addition it is also known that the support (under the real--analyticity assumption) consists of a {\em finite} union of bounded intervals;  we will denote the support of the density $\rho_{\mbox{\scriptsize min}}$   by (see Figure \ref{fig_crit})
\begin{equation}
\begin{split}
{\rm supp}\,\rho_{\mbox{\scriptsize min}}=\left(\bigcup_{j=1}^g[\alpha_j(\kappa),\beta_j(\kappa)]\right)\cup[\alpha(\kappa),\beta(\kappa)],\hspace{.7in}\\
\alpha_1(\kappa)<\beta_1(\kappa)<\alpha_2(\kappa)<\beta_2(\kappa)<\cdots<\alpha_g(\kappa)<\beta_g(\kappa)<\alpha(\kappa)<\beta(\kappa).
\end{split}
\end{equation}

We will consider the {\em unperturbed} density to be the solution of the above variational problem with $\kappa=0$; in this case the reference to $\kappa$ will be tacitly suppressed, and so $\alpha_j= \alpha_j(\kappa)\big|_{\kappa=0}$, etc.
Define for the unperturbed ($\kappa=0$) problem the following quantities:
\begin{eqnarray}
\label{P1-nolog}
P_1(z) & := & -V(z)+2g(z)+l_1, \\
\label{P2-nolog}
P_2(z) & := & -V(z)+az+g(z)+
{l_1-l_3}, \\
\label{P3-nolog}
P_3(z) & := & -P_1(z) + P_2(z) = az-g(z)
{-l_3}.
\end{eqnarray}
Note that $V(\beta) = \frac 1 2 g(\beta)- l_1$ and hence $P_2(\beta) = P_3(\beta)$;  we choose $l_3$ such that $P_2(\beta)=P_3(\beta)=0$.

It is also known that $\text{Re}g(z)$ is a  continuous function on $\R$ and harmonic (and convex) on the complement of the support (up to a sign it  is also known as  the {\em logarithmic potential} in potential theory).

\begin{defn}
\label{acrit}
Define $a_c$ to be the (unique) value of $a$ so that $P_2'(\beta)=0$ (here $\kappa=0$).
\end{defn}
The uniqueness is promptly seen because $P_2'(\beta) = -V'(\beta) + a +g'(\beta)$; 
in fact the effective potential $P_1$  is known \cite{Deift:1998a} to satisfy 
\be
P_1'(z)= \mathcal O{(z-\beta)^\frac 12}.
\ee
In particular, $P_1'(\beta)=0$ and hence $P_2'(\beta) = a-g'(\beta)  = a-\frac 12 V'(\beta)$. Thus the critical value of $a$ is given by
\begin{equation} 
a_c = g'(\beta) = \frac{1}{2} V'(\beta).\label{110}
 \end{equation}

%
%
%
%
We also recall that for  regular potentials the behavior of (a suitable branch of) the function $P_1(z)$ near any of the endpoints  of the interval of support is 
\be
P_1(z) = -C (z-\beta)^{\frac 32} (1+ \mathcal O(z-\beta))
\ee
for some constant $C$.  For the point $\beta = \sup\left[ {\text{supp}} \, \rho_{\text{min}}\right]$ one can also prove that $C>0$; this allows us to introduce the {\em scaling coordinate} $\zeta$ near $z=\beta$ via the definition 
\be
\zeta= \zeta(z;n):= \le(-\frac {3n}4 P_1(z)\ri)^{\frac 23} = n^{\frac 23} c_1 (z-\beta)(1 + \mathcal O(z-\beta))\ ,\ \ c_1>0\label{zetaintro}.
\ee

We now define the critical and near-crtical regimes.  A more extensive context for these definition can be found in \cite{Bertola:2010}.
For completeness we also define the supercritical, subcritical, and 
jumping outlier regimes.  The supercritical and subcritical regimes are 
dealt with separately in \cite{Bertola:2010};  we plan to consider the 
(non-generic) jumping outlier regime in a future work.

\begin{defn}\label{critical}
The matrix model specified by \eqref{prob-measure} is in the {\bf critical regime} if $a=a_c$ and 
{$P_2(x)<P_2(\beta)$ for $x>\beta$.} 
  The scaling regime of $a-a_c  = \mathcal{O}\left(n^{-1/3}\right)$ will be called {\bf near-critical}.  We define the exploration parameter $\tau$ by
\bea
\tau:=\lim_{n\to\infty} n^{1/3}(a-a_c)/c_1,
\eea
where $c_1$ is the positive constant defined at 
\eqref{zetaintro}.
\end{defn}
We define, for $0<a<a_c$, $b^*$ to be the unique point on the real axis 
greater than $\beta$ such that $P_3'(\beta)=0$.  For $a\geq a_c$ we can 
choose $b^*:=\beta$.
\begin{defn}
\label{supercritical}
The model is in the {\bf supercritical regime} if $P_2$ has a unique global maximum on $\{x\geq \max\{\beta, b^*\}\}$ at a point $x=a^*\in \R$ and any of the three conditions below is satisfied:
\begin{itemize}
\item $a>a_c$.
\item $a=a_c$ and $P_3(\beta) =P_2(\beta)<P_2(x)$ for some $x>\beta$.
\item $0<a< a_c$ and $P_3(b^{*})<P_2(x)$ for some $x>b^{*}$.
\end{itemize}
Note that $a^*$ is always greater than $\beta$ and $b^*$.
If the global maximum on $[\max\{\beta,b^*\}, \infty)$ is  attained at several distinct points then we will say that we are in the {\bf jumping outlier regime}.
\end{defn}

\begin{defn}\label{subcritical}
The matrix model specified by \eqref{prob-measure} is in the {\bf subcritical regime} if $a < a_c$ and 
$P_2(x)<P_3(b^*)$ for all $x\geq b^*$.
\end{defn}


\subsection{Assumptions and results}
\label{results}

We will make the following assumptions on $a$, ${\bf A}$, and $V(z)$:  
\begin{itemize}
\item[(i)]  $a>0$.
\item[(ii)]  ${\bf A}$ is a \emph{small-rank} external source of the form 
\eqref{A} with $r=\mathcal{O}(n^\gamma)$, $0\leq\gamma<1/12$.  
\item[(iii)]  $V(z)$ is real analytic and \emph{regular} in the sense of 
\cite{Kuijlaars:2000}.
\item[(iv)] $\ds\lim_{|z|\to\infty}\frac{V(z)}{\log(1+z^2)} = \infty \quad \text{and} \quad \lim_{|z|\to\infty}\frac{V(z)-az}{\log(1+z^2)} = \infty.$
\end{itemize}

Regarding assumption (i), the case when $a<0$ is equivalent by 
sending $a\to-a$
 and $V(z)\to V(-z)$.  As for assumption (ii), 
in the general case when ${\bf A}$ has $m>2$ distinct eigenvalues the 
kernel can be written in terms of multiple orthogonal polynomials associated 
to an $(m+1)\times(m+1)$ Riemann-Hilbert problem, which is beyond the scope of 
this paper.  The assumption of analyticity in (iii) allows us 
to use the nonlinear steepest-descent method for Riemann-Hilbert problems.  
The assumption of regularity ensures that the equilibrium measure of $V(z)$ has square-root 
decay at each band endpoint (so that  we can use Airy parametrices) and that these 
endpoints are analytic functions of $\kappa$ 
{near $\kappa=0$.}
 
Next, (iv) guarantees the existence of the multiple orthogonal polynomials 
needed to ensure the Riemann-Hilbert problem has a solution.  
We note that the allowed $V(z)$ include 
any convex $V(z)$ (see the introduction of \cite{Bertola:2010}).  

We compute the large-$n$ behavior of the kernel function \eqref{mop-kernel} 
in the critical regime.  We explicitly compute the kernel in a neighborhood of 
$\beta$.  In the remaining portions of the complex
plane, our result is that the kernel function converges to the kernel for the 
classical orthogonal polynomial problem with respect to $V(x)$.  That is, 
away from $\beta$ the standard universality classes apply 
(i.e. the sine kernel in the bulk of the spectrum and Airy kernels at the 
other edges).  Our main result is:
\begin{shaded}
\begin{thm} \label{crit-kernel-thm}
Suppose $V(z)$ and $a$ satisfy conditions (i)--(iv).  Let $\zeta_x, \zeta_y$ be fixed in some bounded set.
Also let $c_1$ be the constant appearing in 
\eqref{zetaintro}.
Then for large $n$, and for a positive integer $r$ such 
that $r=\mathcal{O}(n^\gamma)$ with $0\leq\gamma<1/12$, 
\eq
K_n\le(
\beta + \frac {\zeta_x+\delta}{c_1 n^{\frac 23}}, \beta + \frac {\zeta_y+\delta}{c_1 n^{\frac 23}}
\ri)
 = \frac{c_{1}n^{2/3}}{(2\pi i)^2}
\int_{ \widetilde{\mathcal{C}} } ds \int_{\mathcal{C}}  dt
\frac{ (t+\tau)^r}{(s+\tau)^r} \frac{ e^{\frac{1}{3} (s^3 - t^3) + \zeta_x t - \zeta_y s }}{ t- s}\left(1+{\cal O}\left(\frac{1}{n^{1/3}}\right)\right),
\label{Knkern}
\endeq
where the oriented contours $\mathcal{C}$ and $\widetilde{\mathcal{C}}$ are given in Figure \ref{AiryKernelContours}
and $\delta$ is a quantity independent of $\zeta_x,\zeta_y$ and of the form 
\bea
\delta := c_1 \dot \beta \kappa n^{\frac 2 3} = \mathcal O(n^{\gamma - \frac 1 3})\ ,\qquad \dot\beta := \frac {\d \beta(\kappa)}{\d \kappa} \bigg|_{\kappa=0}\ .
\eea
\end{thm}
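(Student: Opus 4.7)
The strategy is the Deift--Zhou nonlinear steepest descent analysis of the $3\times 3$ RHP \eqref{rhp}, adapted to the small-rank perturbation along the lines of \cite{Bertola:2010,Bertola:2007,Bertola:2008}. I would first perform a $g$-function transformation ${\bf Y}\mapsto {\bf T}$ using the $\kappa$-dependent equilibrium $g$-function $\mathfrak g(z;\kappa)$ together with appropriate constant diagonal factors. This normalizes the RHP at infinity and rewrites the jumps so that their off-diagonal entries are the oscillatory exponentials $e^{\pm n P_j}$ for $j=1,2,3$, whose real parts are controlled by the variational inequalities \eqref{varprob}. Next I would open lenses on each band of $\mathrm{supp}\,\rho_{\mathrm{min}}$ by the standard triangular factorization of the jump matrix, producing a new RHP ${\bf S}$ whose lens-boundary jumps are exponentially close to the identity away from the band endpoints.

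I would then construct a global parametrix ${\bf P}^{(\infty)}$ by solving the model RHP with only the band jumps (the usual multi-cut construction), together with local parametrices ${\bf P}^{(\xi)}$ in small disks $D_\xi$ around each band endpoint $\xi$. At every regular endpoint $\xi\neq\beta$ a standard Airy parametrix suffices, since the equilibrium measure retains square-root decay there. At $\beta$, however, the local parametrix must absorb the factor $((t+\tau)/(s+\tau))^r$ arising from the rank-$r$ perturbation; here I would use an $r$-Airy model RHP whose solution is a $3\times 3$ matrix-valued function defined by contour integrals of precisely the form appearing on the right-hand side of \eqref{Knkern}. The matching of ${\bf P}^{(\beta)}$ with ${\bf P}^{(\infty)}$ on $\partial D_\beta$ is arranged through the scaling coordinate $\zeta$ of \eqref{zetaintro}; the parameter $\tau$ enters as the shift $n^{1/3}(a-a_c)/c_1$ from Definition \ref{critical}, and the shift $\delta$ accounts for the first-order displacement of $\beta(\kappa)$ in $\kappa$ around $\kappa=0$.

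Finally I would define the error matrix ${\bf R}(z):={\bf S}(z){\bf P}(z)^{-1}$, show that it solves a small-norm RHP, and back-substitute through the chain ${\bf R}\to{\bf S}\to{\bf T}\to{\bf Y}$ into the kernel formula \eqref{mop-kernel}. In the scaling window $x=\beta+(\zeta_x+\delta)/(c_1 n^{2/3})$, both ${\bf P}^{(\infty)}$ and ${\bf R}-{\bf I}$ contribute only subleading terms, so the leading asymptotics of $K_n$ come entirely from ${\bf P}^{(\beta)}$, producing precisely the contour-integral expression \eqref{Knkern}.

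The main obstacle will be the construction and matching of the local parametrix at $\beta$, uniformly in the growing rank $r=\mathcal{O}(n^\gamma)$. Expanding ${\bf P}^{(\beta)}({\bf P}^{(\infty)})^{-1}$ on $\partial D_\beta$ introduces corrections in which the exponent $r$ multiplies powers of the inverse scaling coordinate, yielding terms of order $r^k n^{-k/3}$, together with further contributions from the $\kappa$-dependence of $\beta(\kappa)$ and of the scaling constant $c_1(\kappa)$. Keeping all of these uniformly subdominant to the desired $\mathcal{O}(n^{-1/3})$ remainder in \eqref{Knkern}, while simultaneously preserving the matching identities to leading order, is a delicate balancing act, and it is exactly this balance that forces the constraint $\gamma<1/12$; the remaining steps (the small-norm analysis of ${\bf R}$ and the back-substitution into \eqref{mop-kernel}) are essentially routine adaptations of the arguments of \cite{Bertola:2010,Bertola:2007,Bertola:2008} once this matching is in place.
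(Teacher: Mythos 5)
Your high-level outline matches the paper's overall architecture (lens opening, global parametrix, local $r$-Airy parametrix at $\beta$, small-norm error analysis), but there are two concrete missing ingredients without which the proof does not close, and you have flagged the first only as ``a delicate balancing act'' and waved off the second as ``routine.''

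First, for growing $r=\mathcal{O}(n^\gamma)$ the jump of the error matrix on $\partial\mathbb{D}_c$ is \emph{not} close to the identity: after conjugating the local parametrix against the outer one, one finds a matrix ${\bf L}_{(0)}(\zeta)={\bf I}+\mathcal{O}(rn^{1/6}/\zeta)+\cdots$ whose entries fail to decay when $\zeta=\mathcal{O}(n^{1/6})$. Saying the corrections are ``of order $r^kn^{-k/3}$'' is not an estimate that can be made subdominant by shrinking $\gamma$; some entries are genuinely $\mathcal{O}(1)$ or larger. The paper's mechanism is a finite cascade of pseudo-Schlesinger transforms (Proposition \ref{theoremSch} together with the nilpotent decomposition of Lemma \ref{nilpo}), which systematically strips the offending pole terms from ${\bf L}_{(0)}$ column by column until the conjugated jump is ${\bf I}+\mathcal{O}(n^{-2/3})$, while simultaneously preserving holomorphicity of the prefactor ${\bf H}_{(j)}(z)$ inside $\mathbb{D}_c$. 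You need some analogue of this; an Airy-type ``improved parametrix'' alone will not do it when $r\to\infty$.

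Second, the statement that the local parametrix is ``defined by contour integrals of precisely the form appearing on the right-hand side of \eqref{Knkern}'' is incorrect and hides a real step. The entries of $\mathcal{A}_r(\zeta)$ are \emph{single} contour integrals ${\rm Ai}^{(r)}_{\mathcal{C}_m}(\zeta;\tau)$, whereas the kernel formula is a \emph{double} contour integral with a Cauchy-type denominator $1/(t-s)$. Passing from one to the other requires computing $\mathcal{A}_r(\zeta_y)^{-1}\mathcal{A}_r(\zeta_x)$ explicitly, which the paper does by identifying $\mathcal{A}_r$ with a Wronskian matrix for a third-order ODE and invoking the bilinear concomitant pairing with solutions of the adjoint ODE (Proposition \ref{prop-4-15}); the inverse is then $\widehat{\bs\chi}_{r-1}{\bf F}$ and the $1/(t-s)$ appears only after a chain of integrations by parts. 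This is the key algebraic step that produces \eqref{Knkern}, and it is not a routine back-substitution. Without it you have the local parametrix but not the kernel.
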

\end{shaded}

\begin{remark}
By dropping the drift term $\delta$ in \eqref{Knkern} we would simply deteriorate the error estimate to $\mathcal O(n^{\gamma-\frac 13})$ which is however  still vanishing since $\gamma<\frac 1 {12}$.
\end{remark}
The constant $\dot \beta$ admits an explicit integral representation for an arbitrary real-analytic potential $V(z)$ but it is a bit complicated when the 
equilibrium measure is supported on multiple intervals. In the simplest case where the support of the equilibrium measure consists of a single interval $[\alpha,\beta]$ then we have 
\be
\dot \beta =\frac 1{(\beta -\alpha)\oint \frac{V'(z)\d z}{(z-\beta) R(z) 2i\pi}},
\ee
where the contour of integration is a simple closed contour surrounding the support $[\alpha,\beta]$ in the complex plane. We will not be using in any way the explicit form of $\dot \beta$, except the fact that it is a well--defined quantity due to the smoothness of $\beta(\kappa)$ guaranteed by the already cited  Kuijlaars' Theorem 1.3 in \cite{Kuijlaars:2000}.

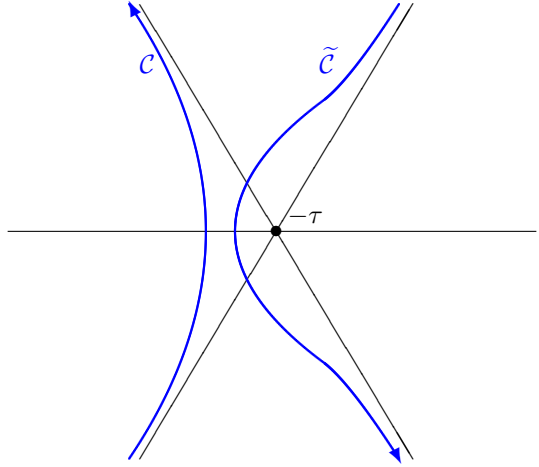
\begin{figure}
\setlength{\unitlength}{2pt}
\begin{center}
\begin{picture}(100,100)(-50,-50)
\put(-50,0){\line(1,0){100}}
\put(-25,43){\line(3,-5){51.7}}
\put(-25,-43){\line(3,5){51.7}}
\put(0.8,0){\circle*{2}}
\put(3,1.5){$-\tau$}
\thicklines
\blue{
\put(-25,30){$\mathcal{C}$}
\qbezier(-27,43)(2,0)(-27,-43)
\put(-26.1,41.6){\vector(-1,2){1}}
\put(9,30){$\widetilde{\mathcal{C}}$}
\qbezier(10,25)(-24,0)(10,-25)
\qbezier(10,25)(15,29)(24,43)
\qbezier(10,-25)(15,-29)(24,-43)
\put(23.5,-42){\vector(1,-2){1}}
}
\end{picture}
\end{center}
\caption{The contours $\mathcal{C}$ and $\widetilde{\mathcal{C}}$ used in the $r$-Airy Kernel.  The straight lines cross at $-\tau$.\label{AiryKernelContours}
}
\end{figure}
We show in Section \ref{brownian-motions} that our kernel is the same as the 
one found for nonintersecting Brownian walkers in \cite{Adler:2009a}.  


\bigskip
\noindent
{\bf Acknowledgments.}  The authors would like to thank Jinho Baik,
Ken McLaughlin, and Dong Wang for several illuminating discussions.  M. 
Bertola 
was supported by NSERC.  R. Buckingham was 
supported by the Charles Phelps Taft Research Foundation.  
V. Pierce was supported by 
NSF grant DMS-0806219.

\section{The perturbed equilibrium measure and the local coordinate \texorpdfstring{$\bs{\zeta}$}{zeta}} 
To describe a growing number of outliers, we define a perturbed 
 equilibrium measure problem with parameter of perturbation $\kappa = r/n= {\cal O}(n^{\gamma-1})$ for $0\leq\gamma<1/12$.
Using the $\mathfrak g$-function \eqref{mathfrak-g} we define
\begin{shaded}
\begin{equation}\label{141}
\begin{split}
{\mathcal P}_1(x;\kappa) & :=  -V(x) + (2-\kappa){\mathfrak g}(x;\kappa)  + \ell_1, \\ 
{\mathcal P}_2(x;\kappa) & := -V(x) + ax +\frac12(2-\kappa){\mathfrak g}(x;\kappa) + \frac{3\kappa}2\log(x-\beta) + \ell_1 - \ell_3 , \\ 
{\mathcal P}_3(x;\kappa) & :=
{-}{\mathcal P}_1 
{+} {\mathcal P}_2 = 
ax 
{-}\frac12(2-\kappa){\mathfrak g}(x;\kappa) 
{+} \frac{3\kappa}2\log(x-\beta) 
{-} \ell_3.
\end{split}
\end{equation}
\end{shaded}
 The other constant 
$\ell_3$ will be defined by \eqref{elltilde} after we define 
the locally holomorphic function $h(z)$.  

We also note that   we have
\begin{equation}
\lim_{\kappa\rightarrow0}{\cal P}'_j(x)=P'_j(x),\quad j=1,2,3,
\end{equation}
uniformly over compact subsets of $\R\setminus \{\beta\}$.
The $P_j(x)$'s ($j=1,2,3$) are defined by the same equations as the ${\cal P}_j$'s except that $\kappa$ is set to zero (hence all the log terms are dropped) and that ${\mathfrak g}(x)$ is replaced by $g(x)=\lim_{\kappa\rightarrow 0}{\mathfrak g}(x)$.  Such convergence is uniform (outside a finite disk around $\beta$ and inside a compact set) according to \cite{Kuijlaars:2000}. 

Since $V$ is regular for $\kappa=0$, it will still remain regular for small values of $\kappa$.  Hence
there exists a holomorphic function 
$\zeta(z)$ in a finite disk around $\beta$ such that
\begin{equation}\label{zeta}
-n{\cal P}_1(z;\kappa)=\frac43\zeta(z;\kappa)^{3/2}, \quad \zeta(\beta(\kappa);\kappa)=0,
\end{equation}
and thus
\eq
\label{zeta-ito-zmbeta}
\zeta = n^{2/3}c_1(\kappa)(z-\beta(\kappa))(1 + O(z-\beta(\kappa)))
\endeq
for some constant $c_1(\kappa)>0$.

Previously, we have defined $a_c$ at $\kappa=0$.  To describe the effect 
of growing $r$ we introduce a more exact definition $a_c(\kappa)$.  Note 
that 
\begin{equation}\label{hdef}
h(z;\kappa):=\frac12{\cal P}_1(z;\kappa)
{+}{\cal P}_3(z
;\kappa)-\frac{3\kappa}{2}\log\zeta
=-\frac12 V(z)+az+\frac{3\kappa}{2}\log\frac{z-\beta}\zeta  + \frac{\ell_1(\kappa)
}{2} - \ell_3(\kappa)
\end{equation}
is a locally holomorphic function at $z=\beta$ because $V(z)$ is real analytic.
\begin{defn}[$a_c(\kappa)$ and $\ell_3(\kappa)$]\label{ack}
For $\kappa>0$, $a_c(\kappa)$ is such that $h'(\beta(\kappa);\kappa)=0$ at $a=a_c(\kappa)$, i.e.
\begin{equation}
\label{acrit-kappa}
a_c(\kappa):=\frac12V'(\beta(\kappa))-\frac{3\kappa}{2}\frac{d}{dz}\log\frac{z-\beta
(\kappa) }\zeta\Big|_{z=\beta(\kappa)}.
\end{equation}
We also define $\ell_3(\kappa)$ such that 
\begin{equation}
h(\beta(\kappa);\kappa)=\frac{1}{n}\left(-\log\Gamma(r) - \log\left(\frac{\eta_0(\tau)}{2\sqrt\pi i}\right)\right),
\label{elltilde}
\end{equation}
where $\Gamma(r)$ is Euler's Gamma function and 
\begin{equation} \label{eta-naught}
\eta_0(\tau) := \int_0^\infty Ai(t) \left( e^{\tau t/\omega} + e^{\omega \tau t} + e^{\tau t} \right) dt\, ,\qquad
\omega := \exp\left( \frac{2\pi i}{3} \right) \,.
\end{equation}

One finds explicitly 
\begin{equation}
\ell_3(\kappa):= \frac{1}{n} \left( \log \Gamma(r) + \log\left( \frac{\eta_0(\tau)}{2 \sqrt{\pi} i} \right) \right) 
- \frac{1}{2} V(\beta(\kappa)) + a \beta
(\kappa)
 - \frac{\ell_1
(\kappa)
}{2} + \frac{3\kappa}{2} \log\left( \frac{1}{n^{2/3} c_1
(\kappa)
} \right)\,. 
\end{equation}
\end{defn}
We observe that if $r$ grows with $n$ then $\text{Re}(h(\beta
(\kappa);\kappa
))<0$ for 
sufficiently large $n$, which will be used in the proof of Proposition \ref{crit-lem-1}.
One can verify that $a_c(\kappa)$ converges to $a_c:= a_c(0)$ (see Definition 
\eqref{acrit}) when $\kappa\rightarrow0$
at the rate 
\be
a_c(\kappa) = a_c + \mathcal O(\kappa ) =  a_c + {\cal O}(n^{\gamma-1})\ .
\ee
Since for $\kappa=0$ we have $a_c=\frac 12 V'(\beta)=g'(\beta)>0$,  for 
sufficiently small $\kappa$ we still  have $a_c(\kappa)>0$.


 From Definition \ref{ack}, we have $h(z)=h(\beta
(\kappa);\kappa )+{\cal O}((z-\beta
 (\kappa)
)^2)$ at $a=a_c(\kappa)$ (because $h'(\beta(\kappa);\kappa
)=0$).  For other values of $a$, since only the term $az$ in $h(z
;\kappa
)$ depends on $a$, we get $h(z
;\kappa)={\cal O}((z-\beta
(\kappa))^2)+(a-a_c(\kappa))z$.  

For the subsequent exposition, we redefine $\tau$ in a way that is compatible with the earlier Definition \ref{critical}.
\begin{defn}[replacing Definition \ref{critical}]
\begin{equation}
\tau := n^{1/3}\left(a-a_c(\kappa)\right)/c_1(\kappa).
\end{equation}
\end{defn}
From \eqref{zeta-ito-zmbeta} and the above definition, we get
\begin{equation} \label{hz1}
h(z;\kappa)=h(\beta(\kappa);\kappa
)+\frac{\tau}{n}\zeta+{\cal O}\left((z-\beta
(\kappa)
)^2\right) = h(\beta
(\kappa);\kappa
)+\frac{c_1
(\kappa)
}{n^{1/3}}\tau(z-\beta
(\kappa)
)+{\cal O}\left((z-\beta
(\kappa)
)^2\right).
\end{equation}

\section{Initial analysis of the Riemann-Hilbert problem:  the global parametrix}
\label{Sec_lens}

We define the contour $L$ as the positively-oriented circle centered at $\alpha_1$ (the leftmost edge of the spectrum) and passing through $\beta$ (the rightmost edge of the spectrum). We choose the circle $L$ large enough so that ${\cal P}_2$ is negative on the real axis to the left of $L$.
Until further notice the dependence of the various quantities on $\kappa$ 
(i.e. $a_c$, $\beta$, $\mathfrak{g}$, etc.) will be understood throughout.
We have:
\begin{lemma}\label{lem_steepest}
For  sufficiently small $\kappa$ and $a=a_c$, the function ${\rm Re}\!\left(
\insrt{-}
{\cal P}_3(z)+\frac{3\kappa}{2}\log(z-\beta)\right)$ increases as one follows $L$  in either direction starting from $\beta$ (i.e. through the upper half-plane or the lower half-plane).
\end{lemma}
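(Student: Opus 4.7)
The plan is to leverage the explicit representation of $\mathcal P_3$ to reduce the monotonicity claim to a direct calculation with the logarithmic potential of $\rho_{\min}$. Using \eqref{141}, the function in question simplifies dramatically:
\[
f(z)\;:=\;-\mathcal P_3(z) + \frac{3\kappa}{2}\log(z-\beta)\;=\;-az + \tfrac12 (2-\kappa)\,\mathfrak g(z;\kappa) + \ell_3,
\]
since the $\frac{3\kappa}{2}\log(z-\beta)$ terms cancel exactly. Taking real parts, using \eqref{mathfrak-g}, and noting the identity $\tfrac{1}{2}(2-\kappa)\cdot\tfrac{1}{1-\kappa/2}=1$, we obtain the clean representation
\[
u(z)\;:=\;\text{Re}\,f(z)\;=\;-a\,\text{Re}(z) + \int_\R \log|z-s|\,\rho_{\min}(s;\kappa)\,ds + \text{Re}\,\ell_3.
\]

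The second step is to parametrize $L$ as $z(\theta)=\alpha_1 + R e^{i\theta}$ with $R=\beta-\alpha_1$, so that $|z(\theta)-s|^2 = R^2 - 2R(s-\alpha_1)\cos\theta + (s-\alpha_1)^2$ and hence $\partial_\theta \log|z(\theta)-s| = R(s-\alpha_1)\sin\theta\,/\,|z(\theta)-s|^2$, while $\partial_\theta\,\text{Re}\,z(\theta) = -R\sin\theta$. Combining these pieces yields the transparent formula
\[
\frac{du}{d\theta}\;=\;R\sin\theta\,\biggl[\,a + \int_\R \frac{(s-\alpha_1)\,\rho_{\min}(s;\kappa)}{|z(\theta)-s|^2}\,ds\,\biggr].
\]
By assumption (i) and the continuity $a_c(\kappa)\to a_c(0) = \tfrac12 V'(\beta) = g'(\beta)>0$, one has $a=a_c(\kappa)>0$ for sufficiently small $\kappa$. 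Every remaining factor in the bracket is non-negative: $s-\alpha_1\geq 0$ on $\text{supp}\,\rho_{\min}$ (as $\alpha_1$ is its leftmost point), $\rho_{\min}\geq 0$, and $L$ meets the support only at $\beta$, so $|z(\theta)-s|^2>0$ on the open arc. Consequently the bracket is strictly positive and $\text{sgn}(du/d\theta)=\text{sgn}(\sin\theta)$. This gives strict monotonicity of $u$ on $\theta\in(0,\pi)$, and the conjugation symmetry $u(\bar z)=u(z)$ (real coefficients) immediately yields strict monotonicity along the lower arc as $\theta$ decreases from $0$ to $-\pi$, which is exactly the claimed statement.

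The only delicate point in this argument lies at $\theta=0$, where $z(\theta)=\beta$ sits on the boundary of the support and the integrand in the bracket has a non-integrable singularity from $\rho_{\min}(s)\sim c\sqrt{\beta-s}$ combined with the $|z-s|^{-2}$ factor. A short scaling estimate near $s=\beta$, substituting $s=\beta-R\theta v$, shows the integral blows up like $\theta^{-1/2}$ as $\theta\to 0^+$; however, the $\sin\theta\sim\theta$ prefactor overpowers it to leave $du/d\theta = O(\theta^{1/2})\to 0^+$, consistent with $\beta$ being the starting point. This endpoint estimate is the only place where a bit of care is required; the rest of the proof is a one-line sign check on manifestly non-negative quantities, and is completely insensitive to the topology of the support (single or multi-interval).
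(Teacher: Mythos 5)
Your proof is correct and follows essentially the same route as the paper's: both exploit the cancellation $-\mathcal P_3(z)+\tfrac{3\kappa}{2}\log(z-\beta) = -az + \tfrac12(2-\kappa)\mathfrak g(z;\kappa)+\ell_3$ and then argue monotonicity of $\mathrm{Re}(-az)$ and of the logarithmic potential separately along the circle centered at $\alpha_1$. The only difference is presentational: you make the geometric monotonicity quantitative by parametrizing $L$ and checking the endpoint behavior at $\theta=0$ explicitly, whereas the paper simply invokes that $\log|z-x|$ increases along $L$ for each $x\in[\alpha_1,\beta]$.
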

\begin{proof}
From the definition \eqref{141}, we have $\insrt{-}{\cal P}_3(z)+\frac{3\kappa}{2}\log(z-\beta)=-az +\frac12(2-\kappa){\mathfrak g}(z) - \widetilde\ell$. It is obvious that ${\rm Re}(-az)$ increases when $a=a_c>0$. It is also simple to see that ${\rm Re}\left(\frac12(2-\kappa){\mathfrak g}(z)\right)$ increases along the referred contour because ${\rm Re}\log(z-x)$ increases along the contour for any $x$ in $[\alpha_1,\beta]$.
\end{proof}
We now open up lenses around each of the bands in the standard way (as in the 
analysis of the orthogonal polynomials associated to $V(z)$).  We introduce 
the following open regions (see Figure \ref{fig_crit}):
\begin{description}
\item $\Omega_j^\pm$, $j=1,...,g$:  The area in the upper half-plane ($+$) 
or lower half-plane ($-$) between the band $[\alpha_j,\beta_j]$ and its 
appropriate adjacent lens.
\item $\Omega_\text{main}^\pm$:  The area in the upper half-plane ($+$) 
or lower half-plane ($-$) between the band $[\alpha,\beta]$ and the 
appropriate adjacent lens.
\item $\Omega_L^\pm$:  The area in the upper half-plane ($+$) 
or lower half-plane ($-$) inside the contour $L$ but outside the lenses.
\item $\Omega_\text{out}^\pm$:  The part of the upper half-plane ($+$) or 
lower half-plane ($-$) outside the contour $L$.  
\end{description}
We also define
\eq
\Omega_\text{lens}^\pm:=\left(\bigcup_{j=1}^g\Omega_j^\pm\right)\cup\Omega_\text{main}^\pm, \quad \Omega_\text{lens}:=\Omega_\text{lens}^+\cup\Omega_\text{lens}^-, \quad \Omega_L:=\Omega_L^+\cup\Omega_L^-, \quad \Omega_\text{out}:=\Omega_\text{out}^+\cup\Omega_\text{out}^-. 
\endeq

\begin{figure}
\begin{center}\resizebox{1\textwidth}{!}{\input{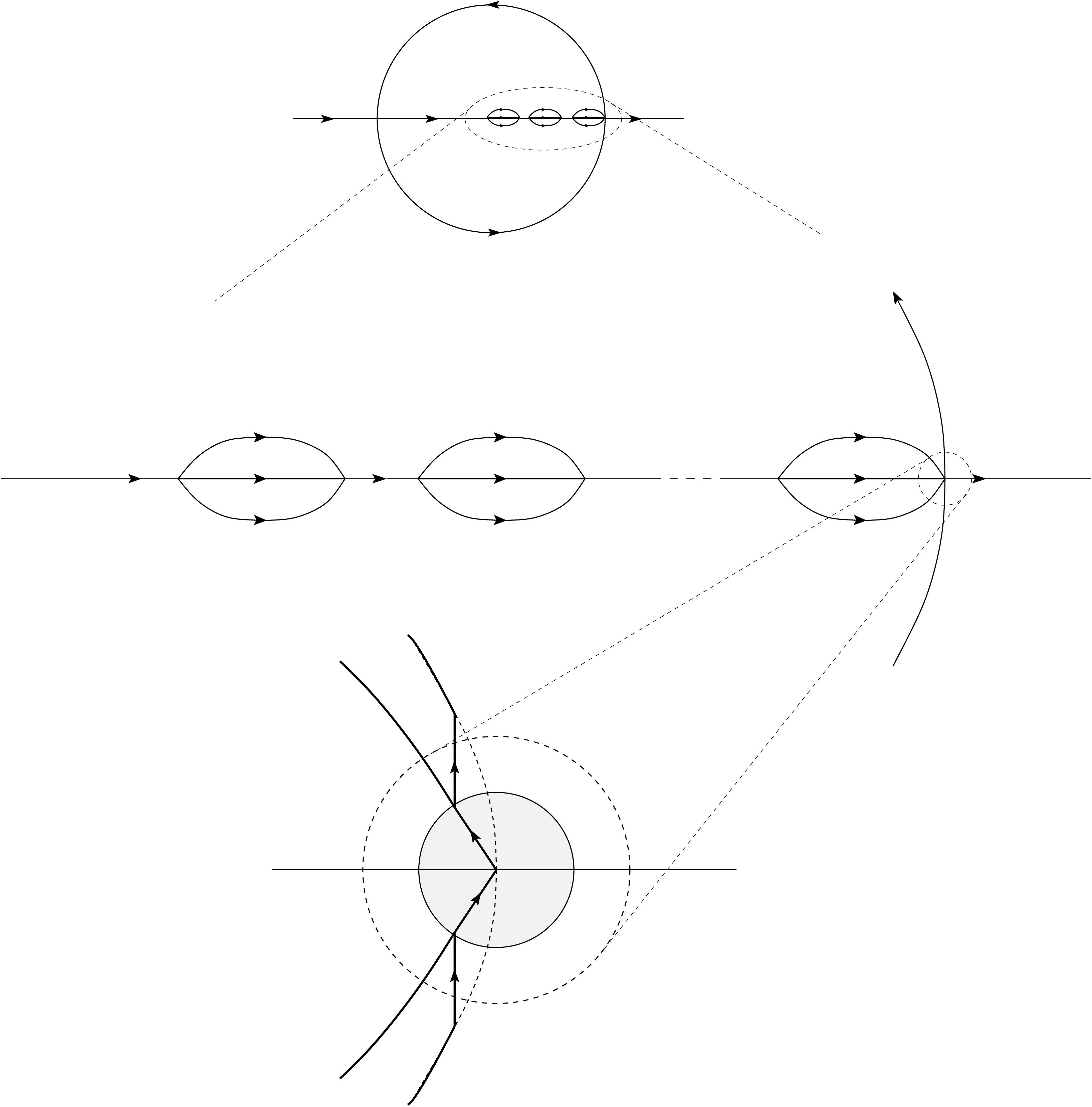_t}}
\caption{The contours for the critical case.  When we construct the 
local parametrix near $\beta$ we will deform the contour $L$ inside 
$\mathbb{D}_c$ such that $\partial\Omega_\text{main}^\pm$ and $L$ overlap.}\label{fig_crit}
\end{center}
\end{figure}

With these definitions of regions and contours, we define ${\bf W}(z)$ in each region by
\eq
\label{W}
{\bf W}(z):= {\bf \Lambda}{\bf Y}(z) \bpm e^{-\frac{n}{2}V} & 0 & 0 \\0 & e^{\frac{n}{2}V} & 0 \\0 & 0 & e^{\frac{n}{2}(V-2a z)} \epm {\bf J}(z) \bpm e^{-\frac{n}{2}{\cal P}_1} & 0 & 0 \\0 & e^{\frac{n}{2}{\cal P}_1} & 0 \\0 & 0 & e^{\frac{n}{2}({\cal P}_1-2{\cal P}_3)} \epm e^{-\frac{n}{2}\kappa\log(z-\beta)},
\endeq
where
\eq
{\bf \Lambda}:= \bpm e^{\frac{n}{2}\ell_1} & 0 & 0 \\0 & e^{-\frac{n}{2}\ell_1} & 0 \\0 & 0 & e^{\frac{n}{2}(2\ell_3 - \ell_1)} \epm
\endeq
and
\eq
\begin{array}{lcccr} {\bf J}(z):= \mbox{\vspace{2in}}  & \left\{ \quad \begin{array}{c} \\ {\bf I}, \\ {} \end{array} \quad \right.  & \bpm 1 & 0 & 0 \\ 0 & 1 & -1 \\ 0 & 0 & 1 \epm, & \left. \bpm 1 & 0 & 0 \\ \mp 1 & 1 & -1 \\ 0 & 0 & 1 \epm  \right\}. &  \\ & \hspace{.2in}z\in\Omega_\text{out} & z\in\Omega_L & z\in\Omega_\text{lens}^\pm &  \end{array} 
\endeq
Then ${\bf W}(z)$ satisfies the following jump conditions (see Figure 
\ref{fig_crit} for the orientation of the contours):
\begin{equation}\label{Ajumpcrit}
{\bf W}_+(z)={\bf W}_-(z) \times \left\{\begin{array}{ll}\vspace{0.1cm}
\bpm 1&0&0\\0&1&-e^{
n{\cal P}_3(z)}\\0&0&1\epm,& z\in L=\partial\Omega_\text{out}\cap\partial\Omega_L,
\\\vspace{0.1cm}\bpm 1 & 0 & 0 \\ e^{-n{\cal P}_1(z)} & 1 & 0 \\ 0 & 0 & 1 \epm,&z\in \partial\Omega_\text{lens}\cap\partial\Omega_L,
\\\vspace{0.1cm}\bpm 0&(-1)^{r}&0\\(-1)^{r+1}&0&0\\0&0&1\epm,&z\in\left(\bigcup_{j=1}^g[\alpha_j,\beta_j]\right)\cup[\alpha,\beta],
\\\vspace{0.1cm}\bpm e^{- i n\sigma(z)}(-1)^{r}&(-1)^{r}e^{n{\rm Re}{\cal P}_1(z)}&0\\0&(-1)^{r}e^{ i n\sigma(z)}&0\\0&0&1\epm,&z\in
\partial\Omega_L^+\cap\partial\Omega_L^-,
\\\vspace{0.1cm}\bpm 1 & e^{n{\cal P}_1(z)} & e^{n{\cal P}_2(z)} \\ 0 & 1 & 0 \\ 0 & 0 & 1 \epm,& z\in \partial\Omega_\text{out}^+\cap\partial\Omega_\text{out}^-,
\end{array}\right.
\end{equation}
and the boundary condition
\begin{equation}\label{Ajumpcrit2}
{\bf W}(z)={\bf I}+{\cal O}\left(\displaystyle\frac{1}{z}\right) ,\quad z\to\infty.
\end{equation}
Here we have defined $\sigma(x)$ by $\sigma(x):=\frac{1}{1-\frac{\kappa}{2}}\int_{-\infty}^x \rho_{\mbox{\scriptsize min}}(s;\kappa)ds$ which is some constant on each gap.

We will show below that the above Riemann-Hilbert problem is exponentially 
close to a simpler one away from the turning points such 
as $\beta$.
To be more precise, let us define a shrinking disk centered at $\beta$ by
\begin{equation}\label{Dc-def}
{\mathbb D}_c:=\left\{z\,\big|\,|z-\beta|<n^{-1/2}\right\}.
\end{equation}
Here we have chosen the diameter of the disk ${\mathbb D}_c$ so that (refer to \eqref{hz1})
\begin{equation}
\label{h-behavior}
n\left(h(z)-h(\beta)-\frac{\tau}{n}\zeta \right)=n\,{\cal O}\left((z-\beta)^2\right)
\end{equation}
 is uniformly bounded on ${\mathbb D}_c$ as $n\rightarrow\infty$.
From the expansion \eqref{zeta-ito-zmbeta} and \eqref{Dc-def} we note
\eq
\label{zeta-N-exp}
\zeta = \mathcal{O}(n^{1/6}) \text{ on } \partial{\mathbb D}_c.
\endeq
To complete the error analysis we will also need a fixed-size disk 
around $\beta$.  Fix a small $\delta>0$ and define 
${\mathbb D}(\beta,\delta)$ to be the finite disk centered at $\beta$ with 
fixed radius $\delta$.  For $n$ large enough, we have 
${\mathbb D}_c\subset {\mathbb D}(\beta,\delta)$.
We will show in Section \ref{Error-Section} that, for $z$ outside $\mathbb{D}_c$ and 
a finite distance away from the other turning points (the $\alpha_j$'s, 
$\beta_j$'s, and $\alpha$), 
the Riemann-Hilbert problem for ${\bf W}(z)$ is exponentially close to that 
for the outer-parametrix ${\bf \Psi}$ as $n\to \infty$ with the proper choices of $\ell_1$ and $\ell_3$.  The necessary data on the effective potentials $\mathcal{P}_1(z)$, $\mathcal{P}_2(z)$, and $\mathcal{P}_3(z)$ is contained in Lemma \ref{crit-lem-1}.  
Thus we propose the following model Riemann-Hilbert problem for the 
{\it outer parametrix}.  
\begin{defn}
In the critical case, the outer parametrix ${\bf \Psi}(z)$ is defined as the solution of the 
Riemann-Hilbert problem
\begin{equation}\label{model_crit}
\left\{\begin{array}{ll}\vspace{0.2cm} {\bf \Psi}_+(z)={\bf \Psi}_-(z)\bpm 0&(-1)^r &0\\(-1)^{r+1}&0&0\\0&0&1\epm,&z\in\left(\bigcup_{j=1}^g[\alpha_j,\beta_j]\right)\cup[\alpha,\beta],
\\\vspace{0.2cm} {\bf \Psi}_+(z)={\bf \Psi}_-(z)\bpm (-1)^re^{-i n\sigma(z)}&0&0\\0&(-1)^r e^{i n\sigma(z)}&0\\0&0&1\epm,&z\in
{\mathbb R}\setminus \left(\left(\bigcup_{j=1}^g[\alpha_j,\beta_j]\right)\cup[\alpha,\beta]\right) ,
\\\vspace{0.2cm}{\bf \Psi}(z)={\cal O}\left((z-\star)^{-1/4}\right), &z\rightarrow\star, \quad\star=\alpha_j, \beta_j,\alpha,\beta.
\\\vspace{0.2cm}{\bf \Psi}(z)= {\bf I}+{\cal O}\left(\frac{1}{z}\right),~& z\to\infty.
\end{array}\right.
\end{equation}
The function $\sigma(z)$ is defined below the equation \eqref{Ajumpcrit2}.
\end{defn}
This Riemann-Hilbert problem is essentially $2\times2$. 
The unique solution is given in \cite{Deift:1999a} Lemma 4.3, or in \cite{Bertola:2009b} Lemma 4.6 in a slightly more generalized setup.  We refer the interested reader to those papers since this is not essential here.  In the subsequent analysis we will only need the following 
information.
\begin{lemma}
\label{H-lemma}
Define 
\eq
\label{S-of-zeta}
{\bf S}(\zeta;n):= \bpm \left(\frac{\zeta}{n^{2/3}}\right)^{-1/4} & 0 & 0 \\ 0 & \left(\frac{\zeta}{n^{2/3}}\right)^{1/4} & 0 \\ 0 & 0 & 1 \epm {\bf U} \bpm (-1)^r & 0 & 0 \\ 0 & 1 & 0 \\ 0 & 0 & (-1)^r \epm; \quad {\bf U}:=\bpm \frac{1}{2} & \frac{i}{2} & 0 \\ -\frac{1}{2} & \frac{i}{2} & 0 \\ 0 & 0 & -2i \epm.
\endeq
Then for $z\in\mathbb{D}_c$, there is a unique holomorphic
$3\times 3$ matrix-valued function ${\bf H}_{(0)}(z;n)$ with determinant 
one such that
\eq
\label{Psi-ito-H}
{\bf \Psi}(z) = {\bf H}_{(0)}(z) {\bf S}(\zeta)
\endeq
as $\zeta\to\infty$.  In addition, ${\bf H}_{(0)}(z;n)$ has a limit 
as $n\to\infty$ and 
\eq
\label{H0-limit}
{\bf H}_{(0)}(z;n) - \lim_{n\to\infty}{\bf H}_{(0)}(z;n) = \mathcal{O}(\kappa).
\endeq
\end{lemma}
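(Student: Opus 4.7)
The plan is to define $\mathbf{H}_{(0)}(z;n) := \mathbf{\Psi}(z)\,\mathbf{S}(\zeta(z;n))^{-1}$ throughout $\mathbb{D}_c$ and then show directly that this formula yields a holomorphic, determinant-one matrix. Uniqueness will then be automatic from the formula, and the convergence rate in \eqref{H0-limit} will follow by tracking the $\kappa$-dependence of each factor.

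The first step is to verify that the product $\mathbf{\Psi}\mathbf{S}^{-1}$ has no jump on the segment $(\alpha,\beta)\cap\mathbb{D}_c$, which is the only jump contour of $\mathbf{\Psi}$ inside $\mathbb{D}_c$. Writing $\mathbf{S}=D\,\mathbf{U}\,\Sigma_r$ with $D=\mathrm{diag}((\zeta/n^{2/3})^{-1/4},(\zeta/n^{2/3})^{1/4},1)$ and $\Sigma_r:=\mathrm{diag}((-1)^r,1,(-1)^r)$, and using the branch-cut relations $\zeta^{\pm1/4}_+=e^{\pm i\pi/2}\zeta^{\pm1/4}_-$ across the negative $\zeta$-axis, one finds $\mathbf{S}_-^{-1}\mathbf{S}_+=\Sigma_r^{-1}\mathbf{U}^{-1}\mathrm{diag}(-i,i,1)\,\mathbf{U}\,\Sigma_r$. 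A direct matrix computation using the explicit $\mathbf{U}$ from \eqref{S-of-zeta} gives $\mathbf{U}^{-1}\mathrm{diag}(-i,i,1)\,\mathbf{U}=J_0$, where $J_0$ is the $r=0$ specialization of the jump matrix of $\mathbf{\Psi}$ on the band; conjugating by $\Sigma_r$ then produces the correct $(-1)^r$ and $(-1)^{r+1}$ signs to reproduce the full jump matrix in \eqref{model_crit}. Hence $\mathbf{H}_{(0)}$ extends holomorphically across the band. Near $z=\beta$, $\mathbf{\Psi}$ has at most a $(z-\beta)^{-1/4}$ singularity (by the endpoint condition in \eqref{model_crit}), while $\mathbf{S}^{-1}$ grows at most like $|\zeta|^{1/4}\asymp|z-\beta|^{1/4}$, so the two behaviors cancel and $\mathbf{H}_{(0)}$ extends holomorphically through $\beta$ as well.

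The determinant claim reduces to $\det\mathbf{\Psi}\equiv 1$ and $\det\mathbf{S}\equiv 1$. The former follows because every jump matrix in \eqref{model_crit} has unit determinant and $\mathbf{\Psi}(\infty)=\mathbf{I}$; the latter from $\det D=1$, $\det\Sigma_r=1$, and the short computation $\det\mathbf{U}=(-2i)\bigl(\tfrac12\cdot\tfrac{i}{2}-\tfrac{i}{2}\cdot(-\tfrac12)\bigr)=1$. Uniqueness of $\mathbf{H}_{(0)}$ is immediate from its definition $\mathbf{H}_{(0)}=\mathbf{\Psi}\mathbf{S}^{-1}$.

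Finally, for the convergence as $n\to\infty$: the matrix $\mathbf{\Psi}$ depends on $n$ only through $\kappa=r/n$ (via the endpoints $\alpha_j(\kappa),\beta_j(\kappa),\alpha(\kappa),\beta(\kappa)$ and the phase $\sigma(z;\kappa)$), and $\mathbf{S}(\zeta;n)$ depends on $n$ only through $\zeta/n^{2/3}=c_1(\kappa)(z-\beta(\kappa))(1+\mathcal{O}(z-\beta(\kappa)))$, which again depends on $n$ only through $\kappa$. Kuijlaars' Theorem 1.3 guarantees that all of these quantities are real-analytic in $\kappa$ at $\kappa=0$, so $\mathbf{H}_{(0)}=\mathbf{\Psi}\mathbf{S}^{-1}$ is real-analytic in $\kappa$ as well; a first-order Taylor expansion then gives $\mathbf{H}_{(0)}(z;n)-\lim_{n\to\infty}\mathbf{H}_{(0)}(z;n)=\mathcal{O}(\kappa)=\mathcal{O}(n^{\gamma-1})$. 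The only delicate step I anticipate is the sign bookkeeping in the jump-matching calculation; once the $(-1)^r$ factors and the chosen branch of $\zeta^{1/4}$ are aligned, the remainder is routine.
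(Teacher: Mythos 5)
Your overall strategy is the same as the paper's: set $\mathbf{H}_{(0)} := \mathbf{\Psi}\,\mathbf{S}^{-1}$, check that $\mathbf{S}$ reproduces the jump of $\mathbf{\Psi}$ inside $\mathbb{D}_c$ (your explicit verification of $\mathbf{U}^{-1}\mathrm{diag}(-i,i,1)\,\mathbf{U}=\bigl(\begin{smallmatrix}0&1&0\\-1&0&0\\0&0&1\end{smallmatrix}\bigr)$ and the conjugation by $\Sigma_r$ is correct and a nice level of detail), argue the product is holomorphic at $\beta$, compute the determinant, and get the $\mathcal{O}(\kappa)$ rate from the analyticity in $\kappa$ guaranteed by Kuijlaars.

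However, the step where you handle the potential singularity at $z=\beta$ contains a genuine error. You assert that $\mathbf{S}^{-1}$ ``grows at most like $|\zeta|^{1/4}\asymp|z-\beta|^{1/4}$,'' i.e.\ that it \emph{decays}, so that it cancels the $(z-\beta)^{-1/4}$ growth of $\mathbf{\Psi}$. That is wrong: $\mathbf{S}$ contains both $\zeta^{-1/4}$ and $\zeta^{1/4}$ on the diagonal, and so does $\mathbf{S}^{-1}$; consequently $\mathbf{S}^{-1}$ also has entries blowing up like $|\zeta|^{-1/4}\asymp |z-\beta|^{-1/4}$. Using only the crude entrywise bound $\mathbf{\Psi}=\mathcal{O}((z-\beta)^{-1/4})$ from \eqref{model_crit}, the product $\mathbf{\Psi}\mathbf{S}^{-1}$ can be as large as $\mathcal{O}((z-\beta)^{-1/2})$: the quarter-root singularities of the two factors \emph{add}, they do not cancel. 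The correct way to close the argument---and what the paper does---is to observe that, since you have already shown the jumps match, the singularity of $\mathbf{\Psi}\mathbf{S}^{-1}$ at $\beta$ is \emph{isolated}, and an isolated singularity with growth no worse than $|z-\beta|^{-1/2}$ (strictly slower than $|z-\beta|^{-1}$) is removable. That removability step is the missing idea in your write-up; without it the claim ``so the two behaviors cancel'' is unsupported. If you prefer to avoid the removability argument you would instead need the detailed matrix structure of $\mathbf{\Psi}$ near the endpoint (not just the entrywise bound) to see that the $(z-\beta)^{\pm1/4}$ factors align column-by-column with those of $\mathbf{S}$, but then you should say so and cite the relevant formula rather than rely on the $\mathcal{O}((z-\beta)^{-1/4})$ endpoint condition alone.
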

\begin{proof}
A direct check shows that 
${\bf S}(\zeta;n)$ has the same jumps as ${\bf \Psi}(z)$ 
in a neighborhood of $z=\beta$. 
Thus the ratio ${\bf \Psi S}^{-1}$ has no jump discontinuities inside 
$\mathbb{D}_c$ and hence may have {\em at most} an isolated singularity at $z=\beta$. Furthermore, this product has at worst a square-root 
singularity at $\beta$ (coming from the product of the quarter-root 
singularities in ${\bf \Psi}$ and ${\bf S}$).  In the absence of a 
branch cut, this means ${\bf H}_{(0)}$ is holomorphic.  Finally, 
\eqref{H0-limit} follows from the definition of $\zeta(z)$ in \eqref{zeta} 
and the dependence of $\mathcal{P}_1$ on $\kappa$.  
\end{proof}

\section{The local parametrix near \texorpdfstring{$\bs{z=\beta}$}{sdsa}}

We begin this section by expressing the Riemann-Hilbert problem satisfied by ${\bf W}$ {\em inside} ${\mathbb D}_c$ in terms of the local coordinate $\zeta$.
Zooming in on ${\mathbb D}_c$, the contours are shown in Figure \ref{fig_critloc}.  
There we collapse the global contours $L$ and a part of $\partial\Omega_{\text{main}}$ into $\Gamma_2$ 
and $\Gamma_4$.  The regions II and III are parts of the region $\Omega_{\text{main}}$.

Using the identities below,
\begin{align}
{\cal P}_2(z)=\frac12{\cal P}_1(z)+h(z)+\frac{3\kappa}{2}\log\zeta,
\\{\cal P}_3(z)=
{-}\frac12{\cal P}_1(z)
{+}h(z)
{+}\frac{3\kappa}{2}\log\zeta,
\end{align}
we see that ${\bf W}(z)$ satisfies the following Riemann-Hilbert problem inside the disk ${\mathbb D}_c$:
\begin{equation}\label{Ajumplocal}
{\bf W}_+(z)={\bf W}_-(z)\left\{\begin{array}{ll}\vspace{0.2cm}
\bpm 1 & e^{-\frac43\zeta^{3/2}} & \zeta^{\frac{3r}{2}}e^{-\frac23\zeta^{3/2}+nh(z)}\\ 0 & 1 & 0 \\ 0 & 0 & 1 \epm,& z\in \Gamma_1,
\\\vspace{0.2cm} 
\bpm 1&0&0\\-e^{\frac43\zeta^{3/2}}&1&-\zeta^{\frac{3r}{2}}e^{\frac23\zeta^{3/2}+nh(z)}\\0&0&1\epm,& z\in \Gamma_2,
\\\vspace{0.2cm} 
\bpm 0&(-1)^r&0\\(-1)^{r+1}&0&0\\0&0&1\epm,&z\in\Gamma_3,
\\\vspace{0.2cm}
\bpm 1&0&0\\e^{\frac43\zeta^{3/2}}&1&-\zeta^{\frac{3r}{2}}e^{\frac23\zeta^{3/2}+nh(z)}\\0&0&1\epm,& z\in\Gamma_4.
\end{array}\right.
\end{equation}
We construct below a local parametrix that solves a Riemann-Hilbert 
problem similar to the above.

\subsection{\texorpdfstring{$\bs{r}$}{bsr}-Airy parametrix}


\begin{figure}
\setlength{\unitlength}{1.7pt}
\begin{center}
\begin{picture}(220,100)(-150,-50)
\put(0.8,0){\circle*{2}}
\put(3,1.5){$-\tau$}
\blue{
\thicklines
\put(-20,-2){$\mathcal{C}_1$}
\qbezier(-29,42)(2,0)(-29,-42)
\put(-28.5,41.6){\vector(-1,2){1}}
\put(12,15){$\mathcal{C}_2$}
\qbezier(-21,42)(3,3)(50,4)
\put(-20.5,41.6){\vector(-1,2){1}}
\put(12,-17){$\mathcal{C}_6$}
\qbezier(-21,-42)(3,-3)(50,-4)
\put(-20.5,-41.6){\vector(-1,-2){1}}
\put(53,-2){$\mathcal{C}_4$}
\put(0.8,0){\vector(1,0){50}}
\put(-27,-48){$\mathcal{C}_5$}
\put(-25,-43){\line(3,5){25.4}}
\put(-24.1,-41.6){\vector(-1,-2){1}}
\put(-28,45){$\mathcal{C}_3$}
\put(-25,43){\line(3,-5){25.4}}
\put(-23.8,41.1){\vector(-1,2){1}}
}
\put(-94.8,0){\circle*{2}}
\put(-85,15){$\rm{\bf I}$}
\put(-125,15){$\rm{\bf II}$}
\put(-126,-16){$\rm{\bf III}$}
\put(-86,-16){$\rm{\bf IV}$}
\thicklines
\put(-278,-2){$\Gamma_1$}
\put(-135,0){\line(1,0){80}}
\put(-75,0){\vector(1,0){1}}
\put(-123,45){$\Gamma_2$}
\put(-120,43){\line(3,-5){25.4}}
\put(-107,21.8){\vector(-1,2){1}}
\put(-142,-2){$\Gamma_3$}
\put(-115,0){\vector(1,0){1}}
\put(-123,-49){$\Gamma_4$}
\put(-120,-43){\line(3,5){25.4}}
\put(-107,-21.8){\vector(1,2){1}}
\end{picture}
\end{center}
\caption{Contours where ${\cal A}_r$ has jumps.  The location of the center can be in any finite domain.}\label{fig_critloc}
\vspace{-0.2cm}
\caption{Contours to define the generalized Airy functions. The center is located at $-\tau$.}\label{fig_airycontour}
\end{figure}

The $r$th derivative of the standard Airy function admits the contour integral representation
\begin{equation}
\frac{d^r}{d\zeta^r}{\rm Ai}(\zeta)=\frac{1}{2i\pi}\int_{{\cal C}_1}t^r e^{\zeta t-t^3/3}dt\  , \quad r=0,1,2,...,
\end{equation}
where the contour ${\cal C}_1$ is shown in Figure \ref{fig_airycontour}.
Extending the standard Airy function, we will need the following generalized Airy functions.

\begin{defn}\label{155}
Let us define the following generalized Airy functions corresponding to each contour in Figure \ref{fig_airycontour}.
\begin{equation}\label{AiCdef}
{\rm Ai}_{{\cal C}_m}^{(r)}(\zeta;\tau):=\frac{1}{2i\pi}\int_{{\cal C}_m}(t+\tau)^r e^{\zeta t-t^3/3}dt ,
\end{equation}
where $m=1,...,6$ indicates the contour depicted at Figure \ref{fig_airycontour}, and $r$ is a non-negative integer.
\end{defn}


Using the generalized Airy functions, we shall construct a
 matrix ${\cal A}_r(\zeta)$ satisfying the jump condition (see Figure \ref{fig_critloc}) below.
\begin{equation}\label{ARHP}
\begin{array}{c}
{}\vspace{0.2cm}\\
{\cal A}_{r}(\zeta)_+={\cal A}_{r}(\zeta)_-\times
\end{array}
\hspace{-0.2cm}
\begin{array}{cccc}
\zeta\in\Gamma_1&\zeta\in\Gamma_2&\zeta\in\Gamma_3&\zeta\in\Gamma_4 \vspace{0.2cm}\\
\left\{\bpm1&1&1\\0&1&0\\0&0&1\epm,\right. &
\bpm 1&0&0\\-1&1&-1\\0&0&1\epm, &
\bpm 0&1&0\\-1&0&0\\0&0&1\epm, &
\left.\bpm 1&0&0\\1&1&-1\\0&0&1\epm\right\}.
\end{array}
\end{equation}
\begin{thm}
For a positive integer $r$, the following definition satisfies the jump condition in \eqref{ARHP}.
\begin{equation}
{\cal A}_r(\zeta):= \bpm v^{(r)}_1(\zeta) & v^{(r)}_2(\zeta)& v^{(r)}_3(\zeta)
\\\partial_\zeta v^{(r)}_1(\zeta) &\partial_\zeta v^{(r)}_2(\zeta)&\partial_\zeta v^{(r)}_3(\zeta)
 \\v^{(r-1)}_1(\zeta) &v^{(r-1)}_2(\zeta)&v^{(r-1)}_3(\zeta)\epm
\left(\begin{array}{ccc}1 & 0 & 0 \\\bigstar & 1 & 0 \\0 & 0 & 1\end{array}\right),\quad \bigstar:=\begin{cases}-1\quad\mbox{for II},\\1\quad\mbox{for III},\\0\quad\mbox{for I and IV},\end{cases}
\label{47}
\end{equation}
where
\begin{equation}
v_1^{(r)}(\zeta)={\rm Ai}^{(r)}_{{\cal C}_1}(\zeta),\quad v_2^{(r)}(\zeta)=\begin{cases}{\rm Ai}^{(r)}_{{\cal C}_2}(\zeta), ~~\zeta\in\mbox{I and II},\vspace{0.1cm}
\\{\rm Ai}^{(r)}_{{\cal C}_6}(\zeta),~~\zeta\in \mbox{III and IV},\end{cases}\quad
v^{(r)}_3(\zeta)=\begin{cases} {\rm Ai}^{(r)}_{{\cal C}_3}(\zeta),~~\zeta\in\mbox{I} \\ {\rm Ai}^{(r)}_{{\cal C}_4}(\zeta) ,~~\zeta\in \mbox{II and III} ,\\
{\rm Ai}^{(r)}_{{\cal C}_5}(\zeta),~~\zeta\in\mbox{IV}.
\\
\end{cases}
\label{48}
\end{equation}
\end{thm}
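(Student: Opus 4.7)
The plan is to verify the jump condition \eqref{ARHP} by direct computation, using a small number of linear identities among the generalized Airy integrals $\mathrm{Ai}^{(r)}_{\mathcal{C}_m}$ that come from contour deformation. Write $\mathcal{A}_r(\zeta) = M(\zeta)\,T(\zeta)$, where $M$ is the $3\times 3$ matrix of Airy functions and their derivatives appearing on the left in \eqref{47} and $T = I + \bigstar E_{21}$ is piecewise constant across the four regions I--IV. Since $T$ is constant within each region, the jump of $\mathcal{A}_r$ across each $\Gamma_i$ has the form
\begin{equation*}
\mathcal{A}_{r,-}^{-1}\mathcal{A}_{r,+} \;=\; T_-^{-1}\,M_-^{-1}M_+\,T_+,
\end{equation*}
so the task reduces to (i) computing the jump of $M$, which comes entirely from the piecewise choice of the defining contour for the second and third columns in \eqref{48}, and (ii) conjugating by the appropriate $T$-factors.

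The key analytic input is a set of linear identities of the form $\mathrm{Ai}^{(r)}_{\mathcal{C}_m} = \pm\mathrm{Ai}^{(r)}_{\mathcal{C}_{m'}} \pm \mathrm{Ai}^{(r)}_{\mathcal{C}_{m''}}$. These follow from Cauchy's theorem, since the integrand $(t+\tau)^r e^{\zeta t - t^3/3}$ is entire in $t$ and decays super-exponentially in the three sectors $|\arg t|<\pi/6$ and $|\arg t \mp 2\pi/3|<\pi/6$. Concretely, in the orientations of Figure~\ref{fig_airycontour},
\begin{equation*}
\mathrm{Ai}^{(r)}_{\mathcal{C}_1} = \mathrm{Ai}^{(r)}_{\mathcal{C}_3} - \mathrm{Ai}^{(r)}_{\mathcal{C}_5}, \qquad
\mathrm{Ai}^{(r)}_{\mathcal{C}_2} = \mathrm{Ai}^{(r)}_{\mathcal{C}_4} - \mathrm{Ai}^{(r)}_{\mathcal{C}_3}, \qquad
\mathrm{Ai}^{(r)}_{\mathcal{C}_6} = \mathrm{Ai}^{(r)}_{\mathcal{C}_5} - \mathrm{Ai}^{(r)}_{\mathcal{C}_4}
\end{equation*}
(the specific signs will be fixed once a convention for each $\mathcal{C}_m$ is pinned down; these are the only three relations needed, the remaining being their linear combinations). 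Crucially, the three rows of $M$ are obtained from the integrand by the same $t$-space operations (multiplying by $t$ for the second row, and by $(t+\tau)^{-1}$ for the third), which commute with the contour deformations; hence the column-identities above extend verbatim to identities between the three-vectors forming the columns of $M$, and it suffices to check the jump of $M$ at the level of the top row.

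Next I would verify each jump in turn. Across $\Gamma_3$ (the analogue of a band), $v_1$ and $v_3$ are continuous, while $v_2$ switches from $\mathrm{Ai}^{(r)}_{\mathcal{C}_2}$ in region II to $\mathrm{Ai}^{(r)}_{\mathcal{C}_6}$ in region III; the contour identity $\mathrm{Ai}^{(r)}_{\mathcal{C}_2} - \mathrm{Ai}^{(r)}_{\mathcal{C}_6} = \pm\mathrm{Ai}^{(r)}_{\mathcal{C}_1}$ then produces a shear of the second column by column one, and conjugating by $T_{II}$, $T_{III}$ (both of which contribute a $\pm E_{21}$) converts this shear into the off-diagonal permutation in \eqref{ARHP}. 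Across $\Gamma_2$ the jump is driven by the switch $v_3\colon \mathcal{C}_3\to\mathcal{C}_4$, which contributes to columns 1 and 3 through the identity for $\mathrm{Ai}^{(r)}_{\mathcal{C}_2}$; combined with $T_{II} = I - E_{21}$ this gives precisely the $(2,1),(2,3)$ entries of $-1$. The jumps across $\Gamma_4$ and $\Gamma_1$ are verified analogously using the remaining identities (with $T_I=T_{IV}=I$ in the $\Gamma_1$ case, which explains the absence of off-diagonal $(2,*)$ entries in the $\Gamma_1$ jump and the presence of the $(1,2),(1,3)$ ones).

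The main obstacle is not conceptual but combinatorial: keeping track of orientations. Each of the six contours $\mathcal{C}_m$ has a specific orientation, each of the four $\Gamma_i$ has a specific orientation (which determines the $\pm$ side), and each of the four regions I--IV contributes a specific $\bigstar\in\{-1,0,+1\}$. A single inconsistency in any of these choices flips a sign in one entry of one jump and destroys the match with \eqref{ARHP}. I would therefore fix an unambiguous convention at the outset -- orienting each $\mathcal{C}_m$ as drawn in Figure~\ref{fig_airycontour} and identifying $\pm$-sides of each $\Gamma_i$ from Figure~\ref{fig_critloc} -- and then carry out each of the four jump verifications as a short, explicit $3\times 3$ matrix computation of the form $M_+ = M_- \,T_-\,J\,T_+^{-1}$.
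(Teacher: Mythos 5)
Your approach is correct and genuinely different from the paper's. The paper proceeds constructively: it defines $v_1^{(r)}$, $v_2^{(r)}$, $v_3^{(r)}$ by Sokhotskyi--Plemelj (Cauchy transforms) in \eqref{148}, which makes the jump condition automatic, and then spends most of the proof showing that these Cauchy-transform expressions coincide with the advertised contour-integral forms in \eqref{48} by semicircle closure and residue computations. You instead verify the jump of the advertised formula directly, decomposing $\mathcal{A}_r = M\,T$ and reducing everything to (i) contour-deformation identities among the $\mathrm{Ai}^{(r)}_{\mathcal{C}_m}$, and (ii) the observation that the second and third rows of $M$ arise from the first by $t$-space multiplications commuting with deformation, so column identities propagate verbatim. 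That observation is the right structural insight and does genuine work. Your route is arguably cleaner and more self-contained; the paper's route is more constructive (it explains how you would discover \eqref{48} in the first place), and it produces the Cauchy-transform representations as a byproduct, which it never actually needs again. I spot-checked all four jumps by your method and they come out exactly as in \eqref{ARHP}, so the mechanism is sound.

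Two small caveats. First, one of your three displayed identities has the wrong sign with the orientations drawn in Figure~\ref{fig_airycontour}: with $\mathcal{C}_2$ running from $+\infty$ to $\infty e^{2\pi i/3}$, $\mathcal{C}_3$ from $-\tau$ to $\infty e^{2\pi i/3}$, and $\mathcal{C}_4$ from $-\tau$ to $+\infty$, the correct relation is $\mathrm{Ai}^{(r)}_{\mathcal{C}_2} = \mathrm{Ai}^{(r)}_{\mathcal{C}_3} - \mathrm{Ai}^{(r)}_{\mathcal{C}_4}$, not $\mathrm{Ai}^{(r)}_{\mathcal{C}_4} - \mathrm{Ai}^{(r)}_{\mathcal{C}_3}$ (this is consistent with the paper's own identity $\mathrm{Ai}^{(r)}_{\mathcal{C}_1} - \mathrm{Ai}^{(r)}_{\mathcal{C}_2} + \mathrm{Ai}^{(r)}_{\mathcal{C}_6} = 0$). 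You did flag that signs need pinning down, so this is a notational loose end, not a gap. Second, in your $\Gamma_2$ sketch you say the switch $v_3\colon \mathcal{C}_3 \to \mathcal{C}_4$ ``contributes to columns 1 and 3,'' but since $\mathrm{Ai}^{(r)}_{\mathcal{C}_4} = \mathrm{Ai}^{(r)}_{\mathcal{C}_3} - \mathrm{Ai}^{(r)}_{\mathcal{C}_2}$, the shear is of column $3$ by column $2$; this is precisely what, after conjugating by $T_{II}=I-E_{21}$, yields the $(2,1)$ and $(2,3)$ entries equal to $-1$. These slips are exactly the kind of book-keeping error you warned yourself about, so they do not undermine the proposal, but you should correct them before presenting the computation in full.
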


\begin{proof}
By applying the Sokhotskyi-Plemelj formula, one can satisfy the jump condition \eqref{ARHP} by defining
\begin{equation}\label{148}
\begin{split}
v^{(r)}_1(\zeta)&:={\rm Ai}^{(r)}_{{\cal C}_1}(\zeta;\tau),
\\v^{(r)}_2(\zeta)&:=\displaystyle \frac{e^{-\tau\zeta}}{2i\pi}\int_{{\mathbb R}}\frac{v_1^{(r)}(s)e^{\tau s}}{s-\zeta}ds-{\rm Ai}_{{\cal C}_4}^{(r)}(\zeta;\tau),
\\v^{(r)}_3(\zeta)&:=\displaystyle\frac{e^{-\tau\zeta}}{2i\pi}\int_{{\cal C}_6} \frac{v^{(r)}_2(t) e^{\tau t} dt}{t-\zeta}-\displaystyle\frac{e^{-\tau\zeta}}{2i\pi}\int_{{\cal C}_2} \frac{v_2^{(r)}(t) e^{\tau t} dt}{t-\zeta}.
\end{split}
\end{equation}
Here we note that ${\rm Ai}_{{\cal C}_4}^{(r)}(\zeta;\tau)$ is holomorphic (therefore having no jump).  
The next step is to verify that the definitions in \eqref{148} are equivalent to the advocated form in \eqref{47} and \eqref{48}.

First let us consider $v^{(r)}_2(\zeta)$.
\begin{align}
v^{(r)}_2(\zeta)&=\frac{e^{-\tau\zeta}}{2i\pi}\int_{\mathbb R}\frac{ds}{s-\zeta}\frac{1}{2i\pi}\left(\int_{-\tau-i\infty}^{-\tau}\!\!\!\!\!\!\!\!\!\! dt+\int_{-\tau}^{i\infty-\tau}\!\!\!\!\!\!dt\right) (t+\tau)^r e^{s(t+\tau)-t^3/3}-{\rm Ai}_{{\cal C}_4}^{(r)}(\zeta;\tau)
\\&=\frac{e^{-\tau\zeta}}{(2i\pi)^2}\left(\int_{\mbox{\bf sc}-}\!\!\!\!ds\int_{-\tau-i\infty}^{-\tau}\!\!\!\!\!\!\!\!\!\! dt+\int_{\mbox{\bf  sc}\scriptstyle{\bf +}}\!\!\!\!\!\!ds\int_{-\tau}^{i\infty-\tau}\!\!\!\!\!\!dt\right)\frac{(t+\tau)^r e^{s(t+\tau)-t^3/3}}{s-\zeta}
+\frac{1}{2i\pi}\int_\infty^{-\tau}\!\!(t+\tau)^re^{\zeta t-t^3/3}dt
\\&=\begin{cases}\displaystyle\frac{1}{2i\pi}\int_{{\cal C}_2}(t+\tau)^r e^{\zeta t-t^3/3}dt={\rm Ai}^{(r)}_{{\cal C}_2}(\zeta;\tau), ~~&\zeta\in{\mathbb H}^+,\vspace{0.1cm}
\\\displaystyle\frac{1}{2i\pi}\int_{{\cal C}_6}(t+\tau)^r e^{\zeta t-t^3/3}dt={\rm Ai}^{(r)}_{{\cal C}_6}(\zeta;\tau), &\zeta\in{\mathbb H}^-,\end{cases}\label{v2}
\end{align}
where, at the second equality, {\bf sc}- and {\bf sc}+ are respectively the completions of the contour ${\mathbb R}$ by adding the infinite half-circle through the lower ($-$) half-plane and the upper ($+$) half-plane; {\bf sc} stands for ``semi-circle."

Using this result, $v^{(r)}_3(\zeta)$ becomes
\begin{equation}
\begin{split}
v^{(r)}_3(\zeta)&=\frac{e^{-\tau\zeta}}{2i\pi}\left(\int_{{\cal C}_6} \frac{{\rm Ai}^{(r)}_{{\cal C}_6}(t;\tau)e^{\tau t} dt}{t-\zeta}-\int_{{\cal C}_2} \frac{{\rm Ai}^{(r)}_{{\cal C}_2}(t;\tau)e^{\tau t} dt}{t-\zeta}\right)
\\&=
\frac{e^{-\tau\zeta}}{2i\pi}\left(\int_{{\cal C}_6} \frac{e^{\tau t}\left({\rm Ai}^{(r)}_{{\cal C}_5}(t;\tau)-{\rm Ai}^{(r)}_{{\cal C}_4}(t;\tau)\right)}{t-\zeta}dt+\int_{{\cal C}_2} \frac{e^{\tau t}\left({\rm Ai}^{(r)}_{{\cal C}_4}(t;\tau)-{\rm Ai}^{(r)}_{{\cal C}_3}(t;\tau)\right)}{t-\zeta}dt\right)
\\&=\frac{e^{-\tau\zeta}}{2i\pi}\left(\int_{{\cal C}_1} \frac{{\rm Ai}^{(r)}_{{\cal C}_4}(t;\tau) e^{\tau t}dt}{t-\zeta}+\int_{{\cal C}_6} \frac{{\rm Ai}^{(r)}_{{\cal C}_5}(t;\tau) e^{\tau t}dt}{t-\zeta}-\int_{{\cal C}_2} \frac{{\rm Ai}^{(r)}_{{\cal C}_3}(t;\tau) e^{\tau t}dt}{t-\zeta}\right) .
\end{split}
\end{equation}
It can be noticed that, in all three terms in the last expression, one can close the contours ${\cal C}_1,{\cal C}_2$, and ${\cal C}_6$ by adding the corresponding arcs of infinite radius.  For instance, ${\cal C}_1$ can be made a closed contour by adding the arc $\{r e^{\theta}|\theta\in [2\pi/3,4\pi/3]; r\to\infty\}$.   Such addition is allowed because the term $e^{(s+\tau)t}$ in 
\eq
\frac{{\rm Ai}^{(r)}_{{\cal C}_4}(t;\tau) e^{\tau t}}{t-\zeta}=\frac{1}{2\pi i
(t-\zeta)}\int_{{\cal C}_4}(s+\tau)^re^{t(s+\tau)-s^3}ds
\endeq
is suppressed over the infinite arc when $s$ is integrated over ${\cal C}_4$.  The other terms work the same.  Then each integration on the closed contour becomes a residue calculation that leads to the definitions in the theorem. \end{proof}

\subsection{Asymptotic behavior of 
\texorpdfstring{$\bs{\mathcal{A}_r(\zeta)}$}{Ar} for large 
\texorpdfstring{$\bs{\zeta}$}{aweq} and growing 
\texorpdfstring{$\bs{r}$}{qwie}}

The matrix $\mathcal{A}_r(\zeta)$ will be used to build an appropriate 
local parametrix near $z=\beta$.  The resulting global parametrix 
will have a jump across $\partial\mathbb{D}_c$.  To control this 
jump it is necessary to see how $\mathcal{A}_r(\zeta)$ behaves 
as $\zeta\to\infty$ (recall from \eqref{zeta-N-exp} that 
$\zeta=\mathcal{O}(n^{1/6})$ on $\partial\mathbb{D}_c$).  Special 
care is needed because $r$ is also growing with $n$ as $r=\mathcal{O}(n^\gamma)$ for $0\leq\gamma<1/12$.

Let $\mathcal{A}_{(r,ij)}$ 
stand for the (i,j) entry of $\mathcal{A}_r(\zeta)$.  We will show in 
Section \ref{kernel-section} that the leading-order asymptotics of the 
kernel are smooth near $z=\beta$.  
Therefore it is sufficient to restrict ourselves to $\zeta$ in region $I$.  
The proofs of the following two propositions are given in Appendix 
\ref{proofs-appendix}.  

\begin{prop}
\label{Ar-col12}
As $\zeta\to\infty$ with $\zeta\in I$ and for $r=\mathcal{O}(n^\gamma)$, 
$0\leq\gamma<1/12$, the 
entries of the 
first and second columns of $\mathcal{A}_r(\zeta)$ behave as follows:
\begin{align}\label{arbitrary-v1}
A_{r, 11}(\zeta) &= \mbox{Ai}_{\mathcal{C}_1}^{(r)}(\zeta; \tau) = 
\frac{(-1)^r}{2 \sqrt{\pi}}\left( \zeta^{1/2} - \tau\right)^r \zeta^{-1/4} e^{-\frac{2}{3} \zeta^{3/2}} \left[ 1 + \sum_{j=3}^{6M+5} A^{(r, 11)}_j \zeta^{-j/2} + 
\mathcal{O}\left( \frac{r^{4M+4}}{\zeta^{3M+3}} \right) \right], \\
\label{arbitrary-21}
A_{r, 21}(\zeta) &=\partial_\zeta \mbox{Ai}_{\mathcal{C}_1}^{(r)}(\zeta; \tau) = \frac{(-1)^r}{2\sqrt{\pi}} (\zeta^{1/2}-\tau)^r \zeta^{ 1/4} e^{-\frac{2}{3} \zeta^{3/2}} \left[ -1 + \sum_{j=3}^{6M+5} A_j^{(r, 21)} \zeta^{-j/2} + \mathcal{O}\left( \frac{r^{4M+4}}{\zeta^{3M+3}} \right) \right], \\
\begin{split}
\label{arbitrary-31}
A_{r, 31}(\zeta) &=  \mbox{Ai}_{\mathcal{C}_1}^{(r-1)}(\zeta; \tau) = 
\frac{(-1)^{r-1}}{2\sqrt{\pi}} (\zeta^{1/2} - \tau)^r e^{-\frac{2}{3} \zeta^{3/2}} \left[ \frac{1}{\zeta^{3/4}} + \frac{\tau}{\zeta^{5/4}} + \frac{\tau^2}{\zeta^{7/4}} + \sum_{j=3}^{6M+5} A_j^{(r, 31)} \zeta^{-3/4 - j/2} \right. \\ 
&\phantom{=  \mbox{Ai}_{\mathcal{C}_1}^{(r-1)}(\zeta; \tau) = 
\frac{(-1)^{r-1}}{2\sqrt{\pi}} (\zeta^{1/2} - \tau)^r e^{-\frac{2}{3} \zeta^{3/2}} }
\left. + \mathcal{O}\left( \frac{r^{4M+4}}{\zeta^{3M+15/4}} \right) \right],
\end{split}\\
 \label{arbitrary-v2}
\mathcal{A}_{(r,12)}(\zeta) & = \mbox{Ai}_{\mathcal{C}_2}^{(r)}(\zeta; \tau) = -\frac{1}{2\sqrt{\pi} i} \left(\zeta^{1/2} + \tau\right)^r \zeta^{-1/4} e^{\frac{2}{3}\zeta^{3/2} } \left[ 1 + \sum_{j=3}^{6M+5} A_j^{(r, 12)} \zeta^{-j/2} + 
\mathcal{O}\left( \frac{r^{4M+4}}{\zeta^{3M+3}} \right) \right], \\
\label{arbitrary-22}
\mathcal{A}_{(r,22)}(\zeta) & = \partial_\zeta \mbox{Ai}_{\mathcal{C}_2}^{(r)}(\zeta;\tau) = -\frac{1}{2\sqrt{\pi} i} \left(\zeta^{1/2}+\tau\right)^r \zeta^{1/4} e^{\frac{2}{3} \zeta^{3/2} } \left[ 1 + \sum_{j=3}^{6M+5} A_j^{(r, 22)} + 
\mathcal{O}\left( \frac{r^{4M+4}}{\zeta^{3M+3}} \right) \right], \\
\begin{split}
\label{arbitrary-32}
\mathcal{A}_{(r,32)}(\zeta) & = \mbox{Ai}_{\mathcal{C}_2}^{(r-1)}(\zeta; \tau) = - \frac{1}{2\sqrt{\pi} i} \left(\zeta^{1/2}+\tau\right)^{r} e^{\frac{2}{3}\zeta^{3/2} }
\left[ \frac{1}{\zeta^{3/4}} - \frac{\tau}{\zeta^{5/4}} + \frac{\tau^2}{\zeta^{7/4}} + \sum_{j=3}^{6M+5} A_j^{(r, 32)} \zeta^{-3/4-j/2} \right. \\
& \phantom{ = \mbox{Ai}_{\mathcal{C}_2}^{(r-1)}(\zeta; \tau) = - \frac{1}{2\sqrt{\pi} i} \left(\zeta^{1/2}+\tau\right)^{r} e^{\frac{2}{3}\zeta^{3/2} }}
\left. +
 \mathcal{O}\left( \frac{r^{4M+4}}{\zeta^{3M+15/4}} \right) \right], 
\end{split}
\end{align}
where the $A_j^{(r, ik)}$ are polynomial functions of $r$ and 
$\tau$.  
\end{prop}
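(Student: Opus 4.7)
The plan is to perform a steepest-descent analysis on each contour integral in Definition \ref{155}. The phase $f(t) = \zeta t - t^3/3$ has saddles at $t = \pm\sqrt{\zeta}$ with $f''(\pm\sqrt{\zeta}) = \mp 2\sqrt{\zeta}$, and for $\zeta$ in region I the contours $\mathcal C_1$ and $\mathcal C_2$ can be deformed to pass through $t = -\sqrt{\zeta}$ and $t = +\sqrt{\zeta}$, respectively, along their steepest-descent directions. The local substitution $t = \mp\sqrt{\zeta} + iu/\zeta^{1/4}$ reduces the phase to $\mp\tfrac{2}{3}\zeta^{3/2} - u^2 \pm \tfrac{iu^3}{3\zeta^{3/4}}$ and recasts the algebraic prefactor as
\[
(t+\tau)^r = (\mp 1)^r(\sqrt{\zeta}\mp\tau)^r (1+\varepsilon)^r, \qquad \varepsilon = \frac{\pm iu}{\zeta^{1/4}(\sqrt{\zeta}\mp\tau)}=\mathcal{O}\!\left(\frac{u}{\zeta^{3/4}}\right).
\]

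Next, I would expand three series simultaneously: the binomial $(1+\varepsilon)^r = \sum_{k\geq 0}\binom{r}{k}\varepsilon^k$, the cubic phase correction $\exp\!\bigl(\pm iu^3/(3\zeta^{3/4})\bigr) = \sum_{m\geq 0}\tfrac{1}{m!}\bigl(\pm iu^3/(3\zeta^{3/4})\bigr)^m$, and the geometric series $1/(\sqrt{\zeta}\mp\tau)=\sum_{l\geq 0}(\pm\tau)^l\zeta^{-(l+1)/2}$ implicit in $\varepsilon$. Each composite term contributes a monomial $u^{k+3m}$ and a power $\zeta^{-3(k+m)/4-l/2}$ with coefficient polynomial in $r$ and $\tau$. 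Integration against $e^{-u^2}$ annihilates the terms with $k+3m$ odd and produces Gaussian moments for the rest; regrouping by the net power of $\zeta^{-1/2}$ yields the series \eqref{arbitrary-v1}--\eqref{arbitrary-22} with coefficients $A_j^{(r,ik)}$ polynomial in $r$ and $\tau$. The third-row entries $\mbox{Ai}_{\mathcal C_{1,2}}^{(r-1)}$ receive identical treatment with $r$ replaced by $r-1$; factoring $(\mp\sqrt\zeta+\tau)^{r-1}=(\mp\sqrt\zeta+\tau)^r/(\mp\sqrt\zeta+\tau)$ and expanding the reciprocal as a geometric series in $\tau/\sqrt\zeta$ reproduces the leading prefactor $\zeta^{-3/4}\pm\tau\zeta^{-5/4}+\tau^2\zeta^{-7/4}+\cdots$ that appears in \eqref{arbitrary-31} and \eqref{arbitrary-32}.

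The error term $\mathcal{O}(r^{4M+4}/\zeta^{3M+3})$ comes from truncating the binomial at $k=4M+4$: the remainder is $\binom{r}{4M+4}\mathcal{O}(u^{4M+4}\zeta^{-3(M+1)})$, which on Gaussian integration reduces to $\mathcal{O}(r^{4M+4}/\zeta^{3M+3})$. The main technical obstacle will be to make this bound uniform in the large-$r$ regime, because $(1+\varepsilon)^r$ is not a polynomial of bounded degree and $|\varepsilon|$ is small only in a neighborhood of the saddle. A clean way to handle this is to use the pointwise bound $|1+\varepsilon|^r \leq \exp(r|\varepsilon|^2/2) = \exp(\mathcal{O}(ru^2/\zeta^{3/2}))$ along the steepest-descent contour and absorb the exponent into $e^{-u^2}$: the integral stays uniformly bounded so long as $r = o(\zeta^{3/2}) = o(n^{1/4})$, a condition comfortably implied by $r=\mathcal{O}(n^{\gamma})$ with $\gamma<1/12$. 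Once this uniformity is established, term-by-term integration produces the stated expansions and error estimates.
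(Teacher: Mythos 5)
Your proposal takes a genuinely different route from the paper's. The paper folds the prefactor $(t+\tau)^r$ into the exponent, writing the integrand as $e^{rF(T)}$ with $F(T)=\log(T+\mathcal T)+ZT-T^3/3$ in the rescaled variables $T=t/r^{1/3}$, $Z=\zeta/r^{2/3}$, $\mathcal T=\tau/r^{1/3}$, and then performs a classical steepest descent on the true ($r$-dependent) saddle $T_s$ of $F$; the resulting series is organized via the Bell polynomials $S_k$, whose growth in $r$ is controlled by Lemma~\ref{poly-lemma}. You instead expand around the classical Airy saddle $t=\mp\sqrt{\zeta}$ and keep $(t+\tau)^r$ as a prefactor, pulling out $(\mp1)^r(\sqrt\zeta\mp\tau)^r$ and expanding $(1+\varepsilon)^r$ binomially alongside the cubic phase correction and the geometric series in $\tau/\sqrt\zeta$. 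Both reorganizations produce the same asymptotic series and the same error $\mathcal{O}\!\left(r^{4M+4}/\zeta^{3M+3}\right)$, since in either picture the dominant $r$-growth in the error comes from the $r$-dependent prefactor. Your route is more elementary and makes transparent which of the three sources contributes the $r$-growth; the paper's route gives a more unified handle on the remainder, because there the truncation is the Taylor remainder of a single analytic function $e^{rF(T)}$ around its own saddle, cf.\ \eqref{error1}--\eqref{71eq}.

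Two points in your error argument deserve tightening. First, the estimate $|1+\varepsilon|^r \le \exp(r|\varepsilon|^2/2)$ holds only when $\varepsilon$ is purely imaginary, so that $|1+\varepsilon|^2 = 1+|\varepsilon|^2$; for $\zeta\in\partial\mathbb D_c$ in region I the variable $\zeta$ is complex, $\zeta^{1/4}(\sqrt\zeta\mp\tau)$ is not real, and $\varepsilon$ acquires a real part, so the safe bound is $|1+\varepsilon|^r\le\exp(r|\varepsilon|)$. That sharpens the uniformity requirement from $r=o(\zeta^{3/2})$ to $r=o(\zeta^{3/4})=o(n^{1/8})$ — which still comfortably contains $\gamma<1/12$ and matches the $\gamma<1/8$ threshold obtained in the paper's appendix, but the stated condition should be corrected. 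Second, the exponential bound controls the full factor $(1+\varepsilon)^r$ rather than the tail $\sum_{k>4M+4}\binom{r}{k}\varepsilon^k$ of its binomial expansion; to conclude the remainder is dominated by its first term you also need the terms to be decreasing in modulus, which is exactly where $r|\varepsilon|\to 0$ uniformly on the relevant part of the $u$-contour does the work. Spelling out this tail estimate, as the paper does in the analogous step \eqref{error1}--\eqref{71eq}, would close the argument.
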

We will not need the exact form of the coefficients $A_j^{(r,k\ell)}$, but they can be determined algorithmically
through the proof in Appendix \ref{proofs-appendix}.

\begin{prop}
\label{Ar-col3}
The 
entries of the third column of $\mathcal{A}_r(\zeta)$ behave as follows as $\zeta\to\infty$ and for $r=\mathcal{O}(n^\gamma)$, $0\leq\gamma<1/12$:
\begin{equation}
\label{Ar13}
\mathcal{A}_{(r,13)}(\zeta)=\frac{(-1)^{r+1}r!}{2 i\pi}\frac{{\rm e}^{-\tau \zeta}}{\zeta^{r+1}}\left(\sum_{j=0}^{M} \frac{\eta_{j}(\tau)}{\zeta^{j}}+  {\cal O}\left( \frac{r^{M+1}}{|\zeta|^{M+1}}\right) \right),\qquad \zeta\in \partial {\mathbb D}_{c},
\end{equation}
where 
\begin{equation}
\label{eta-of-tau}
\eta_{j}(\tau):=\sum_{(A,B)}\int_{{\cal C}_B}{\rm Ai}_{{\cal C}_A}^{(0)}(t;\tau){\rm e}^{\tau t} t^{j} d t.
\end{equation}
The asymptotic expansion of ${\cal A}_{(r,23)}=\partial v_{3}^{(r)}$ is given in the same form but with different coefficients:
\begin{equation}
\eta_{j}(\tau)\to \widehat\eta_{j}(\tau):= -\tau \eta_{j}(\tau)-(r+1+j)\eta_{j-1}(\tau),\qquad \eta_{-1}(\tau):=0.
\end{equation}
The asymptotic expansion of ${\cal A}_{r,33}$ is given by ${\cal A}_{(r-1,13)}$ using \eqref{Ar13}.
\end{prop}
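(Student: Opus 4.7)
The plan is to obtain a sharp large-$\zeta$ expansion of $\mathcal{A}_{(r,13)}(\zeta)=v_3^{(r)}(\zeta)$ by starting from the integral representation \eqref{148}, then eliminating the factor $(s+\tau)^r$ hidden inside the Airy integrals via an integration-by-parts trick, which transforms the problem into controlling a single Cauchy-type kernel $(t-\zeta)^{-(r+1)}$ against a fixed ($r$-independent) density. Concretely, I begin with
\begin{equation*}
\mathcal{A}_{(r,13)}(\zeta)=\frac{e^{-\tau\zeta}}{2i\pi}\left[\int_{\mathcal{C}_6}\frac{v_2^{(r)}(t)e^{\tau t}}{t-\zeta}\,dt-\int_{\mathcal{C}_2}\frac{v_2^{(r)}(t)e^{\tau t}}{t-\zeta}\,dt\right].
\end{equation*}

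The operator identity $(s+\tau)^r e^{ts}=e^{-\tau t}\partial_t^{\,r} e^{t(s+\tau)}$, applied under the Airy integral defining $v_2^{(r)}$, yields the key factorization $v_2^{(r)}(t)e^{\tau t}=\partial_t^{\,r}\!\left[e^{\tau t}{\rm Ai}_{\mathcal{C}_A}^{(0)}(t)\right]$, with $A\in\{2,6\}$ selected by the contour. Integrating by parts $r$ times, with boundary contributions killed by the super-exponential decay of the inner Airy integrands at infinity along $\mathcal{C}_2$ and $\mathcal{C}_6$, and using $\partial_t^{\,r}(t-\zeta)^{-1}=(-1)^r r!\,(t-\zeta)^{-(r+1)}$, one arrives at
\begin{equation*}
\mathcal{A}_{(r,13)}(\zeta)=\frac{r!\,e^{-\tau\zeta}}{2i\pi}\sum_{(A,B)}\pm\int_{\mathcal{C}_B}\frac{e^{\tau t}{\rm Ai}_{\mathcal{C}_A}^{(0)}(t)}{(t-\zeta)^{r+1}}\,dt ,
\end{equation*}
in which all dependence on $r$ beyond the kernel is gone.

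The asymptotic expansion for $|\zeta|\to\infty$ is then extracted from the generalized binomial series
\begin{equation*}
\frac{1}{(t-\zeta)^{r+1}}=\frac{(-1)^{r+1}}{\zeta^{r+1}}\sum_{k\geq 0}\binom{r+k}{k}\frac{t^k}{\zeta^k},
\end{equation*}
valid for $|t|<|\zeta|$. Integrating term by term (legitimate thanks to the superexponential decay of $e^{\tau t}{\rm Ai}_{\mathcal{C}_A}^{(0)}(t)$ along the steepest-descent directions) identifies the coefficients as precisely $\eta_j(\tau)$ of \eqref{eta-of-tau}. The truncation remainder is then bounded by combining a tail estimate on the binomial series with the uniform integrability of $e^{\tau t}{\rm Ai}_{\mathcal{C}_A}^{(0)}(t)$; using $\binom{r+M+1}{M+1}=O(r^{M+1})$ for fixed $M$ as $r\to\infty$ yields the asserted $O(r^{M+1}/|\zeta|^{M+1})$ remainder.

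For the remaining entries, $\mathcal{A}_{(r,23)}=\partial_\zeta v_3^{(r)}$ is obtained by differentiating the post-integration-by-parts representation under the integral sign: the derivative of $e^{-\tau\zeta}(t-\zeta)^{-(r+1)}$ produces the sum of $-\tau$ times the same integrand and $(r+1)(t-\zeta)^{-(r+2)}$, and rerunning the binomial expansion shows the coefficients transform as $\eta_j\mapsto\widehat\eta_j=-\tau\eta_j-(r+1+j)\eta_{j-1}$, matching the statement. Finally, $\mathcal{A}_{(r,33)}=v_3^{(r-1)}$, so its expansion is just the $r\to r-1$ substitution in \eqref{Ar13}. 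The main technical obstacle will be making the error bounds truly uniform in the joint limit $r,|\zeta|\to\infty$ with $r=O(n^\gamma)$ and $|\zeta|=O(n^{1/6})$: this requires choosing deformations of $\mathcal{C}_2$ and $\mathcal{C}_6$ on which $|t|/|\zeta|$ is bounded away from $1$, and exploiting the explicit Gaussian-type tail of $e^{-s^3/3}$ (rather than mere boundedness of ${\rm Ai}$) to ensure that the constants hidden in the $O(\cdot)$ are independent of $r$.
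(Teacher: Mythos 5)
Your broad strategy matches the paper's: integrate by parts $r$ times to transfer the $r$-dependence from the density onto the Cauchy kernel, leaving $(t-\zeta)^{-(r+1)}$ against a fixed order-$0$ Airy density, and then expand that kernel. The operator identity $v_2^{(r)}(t)e^{\tau t}=\partial_t^r\bigl[e^{\tau t}\,{\rm Ai}_{\mathcal{C}_A}^{(0)}(t)\bigr]$ you invoke is equivalent to the paper's iterated identity ${\rm Ai}_{\mathcal{C}_j}^{(r)}(t)e^{\tau t}=\partial_t\bigl({\rm Ai}_{\mathcal{C}_j}^{(r-1)}(t)e^{\tau t}\bigr)$, and your treatment of the $(2,3)$ and $(3,3)$ entries is fine. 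But two steps as written would fail.

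First, the integration by parts is not legitimate on $\mathcal{C}_2$ and $\mathcal{C}_6$ directly, and the reason you give for discarding boundary terms is wrong. Along the ends of $\mathcal{C}_2$ and $\mathcal{C}_6$ that approach the positive real axis, the inner densities ${\rm Ai}_{\mathcal{C}_2}^{(r)}(t)$ and ${\rm Ai}_{\mathcal{C}_6}^{(r)}(t)$ do not decay — they grow like $e^{\frac23 t^{3/2}}$ — so there is no super-exponential decay to kill the boundary contributions there, and the post-integration-by-parts integrals $\int_{\mathcal{C}_B}e^{\tau t}{\rm Ai}_{\mathcal{C}_A}^{(0)}(t)(t-\zeta)^{-(r+1)}dt$ with $A=B\in\{2,6\}$ would individually diverge. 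The paper avoids both problems by first recombining the contours so that each Airy function is integrated in a direction in which it decays: it rewrites the two Cauchy transforms as three integrals, of ${\rm Ai}_{\mathcal{C}_6}^{(r)}$ over $\mathcal{C}_5$, of ${\rm Ai}_{\mathcal{C}_2}^{(r)}$ over $\mathcal{C}_3$, and of ${\rm Ai}_{\mathcal{C}_1}^{(r)}$ over $\mathcal{C}_4$. Only then is integration by parts performed; the boundary terms at $\infty$ vanish by decay, and the boundary terms at the common finite endpoint $-\tau$ cancel in the sum thanks to ${\rm Ai}^{(r)}_{\mathcal{C}_1}-{\rm Ai}^{(r)}_{\mathcal{C}_2}+{\rm Ai}^{(r)}_{\mathcal{C}_6}\equiv0$. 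This identity is also what makes the formula \eqref{eta-of-tau} for $\eta_j(\tau)$, with pairs $(A,B)=(6,5),(2,3),(1,4)$ and the middle sign negative, a sum of convergent integrals; your pairing $(A,B)$ with $A=B$ does not give convergent $\eta_j$'s.

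Second, the error estimate cannot be obtained the way you suggest. The geometric-binomial series $\frac{(-1)^{r+1}}{\zeta^{r+1}}\sum_k\binom{r+k}{k}(t/\zeta)^k$ converges only for $|t|<|\zeta|$, and you cannot deform $\mathcal{C}_3,\mathcal{C}_4,\mathcal{C}_5$ to keep $|t|/|\zeta|$ bounded away from $1$, because they must run to $\infty$ to pick up the Airy decay. What the paper does instead is split the integral at $|t|=L$ with $L=n^\delta$, $0<\delta<1/12$. On $|t|<L$ one does not use the infinite binomial series at all; one uses a finite Taylor remainder bound for $(t-\zeta)^{-(r+1)}$, which is controlled because $rL/|\zeta|\sim n^{\gamma+\delta-1/6}\to0$, yielding the $\binom{r+M+1}{M+1}|t|^{M+1}/|\zeta|^{r+M+2}$ factor. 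On $|t|\geq L$ one gives up any expansion and uses the superexponential decay $|{\rm Ai}^{(0)}_{\mathcal{C}_A}(t)|\sim e^{-\frac23|t|^{3/2}}$ to show that piece is $O(e^{-\frac13L^{3/2}})$, which is beyond all orders. This split (and the resulting constraint on $\gamma$) is not optional; without it the "integrate term by term" step is not justified.
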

As a side remark, by applying the rotational symmetry of the Airy function, we get (\ref{eta-naught}) repeated here:
\begin{equation}
\eta_0(\tau)=\int_{0}^\infty {\rm Ai}(t)\left(e^{\tau t/\omega}+e^{\omega\tau t}+e^{\tau t}\right) dt,\quad \omega:=\exp\left(\frac{2\pi i}{3}\right).
\end{equation}
Evaluating the integral, $\eta_0(\tau)$ is $1$ at $\tau=0$.

The first few terms in the leading-order expansion of the entries of 
$\mathcal{A}_{(r)}(\zeta)$ are given in Section \ref{Ar-explicit}.

For $r=
{{\cal O}(n^{\gamma}), 0\leq\gamma<1/12}$ and $\zeta\in I$, Propositions \ref{Ar-col12} and 
\ref{Ar-col3} yield 
immediately the following expression for $\mathcal{A}_r(\zeta)$:
\eq
\label{L0-of-zeta}
\mathcal{A}_r(\zeta) = \bpm n^{-1/6} & 0 & 0 \\ 0 & n^{1/6} & 0 \\ 0 & 0 & 1 \epm {\bf L}_{(0)}(\zeta) {\bf S}(\zeta) {\bf C}(\zeta),
\endeq
where ${\bf S}(\zeta)$ is given by \eqref{S-of-zeta}, 
\eq
{\bf C}(\zeta) := \bpm \frac{1}{\sqrt{\pi}} \zeta^{r/2} e^{-\frac{2}{3}\zeta^{3/2}} & 0 & 0 \\ 0 & \frac{1}{\sqrt{\pi}} \zeta^{r/2} e^{\frac{2}{3} \zeta^{3/2}} & 0 \\ 0 & 0 & \frac{1}{2\pi i}\Gamma(r)\eta(\tau)\zeta^{-r}e^{-\tau \zeta} \epm, 
\endeq
and 
\eq
\label{L0-zeta-exp}
{\bf L}_{(0)}(\zeta) = {\bf I} + \bpm  \mathcal{O}\left(\frac{r^2}{\zeta}\right)  &  \mathcal{O}\left(\frac{r n^{1/3}}{\zeta}\right)  &  \frac{-rn^{1/6}}{\zeta} + \mathcal{O}\left(\frac{r^2 n^{1/6}}{\zeta^2}\right)  \\  \mathcal{O}\left(\frac{r}{n^{1/3}}\right) & \mathcal{O}\left(\frac{r^2}{\zeta}\right) & \mathcal{O}\left(\frac{r}{\zeta n^{1/6}}\right) \\ \mathcal{O}\left(\frac{r}{\zeta n^{1/6}} \right) & \mathcal{O}\left(\frac{n^{1/6}}{\zeta}\right) & \mathcal{O}\left(\frac{r}{\zeta}\right)  \epm.
\endeq
There are several important points to note concerning ${\bf L}_{(0)}(\zeta)$, 
which we will use to define the global parametrix inside $\mathbb{D}_c$.  
We are specifically interested in the behavior of ${\bf L}_{(0)}(\zeta)$ for 
$\zeta\in\partial\mathbb{D}_c$, where $\zeta=\mathcal{O}(n^{1/6})$.  
The observant reader will note that several of the terms in 
${\bf L}_{(0)}(\zeta)$ are not decaying as $n\to\infty$ for 
$\zeta\in\partial\mathbb{D}_c$.  Ideally we would have 
${\bf L}_{(0)}(\zeta)$ close to the identity for large $n$.  This will 
be achieved by a series of \emph{pseudo-Schlesinger transforms} (from now on, we simply use the term ``Schlesinger transform").  Each 
transform only impacts one column at a time and 
will modify ${\bf L}_{(0)}$ into a  new ${\bf L}_{(j)}$ where the most 
dominant term in that column of the expansion \eqref{L0-zeta-exp} is 
sequentially removed.  At each step the rate of convergence for that 
column will be improved.

We will show in detail how 
such a transform is used to define a new matrix ${\bf L}_{(1)}(\zeta)$ 
with the same entry-wise growth as ${\bf L}_{(0)}(\zeta)$ except that the 
highest-order term in the third column (the explicit $-rn^{1/6}/\zeta$ term 
in the $(1,3)$ entry) is removed. 

A note is in order concerning the $\zeta$-independent term in the $(2,1)$ 
entry.  There is a nilpotent matrix ${\bf N}_0$ such that 
${\bf L}_{(0)}{\bf N}_0$ is the same as ${\bf L}_{(0)}$ except this term 
is removed.  
We then redefine ${\bf H}_{(0)}:={\bf H}_{(0)}{\bf N}_0$ and ${\bf L}_{(0)}:={\bf L}_{(0)}{\bf N}_0$ as our starting point.

We then apply a finite sequence of Schlesinger transforms to the 
improved ${\bf L}_{(1)}(\zeta)$ matrix to yield a final matrix 
${\bf L}_{(2)}(\zeta)$ 
of the form ${\bf I}+\mathcal{O}(n^{-2/3})$.  In fact it is possible to 
use such procedure to change ${\bf L}_{(0)}(\zeta)$ into a 
matrix arbitrarily close to the identity.  However, other factors in 
the calculation (see \eqref{H2invH2}) will introduce an 
$\mathcal{O}(n^{-2/3})$ error into the 
kernel computation, so there is no point in removing smaller terms.  

We start with two observations.  First off, ${\bf L}_{(0)}(\zeta)$ has no jump discontinuities, and the expansion of 
   ${\bf L}_{(0)}(\zeta)$ has only negative integer powers of 
$\zeta$.  This is important since the Schlesinger transform will 
remove pole terms.  Secondly, ${\bf L}_{(0)}(\zeta)$ has an asymptotic 
expansion for large $n$.  More exactly, once the pole of the lowest order is 
removed from a given entry, the remaining terms are asymptotically smaller 
as $n\to\infty$ than what was removed.  Furthermore, each entry can be 
made to decay at an arbitrarily fast rate by removing a finite number 
of terms from the Laurent expansion of that entry.  This follows from the 
behavior of the entries of $\mathcal{A}_r(\zeta)$;  see Propositions 
\ref{Ar-col12} and \ref{Ar-col3}.  
The leading-order entries of ${\bf L}_{(0)}(\zeta)$ can be read off from 
the formulas in Section \ref{Ar-explicit}, and in principle any term can be 
computed as described in the proofs.    

Now is an appropriate time to explain where the limitation $\gamma<1/12$ 
arises (recall $r=\mathcal{O}(n^\gamma)$ for $0\leq\gamma<1/12$).  Note in Proposition 
\ref{Ar-col12} the terms $(\zeta^{1/2}\pm\tau)^r$.  To explicitly compute 
the entries of ${\bf L}_{(0)}(\zeta)$, it is necessary to use the expansions 
\begin{equation}
\left(1 \pm \frac{\tau}{\zeta^{1/2}} \right)^{r}  = 1 \pm \frac{r\tau}{\zeta^{1/2}} + \frac{r(r-1) \tau^2}{2 \zeta} \pm \frac{r (r-1) (r-2) \tau^3}{3! \zeta^{3/2} }+ \mathcal{O}\left( \frac{r^4}{\zeta^2} \right),
\end{equation}
which are asymptotic expansions for $\zeta\in\partial\mathbb{D}_c$ only 
for $\gamma<1/12$.  Note that the number of terms in ${\bf L}_{(0)}(\zeta)$ 
that need to be removed to leave an error of $\mathcal{O}(n^{-2/3})$ will 
depend on how fast $r$ is growing.  Indeed, if $\gamma$ is close to $1/12$, 
then a very large number of terms will need to be removed!

\section{The error analysis \label{Error-Section}}

\subsection{Global Parametrix Construction}

The asymptotic expansion of the outer parametrix ${\bf \Psi}(z)$
on the boundary of $\mathbb{D}_c$ (Lemma \ref{H-lemma}) will guide our first
attempt at defining the \emph{global parametrix} ${\bf \Psi}_{(0)}^\infty$.  Around all of the 
band endpoints except for $\beta$ we will use standard Airy parametrices 
which match with ${\bf \Psi}(z)$ up to $\mathcal{O}(1/n)$.  For more 
details on the Airy parametrix in a $3\times 3$ Riemann-Hilbert problem see, 
for example, Bleher and Kuijlaars \cite{Bleher:2004b}.  For convenience, 
we will denote the Airy parametrices by ${\bf P}_{\mbox{Ai}}(z)$.  
We also define $\mathbb{D}_{\mbox{Ai}}$ to be the union of $2g+1$ small, 
fixed-size disks centered around $\alpha_1, \beta_1, ..., \alpha_g, \beta_g$, 
and $\alpha$ in which the Airy parametrices will be used.  We set
\eq
\label{Psi0-infinity}
{\bf \Psi}_{(0)}^\infty(z) := \begin{cases} {\bf \Psi}(z) = {\bf H}_{(0)}(z){\bf S}(\zeta) & z\notin(\mathbb{D}_c\cup\mathbb{D}_{\mbox{Ai}}), \\ {\bf H}_{(0)}(z){\bf D}(z)^{-1} {\bf L}_{(0)}(\zeta){\bf S}(\zeta){\bf D}(z) & z\in\mathbb{D}_c, \\ {\bf P}_{\mbox{Ai}}(z) & z\in\mathbb{D}_{\mbox{Ai}}, \end{cases}
\endeq
where ${\bf H}_{(0)}(z)$ is defined by Lemma \ref{H-lemma}, 
${\bf L}_{(0)}(\zeta)$ is defined by \eqref{L0-of-zeta}, ${\bf S}(\zeta)$ is 
defined by \eqref{S-of-zeta}, and 
\eq
\label{D-of-z}
{\bf D}(z):=\bpm 1 & 0 & 0 \\ 0 & 1 & 0 \\ 0 & 0 & e^{nh(z)-nh(\beta)-\tau\zeta} \epm.
\endeq

Direct calculation shows that ${\bf \Psi}_{(0)}^\infty(z)$
satisfies exactly the same jumps as ${\bf W}(z)$ inside $\mathbb{D}_c$:
\begin{equation}\label{Psiinfjumplocal}
\left\{\begin{array}{ll}\vspace{0.2cm}
{\bf \Psi}_{(0)+}^\infty(z)={\bf \Psi}_{(0)-}^\infty(z)\bpm 1 & e^{-\frac43\zeta^{3/2}} & e^{-\frac23\zeta^{3/2}+nh(z)}\\ 0 & 1 & 0 \\ 0 & 0 & 1 \epm & z\in \Gamma_1\cap\mathbb{D}_c,
\\\vspace{0.2cm}
{\bf \Psi}_{(0)+}^\infty(z)={\bf \Psi}_{(0)-}^\infty(z)\bpm 1&0&0\\-e^{\frac43\zeta^{3/2}}&1&-e^{\frac23\zeta^{3/2}+nh(z)}\\0&0&1\epm & z\in \Gamma_2\cap\mathbb{D}_c,
\\\vspace{0.2cm}
{\bf \Psi}_{(0)+}^\infty(z)={\bf \Psi}_{(0)-}^\infty(z)\bpm 0&
{(-1)^r}
&0\\
{(-1)^{r+1}}
&0&0\\0&0&1\epm & z\in\Gamma_3\cap\mathbb{D}_c,
\\\vspace{0.2cm}
{\bf \Psi}_{(0)+}^\infty(z)={\bf \Psi}_{(0)-}^\infty(z)\bpm 1&0&0\\e^{\frac43\zeta^{3/2}}&1&-e^{\frac23\zeta^{3/2}+nh(z)}\\0&0&1\epm & z\in\Gamma_4\cap\mathbb{D}_c.
\end{array}\right.
\end{equation}
We now define the \emph{error matrix} ${\bf E}_{(0)}(z)$ by
\eq
{\bf E}_{(0)}(z) := {\bf W}(z)\left({\bf \Psi}_{(0)}^\infty(z)\right)^{-1}.
\endeq
Note that ${\bf E}_{(0)}(z)$ has no jumps on the contours inside
$\mathbb{D}_c$.  We take the boundary of $\mathbb{D}_c$ to be oriented
clockwise (i.e. inside/outside is $-/+$ respectively).  Then, using the expansion \eqref{Psi-ito-H}, the jump for
${\bf E}_{(0)}(z)$ is given on $\mathbb{D}_c$ by
\eq
{\bf V}_{(0)}^{({\bf E})}(z) 
:={\bf E}_{(0)}(z)^{-1}_-{\bf E}_{(0)}(z)_+={\bf\Psi}^\infty_{(0)}(z)_-{\bf\Psi}^\infty_{(0)}(z)_+^{-1}= {\bf H}_{(0)}(z) {\bf D}(z)^{-1} {\bf L}_{(0)}(\zeta) {\bf D}(z) {\bf H}_{(0)}(z)^{-1}.
\endeq
We would like this jump matrix to be close to the identity as $n\to\infty$ 
for $\zeta\in\mathbb{D}_c$.  However, 
as noted previously, ${\bf L}_{(0)}(\zeta)$ does not decay to the identity 
for $\zeta=\mathcal{O}(n^{1/6})$ (see \eqref{L0-zeta-exp}).  To remedy this
situation, we will modify ${\bf L}_{(0)}(\zeta)$ in a series of 
Schlesinger transforms:
\eq
{\bf L}_{(0)}(\zeta) \to {\bf L}_{(1)}(\zeta) \to {\bf L}_{(2)}(\zeta). 
\endeq
The first step will be carried out explicitly to explain the procedure and 
will remove one representative term.  The second step will remove the 
remaining error terms up to $\mathcal{O}(n^{-2/3})$.  There is no need to 
carry out these transforms explicitly as we will show they only affect the 
subdominant terms in the kernel.  We will then use ${\bf L}_{(2)}(\zeta)$ 
to define a refined global parametrix ${\bf \Psi}_{(2)}^\infty(z)$.  
The new error matrix ${\bf E}_{(2)}(z)$ defined using 
${\bf \Psi}_{(2)}^\infty(z)$ will be shown to be close to ${\bf I}$.  

Our first transform  removes the term $-rn^{1/6}/\zeta$ in the (13) entry 
of ${\bf L}_{(0)}$:  define a new matrix 
\eq
{\bf L}_{(1)}(\zeta):={\bf L}_{(0)}(\zeta) \bpm 1 & 0 & \frac{rn^{1/6}}{\zeta} \\ 0 & 1 & 0 \\ 0 & 0 & 1 \epm, \quad \zeta\in\mathbb{D}_c.
\endeq
Then it follows that 
\eq
\label{L1}
{\bf L}_{(1)}(\zeta) = \bpm L_{(0,11)} & L_{(0,12)} & \mathcal{O}\left(\frac{r^3 n^{1/6}}{\zeta^2} \right) \\ L_{(0,21)} & L_{(0,22)} & L_{(0,23)} + \mathcal{O}\left(\frac{r^2}{n^{1/6}\zeta}\right) \\ L_{(0,31)} & L_{(0,32)} & L_{(0,33)} + \mathcal{O}\left(\frac{r^2}{\zeta^2}\right) \epm.
\endeq
Now there is a unique $3\times 3$ matrix ${\bf F}_{(1)}$, independent of 
$z$, such that 
\eq
\begin{split}
{\bf H}_{(1)}(z)  :=  &  \left({\bf I} + \frac{{\bf F}_{(1)}}{z-\beta}\right){\bf H}_{(0)}(z){\bf D}(z)^{-1} \bpm 1 & 0 & \frac{rn^{1/6}}{\zeta} \\ 0 & 1 & 0 \\ 0 & 0 & 1 \epm {\bf D}(z) \\
  = & \left({\bf I} + \frac{{\bf F}_{(1)}}{z-\beta}\right){\bf H}_{(0)}(z) \bpm 1 & 0 & e^{nh-\tau\zeta}\frac{rn^{1/6}}{\zeta} \\ 0 & 1 & 0 \\ 0 & 0 & 1 \epm
\end{split}
\endeq
is holomorphic for $z\in{\mathbb D}_c$.  The matrix ${\bf F}_{(1)}$ 
can be computed explicitly, but since its exact form will not be 
significant we use the following proposition to assert its existence.  

\begin{prop}
\label{theoremSch}
Fix a function $\epsilon(n)$ so $\lim_{n\to\infty}\epsilon(n)=0$.  Assume 
that we are given a constant $q\times q$ two-nilpotent matrix ${\bf N}$
(i.e. ${\bf N} ^2=0$)
, a 
series 
of numbers $\{d_j(n)\big|-\infty<j\leq k\}$ so that 
$d_k\neq0$ and 
$d_j=\mathcal{O}(\epsilon)$, and a $q\times q$ matrix ${\bf H}(z;n)$ 
that is locally holomorphic at $z=0$ and $\det{\bf H}(z;n)\equiv 1$.  
We also assume that 
${\bf H}(z;n)-\lim_{n\to\infty}{\bf H}(z;n)={\cal O}(\epsilon)$
uniformly on a fixed, finite disk around $z=0$.
Then one can uniquely determine $k$ constant (in $z$) matrices ${\bf F}_1,...,{\bf F}_k$ by requiring that $\widetilde{\bf H}(z;n)$ defined below is locally holomorphic at $z=0$:
\begin{equation}\label{tildeH}
\widetilde{\bf H}(z;n):=\left({\bf I}+\frac{{\bf F}_1}{z}+...+\frac{{\bf F}_k}{z^k}\right){\bf H}(z;n)\left({\bf I}+ {\bf N}\sum_{j=-\infty}^k\frac {d_j}{z^j}\right).
\end{equation}
In addition, 
\eq
\label{Fj-bound}
{\bf F}_j(n) = \mathcal{O}(\epsilon), \quad j=1,...,k
\endeq
and
\eq
\widetilde{\bf H}(z;n)-\lim_{n\to\infty}\widetilde{\bf H}(z;n)={\cal O}(\epsilon)
\endeq
uniformly in the same disk around $z=0$, and $\det \widetilde{\bf H}(z;n)\equiv 1$.
\end{prop}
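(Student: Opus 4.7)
The plan is to expand $\widetilde{\bf H}(z;n) = {\bf L}(z){\bf H}(z;n){\bf R}(z)$ as a Laurent series at $z = 0$, where ${\bf L}(z) := {\bf I} + \sum_{j=1}^{k} {\bf F}_j z^{-j}$ and ${\bf R}(z) := {\bf I} + {\bf N} P(z)$ with $P(z) := \sum_{j \leq k} d_j z^{-j}$, and to require that every singular coefficient vanish. Two structural observations make this tractable. First, nilpotence gives ${\bf R}(z)^{-1} = {\bf I} - {\bf N} P(z)$ and $\det{\bf R}(z)\equiv 1$. Second, writing ${\bf H}(z;n) = \sum_{m\geq 0}{\bf H}_m z^m$, every singular Laurent coefficient of $M(z) := {\bf H}(z;n){\bf R}(z)$ at $z=0$ ends in ${\bf N}$:
\begin{equation*}
M_{-\ell} = \Bigl(\sum_{j=\ell}^k d_j {\bf H}_{j-\ell}\Bigr){\bf N}, \qquad \ell \geq 1,
\end{equation*}
since the singular part of ${\bf H}{\bf R}$ comes entirely from ${\bf H}(z){\bf N} P(z)$.

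The condition that the $z^{-m}$-coefficient of ${\bf L}(z)M(z)$ vanish splits into $k$ \emph{main} equations ($m = 1,\ldots,k$),
\begin{equation*}
\sum_{j=1}^{k} {\bf F}_j\, M_{j-m} = -M_{-m},
\end{equation*}
and $k$ \emph{side} equations ($m = k+1,\ldots,2k$) of the form $\bigl(\sum_{j=m-k}^{k}{\bf F}_j\widetilde M_{m-j}\bigr){\bf N} = 0$, where $M_{j-m} = \widetilde M_{m-j}{\bf N}$. Because $d_j = \mathcal O(\epsilon)$, the off-diagonal singular entries $M_{j-m}$ with $j<m$ and the right-hand sides $M_{-m}$ are $\mathcal O(\epsilon)$, while the diagonal block $M_0 = {\bf H}(0;n)$ is invertible since $\det{\bf H}\equiv 1$. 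I would back-substitute the main equations from $m=k$ down to $m=1$, obtaining a unique solution with ${\bf F}_j = \mathcal O(\epsilon)$. The key technical point is to verify that these ${\bf F}_j$'s automatically satisfy the side equations. I would do so by induction on $m$, using that the recursion involves Neumann series $({\bf I} + d_\cdot{\bf H}_\cdot{\bf N}{\bf H}_0^{-1})^{-1}$ and that products like ${\bf H}_i{\bf N}{\bf H}_0^{-1}\cdot{\bf H}_0{\bf N} = {\bf H}_i{\bf N}^2 = 0$ collapse by ${\bf N}^2=0$. The mechanism is already transparent in the case $k=1$: solving the $z^{-1}$ equation with ${\bf F}_1{\bf H}_0{\bf N}=0$ imposed gives ${\bf F}_1 = -d_1{\bf H}_0{\bf N}{\bf H}_0^{-1}({\bf I} + d_1{\bf H}_1{\bf N}{\bf H}_0^{-1})^{-1}$, whereupon ${\bf F}_1{\bf H}_0{\bf N} = -d_1{\bf H}_0{\bf N}{\bf H}_0^{-1}\cdot{\bf H}_0{\bf N} = -d_1{\bf H}_0{\bf N}^2 = 0$ because the Neumann series acts trivially on ${\bf H}_0{\bf N}$; the general $k$ case runs on the same pattern.

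The remaining assertions then follow quickly. The bound ${\bf F}_j = \mathcal O(\epsilon)$ is immediate from the back-substitution. For the estimate $\widetilde{\bf H}(z;n) - \lim_{n\to\infty}\widetilde{\bf H}(z;n) = \mathcal O(\epsilon)$, note that although ${\bf L}-{\bf I}$ and ${\bf R}-{\bf I}$ each carry $\mathcal O(\epsilon)$-sized poles at $z=0$, these cancel by construction so that $\widetilde{\bf H}$ is holomorphic on the disk; the Taylor coefficients of $\widetilde{\bf H}$ depend continuously on the $\mathcal O(\epsilon)$-close data, and the bound follows uniformly by the maximum principle. Finally, $\det\widetilde{\bf H}\equiv 1$: since $\det{\bf H}\equiv 1$ and $\det{\bf R}\equiv 1$ (nilpotence of ${\bf N}$), we have $\det\widetilde{\bf H} = \det{\bf L}$, which is a rational function of $z$ with poles possibly only at $z=0$ and limit $1$ at infinity; holomorphicity of $\det\widetilde{\bf H}$ at $z=0$ forces $\det{\bf L}$ to be entire, hence constant and equal to $1$. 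The principal obstacle throughout is the automatic satisfaction of the side equations, which is not dictated by dimension counting but by the combinatorial cancellations enforced by ${\bf N}^2=0$.
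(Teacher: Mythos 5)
Your strategy is essentially the paper's: expand in Laurent coefficients, split the vanishing conditions for the orders $z^{-1},\dots,z^{-2k}$ into $k$ ``main'' and $k$ ``side'' equations, solve the main system (invertible because ${\bf H}_0$ is, up to $\mathcal O(\epsilon)$ corrections), and invoke ${\bf N}^2=0$ to dispose of the side equations; the determinant argument also matches. One small caveat: ``back-substitute from $m=k$ down to $m=1$'' is a slight misnomer, since the $m=k$ equation already couples to ${\bf F}_1,\dots,{\bf F}_{k-1}$ through the $\mathcal O(\epsilon)$ entries $M_{j-k}$; what one is really solving is a near-triangular $qk\times qk$ block system, which is exactly the big matrix equation the paper writes down.

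The one step you have not actually proved is the dispatch of the side equations for general $k$. You verify $k=1$ by an explicit Neumann-series cancellation and then assert that ``the general $k$ case runs on the same pattern'' — but that assertion is precisely what requires an argument, and the explicit formula for ${\bf F}_j$ becomes unwieldy for $k>1$. The paper's route avoids computing ${\bf F}_j$ altogether: right-multiply each main equation ($m=1,\dots,k$) by ${\bf N}$. The two ${\bf N}$-bearing summands in the paper's order-$z^{-m}$ expression are annihilated by ${\bf N}^2=0$, leaving $\bigl(\sum_{l=0}^{k-m}{\bf F}_{l+m}{\bf H}_l\bigr){\bf N}=0$ for $m=1,\dots,k$. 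The side equations for $m=k+1,\dots,2k$ are $d_j$-linear combinations of exactly these $k$ quantities, with an invertible coefficient system because $d_k\neq 0$. Hence the main equations imply the side equations as an algebraic identity, with no induction, no Neumann series, and no explicit solution. Replacing your $k=1$ sample computation and ``same pattern'' claim with this observation is what turns the sketch into a proof.
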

\begin{proof}
Denote as follows the expansion at $z=0$:
\begin{equation}
{\bf H}(z;n)={\bf H}_0(n)+{\bf H}_1(n)z+{\bf H}_2(n)z^2+\cdots
\end{equation}
Let us collect all the terms of the order $z^{-m}$ from \eqref{tildeH}.
\begin{equation}\label{three}
\mbox{Terms of order }z^{-m}: ~\sum_{l=0}^{k-m} {\bf F}_{l+m} {\bf H}_{l}+ \sum_{j=1-k}^{k}\left(\sum_{l=0}^{k-(m-j)} {\bf F}_{l+m-j} {\bf H}_{l}\right){\bf N}d_j+ \sum_{l=0}^{k-m} {\bf H}_l{\bf N} d_{l+m}.
\end{equation}
This must be zero for $m=1,...,2k$ for $\widetilde{\bf H}(z;n)$ to be holomorphic at the origin.  Since we have only $k$ unknown matrices: ${\bf F}_1,...,{\bf F}_k$, the number of equations must be reduced to $k$ equations.   We will show that the equations for $m=k+1,...,2k$ are contained in the equations for $m=1,...,k$.  
\\
For $m>k$ the first and the last (summation) terms are absent and we have only the middle term (with double summations).   The middle term is a linear combination of $\left\{\left(\sum_{l=0}^{k-\widetilde m} {\bf F}_{l+\widetilde m} {\bf H}_{l}\right){\bf N}\big|\widetilde m=1,...,k\right\}$ (which gives an invertible linear system of equations because $d_k\neq 0$) and, therefore, the set of equations are equivalent to 
\begin{equation}
0=\left(\sum_{l=0}^{k-m} {\bf F}_{l+m} {\bf H}_{l}\right){\bf N}\quad\mbox{for }m=1,...,k.
\end{equation}
These are also obtained by right multiplication of \eqref{three} for $m=1,...,k$ by ${\bf N}$ because the second and the last terms vanish given that ${\bf N}{\bf N}=0$.  Therefore the vanishing of \eqref{three} for $m=1,...,k$ is a sufficient condition to solve for ${\bf F}_j$'s. 
To solve these $k$ equations consider a big (block) matrix of size $qk \times qk$ (where $q$ is the size of the matrices that appear in \eqref{three}; $q=3$  in  our case) obtained
  by adjoining the
 ${\bf F}_j$'s side by side as $[{\bf F}_1,{\bf F}_2,...,{\bf F}_k]$.  Then we can write the $k$ equations into a single matrix equation as follows.
\begin{equation}
\begin{split}
[{\bf F}_1,{\bf F}_2,...,{\bf F}_k]&\left(\left[\begin{array}{cccc}{\bf H}_0 & 0 & \cdots & 0 \\{\bf H}_1 & {\bf H}_0 &  & 0 \\\vdots &  & \ddots & \vdots \\{\bf H}_{k-1} & {\bf H}_{k-2} & \cdots & {\bf H}_0\end{array}\right]+{\cal O}(\epsilon)\right)
\\&=-\left[ \sum_{l=0}^{k-1} {\bf H}_l{\bf N} d_{l+1},\sum_{l=0}^{k-2} {\bf H}_l{\bf N} d_{l+2},...,{\bf H}_0{\bf N}d_{k-1}+{\bf H}_1{\bf N}d_k,{\bf H}_0{\bf N} d_{k}\right],
\end{split}
\end{equation}
where the middle term of \eqref{three} is hidden in ${\cal O}(\epsilon)$.  Because $\det{\bf H}(z;n)=\det {\bf H}_0=1$, the big matrix of size $qk\times qk$ multiplied on the right of $[{\bf F}_1,...,{\bf F}_k]$ is invertible and, therefore, the solution can be uniquely obtained. Also it follows immediately that $[{\bf F}_1,...,{\bf F}_k]={\cal O}(\epsilon)$.  From this, it also follows that $\widetilde {\bf H}(z;n)-\lim_{n\to\infty}\widetilde {\bf H}(z;0)$ is uniformly bounded by ${\cal O}(\epsilon)$.  

%

Lastly, to show that $\det\widetilde{\bf H}(z;n)\equiv1$, we take the determinant of \eqref{tildeH}.  
\begin{align}\label{tildeH1}
\det\widetilde{\bf H}(z;n)&=\det\left[\left({\bf I}+\frac{{\bf F}_1}{z}+...+\frac{{\bf F}_k}{z^k}\right)\right]\det{\bf H}(z;n)\det\Big[\Big({\bf I}+{\bf N}\sum_{j=-\infty}^k\frac {d_j}{z^j}\Big)\Big]
\\&=\det\left[\left({\bf I}+\frac{{\bf F}_1}{z}+...+\frac{{\bf F}_k}{z^k}\right)\right].
\end{align}
The left hand side is holomorphic (at the origin) and therefore it must be 1 to match the right hand side.
\end{proof}
We now show how to apply Proposition \ref{theoremSch} to guarantee the existence 
of the matrix ${\bf F}_{(1)}$.  First, using the holomorphicity of 
$\zeta(z)$ in 
$\mathbb{D}_c$, we see there are unique 
functions $d_{(1,j)}(n)$ such that
\eq
e^{nh-\tau\zeta}\frac{rn^{1/6}}{\zeta} = \sum_{j=-\infty}^1 \frac {d_{(1,j)}(n)}{(z-\beta)^j}.
\endeq
Furthermore, using \eqref{zeta-ito-zmbeta} and \eqref{hz1},
\eq
d_{(1,j)}(n) = \mathcal{O}\left(\frac{r}{n^{1/2}}\right).
\endeq
We shift $z$ by $\beta$ and choose 
\eq
{\bf H}(z;n) := {\bf H}_{(0)}(z;n), \quad \epsilon(n):=\frac{r}{n^{1/2}}, \quad {\bf N}:=\bpm 0 & 0 & 1 \\ 0 & 0 & 0 \\ 0 & 0 & 0 \epm.
\endeq
Recall from Lemma \ref{H-lemma} that the determinant of ${\bf H}_{(0)}$ is 
one and that 
${\bf H}_{(0)}(z;n)-\lim_{n\to\infty}{\bf H}_{(0)}(z;n)=\mathcal{O}(\kappa)$.  Therefore, Proposition  \ref{theoremSch} shows that  the matrices 
${\bf F}_{(1,m)}$ exist and 
\eq\label{H1bound}
{\bf H}_{(1)}(z;n) - \lim_{n\to\infty}{\bf H}_{(1)}(z;n) = \mathcal{O}\left(\frac{r}{n^{1/2}}\right).
\endeq 

%

Furthermore, any series expansion of a unimodular matrix can be decomposed 
into products of the form $[{\bf I}+{\bf N}]$, where ${\bf N}$ is nilpotent as we show below in Lemma \ref{nilpo}
These products can then be dealt with as explained in 
Proposition 
\ref{theoremSch}.
\begin{lemma}
\label{nilpo}
  Consider a unimodular matrix ${\bf M}(z)$ with expansion
\begin{equation}
{\bf M}(z)={\bf I}+{\bf M}_{1}/z+{\bf M}_{2}/z^{2}+{\bf M}_{3}/z^{3}+...,\qquad 
{\bf M}_j\in {\rm Mat}_{3\times 3}(\C),\ \ \text{\rm Tr}\, {\bf M}_1=0.
\end{equation} 
Then, for arbitrary $K>0$ we can find a finite number of nilpotent matrices $\mathbf N_j$ and integers $k_j\geq 1$ such that 
\be
\mathbf M(z) = \prod_{j} \le(\mathbf I +\frac{ \mathbf N_j}{z^{k_j}} \ri) \le(\mathbf I + \mathcal O(z^{-K-1})\ri)\ .
\ee
\end{lemma}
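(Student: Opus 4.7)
\textbf{Proof proposal for Lemma \ref{nilpo}.}
The plan is to strip off the leading asymptotic term of $\mathbf{M}(z)$ iteratively, using a sum-of-two-nilpotents decomposition at each step, and stop when the desired truncation order $K$ is reached.

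First, I would record the key trace identity: if a unimodular matrix has expansion
\begin{equation*}
\mathbf{M}(z) = \mathbf{I} + \mathbf{M}_m/z^m + \mathcal{O}(z^{-m-1}),\qquad \mathbf{M}_m\ne 0,
\end{equation*}
then expanding $\det \mathbf{M}(z) = 1 + \mathrm{Tr}(\mathbf{M}_m)/z^m + \mathcal{O}(z^{-m-1}) = 1$ forces $\mathrm{Tr}(\mathbf{M}_m)=0$. This is already given for $m=1$ in the statement, and will propagate automatically as we iterate.

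The core algebraic ingredient is the classical observation (essentially Fillmore's theorem) that any $3\times 3$ trace-free complex matrix is similar to one with zero diagonal, and hence decomposes as $\mathbf{M}_m = \mathbf{N}_1 + \mathbf{N}_2$ with both $\mathbf{N}_j$ nilpotent (e.g.\ the strictly upper and strictly lower triangular parts, conjugated back). Given such a decomposition, a direct multiplication shows
\begin{equation*}
\mathbf{P}_m(z):=\bigl(\mathbf{I}+\mathbf{N}_2/z^m\bigr)\bigl(\mathbf{I}+\mathbf{N}_1/z^m\bigr) = \mathbf{I} + \mathbf{M}_m/z^m + \mathbf{N}_2\mathbf{N}_1/z^{2m},
\end{equation*}
and each factor is unimodular (nilpotent perturbation of the identity) with explicit, polynomial-in-$1/z$ inverse $(\mathbf{I}+\mathbf{N}_j/z^m)^{-1}= \mathbf{I}-\mathbf{N}_j/z^m+\mathbf{N}_j^2/z^{2m}-\cdots$ (a finite sum).

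The iteration then runs as follows: set $\mathbf{M}^{(1)}(z):=\mathbf{M}(z)$ and $m_1:=1$. Given $\mathbf{M}^{(\ell)}(z)=\mathbf{I}+\mathbf{M}^{(\ell)}_{m_\ell}/z^{m_\ell}+\mathcal{O}(z^{-m_\ell-1})$, unimodular with $\mathrm{Tr}(\mathbf{M}^{(\ell)}_{m_\ell})=0$, decompose $\mathbf{M}^{(\ell)}_{m_\ell}=\mathbf{N}_1^{(\ell)}+\mathbf{N}_2^{(\ell)}$ into nilpotents and define
\begin{equation*}
\mathbf{M}^{(\ell+1)}(z):=\mathbf{P}_{m_\ell}(z)^{-1}\mathbf{M}^{(\ell)}(z),
\end{equation*}
which is again unimodular and of the form $\mathbf{I}+\mathcal{O}(z^{-m_\ell-1})$ by construction; thus $m_{\ell+1}\geq m_\ell+1$, and by the trace identity the new leading coefficient is again traceless, so the induction carries on. After at most $K$ steps we reach $m_\ell \geq K+1$, and the accumulated left factors give the desired product decomposition
\begin{equation*}
\mathbf{M}(z)=\prod_{j}\!\Bigl(\mathbf{I}+\mathbf{N}_j/z^{k_j}\Bigr)\cdot\bigl(\mathbf{I}+\mathcal{O}(z^{-K-1})\bigr).
\end{equation*}

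The main obstacle, and the only nontrivial ingredient, is the decomposition of a trace-free matrix as a sum of two nilpotents; everything else is a bookkeeping induction. I would either cite Fillmore's theorem directly or give the two-line argument: conjugate $\mathbf{M}_m$ to zero-diagonal form, then split into strictly upper and strictly lower triangular pieces and conjugate back. A minor technical point worth mentioning is the degenerate case $\mathbf{M}_m=0$ at some intermediate step, where no factor needs to be produced and one simply moves on to the next order; and the borderline case where $\mathbf{M}_m$ is already nilpotent, where one can take $\mathbf{N}_1=\mathbf{M}_m$, $\mathbf{N}_2=0$ and save a factor.
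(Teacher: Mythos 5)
Your proof is correct and follows the same strip-and-iterate skeleton as the paper's, but the key algebraic ingredient is genuinely different. You invoke the sum-of-two-nilpotents decomposition of a traceless matrix (Fillmore/Parker--Mitchell: conjugate to zero diagonal, split into strictly upper and strictly lower triangular parts, conjugate back), which lets you peel off each order $z^{-m}$ with at most two factors $\bigl(\mathbf I+\mathbf N_2/z^m\bigr)\bigl(\mathbf I+\mathbf N_1/z^m\bigr)$. The paper instead fixes once and for all an explicit $8$-element basis of nilpotent matrices spanning the traceless $3\times 3$ matrices (the six off-diagonal matrix units plus two explicit rank-one nilpotents carrying the diagonal), writes $\mathbf M_1=\sum_{j=1}^8 c_j\mathbf N_j$, and peels off order $z^{-1}$ with the ordered product $\prod_{j=1}^8(\mathbf I+c_j\mathbf N_j/z)$. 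Both schemes rely on unimodularity to propagate tracelessness of the next leading coefficient, so the inductions are identical. What yours buys is fewer factors per step and an argument that quotes a standard lemma rather than exhibiting a basis; what the paper's buys is complete explicitness with no conjugation, a set of nilpotents that does not depend on $\mathbf M_1$, and an obvious extension to any matrix size $q$ by using the $q^2-q$ off-diagonal units together with $q-1$ two-by-two nilpotent blocks on the diagonal. Your remarks on the degenerate cases $\mathbf M_m=0$ and $\mathbf M_m$ already nilpotent are correct but not essential; they merely economize on factors.
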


\begin{proof}
The matrix ${\bf M}_{1}$ can be expressed as a linear combination of the following 
nilpotent matrices:
\eq
\begin{split}
\left(\begin{array}{ccc}0 & 1 & 0 \\0 & 0 & 0 \\0 & 0 & 0\end{array}\right),\quad
\left(\begin{array}{ccc}0 & 0 & 1 \\0 & 0 & 0 \\0 & 0 & 0\end{array}\right)&,\quad 
\left(\begin{array}{ccc}0 & 0 & 0 \\0 & 0 & 1 \\0 & 0 & 0\end{array}\right),\quad
\left(\begin{array}{ccc}0 & 0 & 0 \\0 & 0 & 0 \\1 & 0 & 0\end{array}\right),
\\
\left(\begin{array}{ccc}0 & 0 & 0 \\0 & 0 & 0 \\0 & 1 & 0\end{array}\right),\quad
\left(\begin{array}{ccc}0 & 0 & 0 \\1 & 0 & 0 \\0 & 0 & 0\end{array}\right),\quad
& \left(\begin{array}{ccc}1 & 1 & 0 \\-1 & -1 & 0 \\0 & 0 & 0\end{array}\right),\quad
\left(\begin{array}{ccc}0 & 0 & 0 \\0 & 1 & 1 \\0 & -1 & -1\end{array}\right).
\end{split}
\endeq
We label these basis elements as ${\bf N}_{j}$ with $j=1,\dots,8$
so that 
\begin{equation}
{\bf M}_{1}=\sum_{j=1}^{8} c_{j}{\bf N}_{j}
\end{equation}
for some constants $\{c_{j}\}$.  Now we may decompose ${\bf M}(z)$ as
\begin{equation}
{\bf M}(z)=\left(\prod_{j=1}^{8}\left({\bf I}+\frac{c_{j}{\bf N}_{j}}{z}\right)\right)\widetilde{\bf M}(z).
\end{equation}
The factor
 $\widetilde{\bf M}(z)$ does not have a $z^{-1}$ term 
in its expansion:
\begin{equation}
\widetilde{\bf M}(z)={\bf I}+\frac{\widetilde{\bf M}_{2}}{z^{2}}+\cdots
\end{equation}
The matrix $\widetilde{\bf M}(z)$ is still unimodular and hence $\widetilde {\bf M}_2$ is traceless and can be decomposed similarly as before
\begin{equation}
\widetilde{\bf M}(z)=\left(\prod_{j=1}^{8}\left({\bf I}+\frac{\widetilde c_{j}{\bf N}_{j}}{z^{2}}\right)\right)\,\widetilde{\widetilde{\bf M}}(z).
\end{equation}
One can iterate this procedure to obtain the decomposition to any arbitrary 
order. 
\end{proof}

 We can thus apply Proposition \ref{theoremSch} to each entry in 
${\bf L}_{(1)}(\zeta)$ 
in \eqref{L1} to remove all error terms up to 
$\mathcal{O}(1/n^{2/3})$.  Let ${\bf T}(\zeta)$ be the 
appropriate transform from ${\bf L}_{(1)}(\zeta)$ to ${\bf L}_{(2)}(\zeta)$:  
\eq
{\bf L}_{(2)}(\zeta):={\bf L}_{(1)}(\zeta) {\bf T}(\zeta), \quad \zeta\in\mathbb{D}_c.
\endeq
The matrix ${\bf T}(\zeta)$ is chosen so that for $z$ on the boundary of 
$\mathbb{D}_c$ 
(where $\zeta=\mathcal{O}(n^{1/6})$), 
\eq
\label{L2-bound}
{\bf L}_{(2)}(\zeta(z)) = {\bf I} + \mathcal{O}\left(\frac{1}{n^{2/3}} \right), \quad z\in\partial\mathbb{D}_c.
\endeq
Let $p$ be the order of the highest-order 
pole in $\zeta$ which will need to be removed.
Then Proposition \ref{theoremSch} shows there are unique 
$3\times 3$ matrices ${\bf F}_{(2,1)},...,{\bf F}_{(2,p)}$, independent 
of $z$, such that
\eq
{\bf H}_{(2)}(z) := \left({\bf I}+\sum_{m=1}^p\frac{{\bf F}_{(2,m)}}{(z-\beta)^m} \right) {\bf H}_{(1)}(z) {\bf D}(z)^{-1} {\bf T}(\zeta) {\bf D}(z)
\endeq
is holomorphic for $z\in{\mathbb D}_c$.  
In addition,
\eq
\label{H2-n-decay}
{\bf H}_{(2)}(z;n) - \lim_{n\to\infty}{\bf H}_{(2)}(z;n) = \mathcal{O}\left(\frac{r}{n^{1/3}}\right)
\endeq
and
\eq\label{F2m}
{\bf F}_{(2,m)} = \mathcal{O}\left(\frac{r}{n^{1/3}}\right),
\endeq
where this $\mathcal{O}(r/n^{1/3})$ error comes from removing terms in 
the (12) entry of ${\bf L}_{(1)}(\zeta)$.
Now we can define
\eq \label{crit-psi-2}
{\bf \Psi}_{(2)}^\infty(z) := \begin{cases} {\bf R}(z) {\bf H}_{(0)}(z){\bf S}(\zeta) & z\notin(\mathbb{D}_c\cup\mathbb{D}_{\mbox{Ai}}), \\ {\bf H}_{(2)}(z){\bf D}(z)^{-1}{\bf L}_{(0)}(\zeta){\bf S}(\zeta){\bf D}(z) & z\in\mathbb{D}_c, \\ {\bf R}(z) {\bf P}_{\mbox{Ai}}(z) & z\in\mathbb{D}_{\mbox{Ai}}, \end{cases}
\endeq
where 
\begin{equation}
\label{R-def}
{\bf R}(z) := \left({\bf I} + \sum_{m=1}^p \frac{ {\bf F}_{(2,m)}}{(z-\beta)^m} \right)\left({\bf I} + \frac{ {\bf F}_{(1)}}{z-\beta}\right).
\end{equation}

The new parametrix ${\bf \Psi}_{(2)}^\infty(z)$ has the same jumps as
${\bf W}(z)$ inside ${\mathbb D}_c$.  Define the new error matrix by
\eq
\label{E2}
{\bf E}_{(2)}(z) := {\bf W}(z)\left({\bf \Psi}_{(2)}^\infty(z)\right)^{-1}
\endeq
and note that its jump for $z\in\partial\mathbb{D}_c$ is
\eq \label{v2-jump-dc}
\begin{split}
{\bf V}_{(2)}^{({\bf E})}(z) & = {\bf H}_{(2)}(z){\bf D}(z)^{-1}{\bf L}_{(0)}(\zeta){\bf D}(z){\bf H}_{(0)}(z)^{-1}{\bf R}(z)^{-1} \\
   & = {\bf H}_{(2)}(z){\bf D}(z)^{-1}{\bf L}_{(2)}(\zeta){\bf D}(z){\bf H}_{(2)}(z)^{-1}.
\end{split}
\endeq

\subsection{Global Error Computation}

We have defined the global parametrix after two Schlesinger transforms as ${\bf \Psi}_{(2)}^{\infty}(z)$ in (\ref{crit-psi-2}) and the resulting error matrix ${\bf E}_{(2)}(z)$ by \eqref{E2}.   
This matrix satisfies a Riemann-Hilbert problem with jumps across the contours pictured in Figure \ref{fig_crit} and the boundaries of $\mathbb{D}_{\mbox{Ai}}$ and $\mathbb{D}_c$.

Explicitly the jumps on these contours are:
\begin{itemize}

\item For $z$ on the contours outside of the disks $\mathbb{D}_{\mbox{Ai}}$ and $\mathbb{D}_c$, the jump of ${\bf E}_{(2)}$ is
\begin{equation}  \label{crit-version-63}
{\bf V}_{(2)}^{\bf (E)}(z) = {\bf \Psi}_{(2)}^\infty(z) {\bf V}^{\bf (A)}(z) \left( {\bf \Psi}_{(2)}^{\infty}(z) \right)^{-1} .
\end{equation}
Here we define the notation
\begin{align}\nonumber
{\bf V}^{\bf (A)}&:=(\text{Jump of ${\bf \Psi}^\infty_{(0)}$})^{-1}(\text{Jump of ${\bf W}$})
\\&=\left\{\begin{array}{l}
\begin{array}{ccc}\vspace{0.3cm}
\bpm 1&0&0\\0&1&-e^{n{\cal P}_3(z)}\\0&0&1\epm,&
\bpm 1 & 0 & 0 \\ e^{-n{\cal P}_1(z)} & 1 & 0 \\ 0 & 0 & 1 \epm,&
\bpm 0&e^{n({\rm Re}{\cal P}_1(z)+i\sigma)}&0\\0&1&0\\0&0&1\epm,
\\ 
L=\partial\Omega_\text{out}\cap\partial\Omega_L,&
\partial\Omega_\text{lens}\cap\partial\Omega_L,&
\partial\Omega_L^+\cap\partial\Omega_L^-,
\end{array} 
\\ \\
\begin{array}{c}\vspace{0.3cm}
\bpm 1 & e^{n{\cal P}_1(z)} & e^{n{\cal P}_2(z)} \\ 0 & 1 & 0 \\ 0 & 0 & 1 \epm \\
\partial\Omega_\text{out}^+\cap\partial\Omega_\text{out}^-
\end{array} .
\end{array}\right. 
\end{align}

\item For $z$ on the boundary of $\mathbb{D}_c$, the jump of ${\bf E}_{(2)}$ is given in formula (\ref{v2-jump-dc}).
\item For $z$ on the boundary of $\mathbb{D}_{\mbox{Ai}}$, the jump of ${\bf E}_{(2)}$ is
\begin{equation} \label{crit-135}
{\bf V}_{(2)}^{\bf (E)}(z) = {\bf R}(z) {\bf H}_0(z) {\bf S}(\zeta) \left( {\bf P}_{\mbox{Ai}}(z) \right)^{-1} {\bf R}(z)^{-1} .
\end{equation}
\item On the bands outside of the disks $\mathbb{D}_{\mbox{Ai}}$ and $\mathbb{D}_c$, $ {\bf V}_{(2)}^{\bf (E)}(z) \equiv {\bf I}$.  Also, on contours inside $\mathbb{D}_{\mbox{Ai}}$ and $\mathbb{D}_c$, $ {\bf V}_{(2)}^{\bf (E)}(z) \equiv {\bf I}$.
\end{itemize}

The next result follows from the definition of $\mathfrak{g}(z;\kappa)$ in 
\eqref{mathfrak-g}.
\begin{lemma} \label{crit-7.2}
For sufficiently small $\kappa$ we have the estimate below, uniformly in $z$ over compact sets bounded away from the turning points  $\alpha_j$, $\beta_j$, 
$\alpha$, and $\beta$:
\begin{equation}
\mathfrak{g}(z; \kappa) = g(z) + \mathcal{O}\left(\kappa\right).
\end{equation}
\end{lemma}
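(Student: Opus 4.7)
The plan is to use the explicit integral representation \eqref{mathfrak-g} of $\mathfrak{g}(z;\kappa)$, combined with Kuijlaars' Theorem 1.3 (already cited in the excerpt), which guarantees that for regular real-analytic $V$ both the band edges and the equilibrium density $\rho_{\min}(\cdot;\kappa)$ depend real-analytically on $\kappa$ near $\kappa=0$.

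First I would write the difference as
\begin{equation*}
\mathfrak{g}(z;\kappa)-g(z)=\Bigl(\tfrac{1}{1-\kappa/2}-1\Bigr)\int_{\mathbb R}\log(z-s)\,\rho_{\min}(s;\kappa)\,ds+\int_{\mathbb R}\log(z-s)\bigl[\rho_{\min}(s;\kappa)-\rho_{\min}(s;0)\bigr]\,ds,
\end{equation*}
with the usual boundary-value interpretation of $\log(z-s)$ when $z$ lies on a band. The first term is immediately $\mathcal{O}(\kappa)$: the prefactor is $\mathcal{O}(\kappa)$, and since by Kuijlaars the support of $\rho_{\min}(\cdot;\kappa)$ remains in a fixed compact interval for $\kappa$ small, the integral factor is uniformly bounded in $z\in K$ (where $K$ is any compact set bounded away from the turning points).

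For the second term I would split the domain of integration into (i) the portion of the support bounded away from all band edges and (ii) a neighborhood of width $\mathcal{O}(\kappa)$ around each moving edge. On (i), the joint real-analyticity of $(s,\kappa)\mapsto\rho_{\min}(s;\kappa)$ gives $\rho_{\min}(s;\kappa)-\rho_{\min}(s;0)=\mathcal{O}(\kappa)$ pointwise, and $\log(z-s)$ is uniformly bounded for $s$ in the reduced support and $z\in K$, yielding an $\mathcal{O}(\kappa)$ contribution. On (ii), both $\rho_{\min}(\cdot;\kappa)$ and $\rho_{\min}(\cdot;0)$ are bounded by $\mathcal{O}(\kappa^{1/2})$ (square-root vanishing at the edge) over an interval of length $\mathcal{O}(\kappa)$, so their difference integrated against the bounded factor $\log(z-s)$ contributes $\mathcal{O}(\kappa^{3/2})$, which is absorbed into the $\mathcal{O}(\kappa)$ bound.

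The main technical obstacle is the bookkeeping near the moving endpoints: the density is only $C^{1/2}$ there and the edge itself shifts by $\mathcal{O}(\kappa)$, so one must be slightly careful to obtain a genuine $\mathcal{O}(\kappa)$ bound rather than $\mathcal{O}(\kappa^{1/2})$. A cleaner alternative that sidesteps this altogether is to note that $\mathfrak{g}(z;\kappa)$ is characterized as the unique solution of a scalar Riemann--Hilbert problem whose data (location of the cut, normalization, and jumps) depend real-analytically on $\kappa$ by Kuijlaars' theorem; standard Riemann--Hilbert perturbation arguments then show that $\kappa\mapsto \mathfrak{g}(z;\kappa)$ is itself real-analytic, uniformly on compact sets of $z$ bounded away from the turning points, and Taylor expansion in $\kappa$ at $\kappa=0$ yields the claim.
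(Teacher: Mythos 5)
The paper itself provides no proof of this lemma; it is stated immediately after the remark ``The next result follows from the definition of $\mathfrak{g}(z;\kappa)$,'' with the heavy lifting implicitly delegated to Kuijlaars--McLaughlin Theorem~1.3. Your write-up is therefore strictly more detailed than what the paper offers, and the overall strategy (decompose into the prefactor correction plus the density difference, control the latter via the analyticity of the band edges) is the right one. However, the domain splitting in your treatment of the second term has a genuine bookkeeping gap. Regions~(i) and~(ii) do not cover the support: between the $\mathcal{O}(\kappa)$-wide collar in~(ii) and the fixed-$\delta$ region in~(i) there is a ``middle zone'' where the pointwise bound $\rho_{\min}(s;\kappa)-\rho_{\min}(s;0)=\mathcal{O}(\kappa)$ does not hold uniformly. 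Indeed, writing the density near the right edge as $\rho(s;\kappa)=c(\kappa)\sqrt{\beta(\kappa)-s}\,(1+\mathcal{O}(\beta(\kappa)-s))$ and differencing, the dominant contribution is
\begin{equation*}
c(0)\bigl[\sqrt{\beta(\kappa)-s}-\sqrt{\beta(0)-s}\bigr]=\mathcal{O}\!\left(\frac{\kappa}{\sqrt{d}}\right),\qquad d:=\mathrm{dist}(s,\beta(0)),
\end{equation*}
which is $\mathcal{O}(\sqrt{\kappa})$, not $\mathcal{O}(\kappa)$, when $d\sim\kappa$, and deteriorates continuously as $d\downarrow\kappa$. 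This is because $\rho_{\min}$ is not jointly analytic in $(s,\kappa)$ near a moving edge; the square-root singularity breaks that. The conclusion is rescued by the fact that $\int_{C\kappa}^{\delta}\kappa\,d^{-1/2}\,\mathrm{d}d=\mathcal{O}(\kappa)$, so the \emph{integrated} contribution of the middle zone against the bounded factor $\log|z-s|$ is still $\mathcal{O}(\kappa)$; but this estimate needs to be stated, since the pointwise claim you invoked for region~(i) is the step that would be challenged. Your proposed alternative---treating $\kappa\mapsto\mathfrak{g}(\cdot;\kappa)$ as the solution of a scalar RH problem whose data (cut locations, normalization, jump) depend real-analytically on $\kappa$---cleanly avoids working with the density at all and is closer to the spirit of the paper's citation of Kuijlaars~\cite{Kuijlaars:2000}, but as written it is only a sketch; one would still need to record that the perturbation of a square-root RH problem with analytically varying branch points yields an analytic family of solutions uniformly on compacts away from the branch points.
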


\begin{lemma} \label{crit-7.3}
In the critical regime, the inner and outer lenses have been chosen so that 
\begin{enumerate}[(a)]

\item On 
{$L$} outside of the disks $\mathbb{D}_{\mbox{Ai}}$ and $\mathbb{D}(\beta,\delta)$:  The real part of $P_3$ is 
{negative} and bounded away from zero.

\item On 
{$\partial\Omega_{\text{lens}}\cap\partial\Omega_L$} outside of the disks $\mathbb{D}_{\mbox{Ai}}$ and $\mathbb{D}(\beta, \delta)$:  The real part of $P_1$ is positive and bounded away from zero.

\item On the real axis, 
away from the bands and outside of the disks $\mathbb{D}_{\mbox{Ai}}$ and $\mathbb{D}(\beta, \delta)$:
The real part of $P_1$ is negative and bounded away from zero.

\item On the real axis, to the right of $\beta$, and outside of the disk $\mathbb{D}(\beta, \delta)$:  The real part of $P_2$ is negative and bounded away from zero.

\end{enumerate}
\end{lemma}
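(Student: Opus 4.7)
The plan is to verify each of the four statements by reducing to the variational inequality, the critical-regime hypothesis, or the already-proven Lemma \ref{lem_steepest}. In each case the strict pointwise inequality is immediate; the improvement to ``bounded away from zero'' follows by a standard compactness/growth argument on the relevant contour minus the prescribed disks.

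For (c), the inequality $\text{Re}(P_1) = \text{Re}(-V + 2g + l_1) < 0$ on $\mathbb{R}\setminus \text{supp}\,\rho_{\min}$ is exactly the strict form of the variational inequality \eqref{varprob} at $\kappa=0$, which holds by the regularity of $V$. Assumption (iv) together with the asymptotic $g(x) \sim \log|x|$ forces $P_1(x) \to -\infty$ as $|x|\to\infty$, so continuity and the removal of the small disks around the $\alpha_j, \beta_j, \alpha, \beta$ yield the uniform bound. For (b), one opens the lens $\partial\Omega_{\text{lens}}^{\pm}$ in the standard way inside the region $\{\text{Re}(P_1)>0\}$. This is possible because analyticity of $V$ extends $P_1$ into a complex neighborhood of each band, and because the Cauchy--Riemann equations combined with $\partial_y \text{Re}(P_1)|_{y=0^+} = 2\pi\rho_{\min}(x)$ on the interior of each band force a strictly positive normal derivative. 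Uniform positivity away from the endpoints then follows by compactness.

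For (d), we use the critical-regime Definition \ref{critical}: with $a=a_c$ we have $P_2'(\beta)=0$, and our choice of $l_3$ gives $P_2(\beta)=0$. The hypothesis states $P_2(x)<0$ for all $x>\beta$, while the second half of assumption (iv) forces $P_2(x)\to -\infty$ as $x\to +\infty$; continuity and the removal of $\mathbb{D}(\beta,\delta)$ give $\sup_{x\geq \beta+\delta}P_2(x) < 0$. For (a), we invoke Lemma \ref{lem_steepest} at $\kappa=0$ (in which case the $\log$ correction disappears): $\text{Re}(-P_3)$ strictly increases along $L$ in either direction from $\beta$, and since $P_3(\beta) = -P_1(\beta) + P_2(\beta) = 0$ (using $P_1(\beta)=0$ from the behavior $P_1(z) \sim -C(z-\beta)^{3/2}$) we obtain $\text{Re}(P_3) < 0$ on $L\setminus\{\beta\}$. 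Compactness of $L\setminus \mathbb{D}(\beta,\delta)$ supplies the uniform bound.

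The one nontrivial point is (b): verifying that one can open a lens that simultaneously lies in the region $\{\text{Re}(P_1)>0\}$, stays disjoint from $L$ outside $\mathbb{D}(\beta,\delta)$, and remains compatible with the Airy-parametrix disks $\mathbb{D}_{\mbox{Ai}}$ at the other band endpoints. This is the standard consequence of analyticity of $V$ combined with regularity of the equilibrium measure, and once this geometric configuration is fixed every other assertion in the lemma is a routine consequence of the variational problem and the explicit definitions of $P_1, P_2, P_3$.
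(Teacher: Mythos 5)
Your proof is correct and takes essentially the same route as the paper's. For part (a), both you and the paper rely on the monotonicity established in Lemma \ref{lem_steepest} (the paper re-derives the decrease of $\mathrm{Re}\,P_3$ along $L$ from $\mathrm{Re}(az)$ decreasing and $\mathrm{Re}\,g$ increasing, then cites that lemma) together with $P_3(\beta)=0$; for parts (b) and (c) both appeal to the standard orthogonal-polynomial analysis, which you unpack explicitly via the strict variational inequality, the Cauchy--Riemann relation $\partial_y\mathrm{Re}\,P_1|_{y=0^+}=2\pi\rho_{\min}(x)$ on band interiors, and assumption (iv) for the tails; for part (d) both use the critical-regime hypothesis $P_2(x)<P_2(\beta)=0$ for $x>\beta$ plus growth at infinity. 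The paper's proof is terser, deferring (b) and (c) entirely to \cite{Deift:1998-book}, whereas you fill in the detail; your remark that the nontrivial geometric issue is arranging the lens inside $\{\mathrm{Re}\,P_1>0\}$ is apt and is precisely what that reference handles.
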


\begin{proof}
Statements (b) and (c) follow from the analysis of the Riemann-Hilbert problem for the standard orthogonal polynomials (see, for instance, \cite{Deift:1998-book}).  Statement (d) follows from the definition of the critical regime, that $P_2(\beta)$ is a unique global maximum for $P_2(x)$ in $\mathbb{R} \setminus [\alpha, \beta]$ (see Definition \ref{critical}).    
To prove (a), recall that  $ P_3(z) = a z- g(z)- \ell_1 + \ell_2$, and that $P_3(\beta) = 0$.  Along the contour $L$ it is clear that $\mbox{Re}\left[ 
  a z \right] $ is 
  {de}creasing.  Likewise 
\begin{equation}
\mbox{Re}\left[ g(z) \right] = \int_{\mathbb{R}} \log|z-s| \rho_{\mbox{\scriptsize min}}(s) ds 
\end{equation}
increases along $L$, as $|z-s|$ is increasing along this contour for each 
$s \in \text{supp}(\rho_{\mbox{\scriptsize min}})$.  See Lemma \ref{lem_steepest}.  
\end{proof}

We will use the following data about the functions $\mathcal{P}_1(z)$, $\mathcal{P}_2(z)$, and $\mathcal{P}_3(z) $ to control the jumps of the error matrices on the contours outside of the disks.
\begin{lemma} \label{crit-lem-1}
For $\kappa$ sufficiently small:
\begin{enumerate}[(a)]

\item 
There exists $b>0$ such that on 
{$L$} outside of the disk $\mathbb{D}_c$:
\begin{equation}
\mbox{Re}\left[ \mathcal{P}_3\right] 
{<-} b n^{-3/4} .
\end{equation}

\item There exists $b>0$ such that on 
{$\partial\Omega_{\text{lens}}\cap\partial\Omega_L$}  outside of the disks $\mathbb{D}_c$ and $\mathbb{D}_{\mbox{Ai}}$: 
\begin{equation}
\mbox{Re}\left[ \mathcal{P}_1\right] > b n^{-3/4} .
\end{equation}

\item  On the real axis, 
off the bands and outside ${\mathbb D}(\beta,\delta)$ and ${\mathbb D}_{\mbox{Ai}}$:  The real part of $\mathcal{P}_1$ is negative and bounded away from zero. 

\item 
There exists $b>0$ such that on the real axis, to the right of $\beta$, and outside of the disk $\mathbb{D}_c$:
\begin{equation} \label{crit-2c}
 \mbox{Re}\left[ \mathcal{P}_1 \right] < -b n^{-3/4} .
\end{equation}

\item 
There exists a $b>0$ such that on the real axis, to the right of $\beta$, and outside of the disk $\mathbb{D}_c$:
\begin{equation} \label{crit-4c}
\mbox{Re}\left[ \mathcal{P}_2\right] < - b n^{-3/4} .
\end{equation}

\end{enumerate}
\end{lemma}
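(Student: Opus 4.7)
The plan is to split each contour into a far-field piece outside the fixed disk $\mathbb{D}(\beta,\delta)$ and a near-field piece in the annulus $\mathbb{D}(\beta,\delta)\setminus\mathbb{D}_c$, reduce the far-field to the unperturbed statements of Lemma \ref{crit-7.3}, and handle the near-field by local expansion in the scaling coordinate $\zeta$.

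\emph{Far field.} Outside $\mathbb{D}(\beta,\delta)$, $|\log(z-\beta)|$ is bounded, so from the definitions \eqref{141} together with Lemma \ref{crit-7.2} and the smooth $\kappa$-dependence of $\beta(\kappa)$, $\ell_1(\kappa)$, $\ell_3(\kappa)$ guaranteed by Kuijlaars' theorem, one has uniformly
\begin{equation}
\mathcal{P}_j(z;\kappa) = P_j(z) + \mathcal{O}(\kappa) = P_j(z) + \mathcal{O}(n^{\gamma-1}).
\end{equation}
Since Lemma \ref{crit-7.3} bounds each relevant $\mbox{Re}\,P_j$ away from zero by a positive constant with the correct sign, and $n^{\gamma-1}\ll n^{-3/4}$ for $\gamma<1/12$, all of (a)--(e) transfer immediately to this region; in particular (c), which is a pure far-field statement, is already complete.

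\emph{Near field.} The central identity is $\mathcal{P}_1(z) = -\tfrac{4}{3n}\zeta^{3/2}$ with $\zeta = n^{2/3}c_1(\kappa)(z-\beta(\kappa))(1+\mathcal{O}(z-\beta))$, which gives
\begin{equation}
\mathcal{P}_1(z) = -\tfrac{4}{3}c_1(\kappa)^{3/2}(z-\beta)^{3/2}\bigl(1+\mathcal{O}(z-\beta)\bigr).
\end{equation}
For (d), on the real axis to the right of $\beta$ we have $(z-\beta)^{3/2}>0$, so $\mbox{Re}\,\mathcal{P}_1\leq -b|z-\beta|^{3/2}\leq -bn^{-3/4}$ throughout the annulus. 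For (b), along $\partial\Omega_{\text{lens}}\cap\partial\Omega_L$ the lens is opened at the standard angle so that $\arg(z-\beta)$ lies in a range making $(z-\beta)^{3/2}$ have negative real part of order $|z-\beta|^{3/2}$, which yields $\mbox{Re}\,\mathcal{P}_1\geq bn^{-3/4}$.

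For (a) and (e), write $\mathcal{P}_3 = -\tfrac{1}{2}\mathcal{P}_1 + h + \tfrac{3\kappa}{2}\log\zeta$ and $\mathcal{P}_2 = \tfrac{1}{2}\mathcal{P}_1 + h + \tfrac{3\kappa}{2}\log\zeta$. From \eqref{hz1}, \eqref{elltilde}, and Stirling's formula,
\begin{equation}
\mbox{Re}\,h(\beta(\kappa);\kappa) = \mathcal{O}\!\left(\tfrac{r\log r}{n}\right) = \mathcal{O}(n^{\gamma-1}\log n),
\end{equation}
the linear-in-$(z-\beta)$ part of $h$ equals $(a-a_c(\kappa))(z-\beta) = \mathcal{O}(n^{-1/3}|z-\beta|)$, and $|\tfrac{3\kappa}{2}\log\zeta| = \mathcal{O}(\kappa\log n) = \mathcal{O}(n^{\gamma-1}\log n)$ throughout the annulus. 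On $L$ the tangent direction at $\beta$ is vertical, so $\arg(z-\beta)\approx\pm\pi/2$ and $\mbox{Re}(-\tfrac{1}{2}\mathcal{P}_1)\approx -\tfrac{\sqrt{2}}{3}c_1^{3/2}|z-\beta|^{3/2}$, while the linear correction is purely imaginary to leading order; for $|z-\beta|\geq n^{-1/2}$ a direct power comparison shows $|z-\beta|^{3/2}$ strictly dominates both $n^{-1/3}|z-\beta|$ and $n^{\gamma-1}\log n$, yielding (a). Statement (e) follows from (d) by the identical bookkeeping, noting that on the positive real axis $\frac{3\kappa}{2}\log\zeta$ is real so only its size matters. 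The main obstacle is precisely this inner-rim dominance at $\partial\mathbb{D}_c$, where the leading term is of the same order $n^{-3/4}$ as the claimed bound and one must carefully track the logarithmic factors coming from $\log\Gamma(r)$ and $\log\zeta$; the hypothesis $\gamma<1/12$ provides a comfortable margin (in fact any $\gamma<1/4$ would suffice for this lemma alone), so no sharp estimate is needed.
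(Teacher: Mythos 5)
Your proof is correct and takes essentially the same route as the paper's: far field via Lemma \ref{crit-7.2} together with the unperturbed bounds of Lemma \ref{crit-7.3}, near field via the Taylor expansions of $h$ and $\zeta$ around $\beta$ in the annulus $\mathbb{D}(\beta,\delta)\setminus\mathbb{D}_c$, showing that the $(z-\beta)^{3/2}$ term dominates the linear-in-$(z-\beta)$, $\log\zeta$, and $h(\beta)$ contributions (the paper additionally invokes the monotonicity along $L$ from Lemma \ref{lem_steepest}, but your direct estimate serves the same purpose). One small imprecision: because of the $\tfrac{1}{n}\log\Gamma(r)$ and $\tfrac{3\kappa}{2}\log(n^{-2/3}c_1^{-1})$ terms in $\ell_3(\kappa)$ from \eqref{elltilde}, the far-field discrepancy for $\mathcal{P}_2,\mathcal{P}_3$ is $\mathcal{O}(\kappa\log n)=\mathcal{O}(n^{\gamma-1}\log n)$ rather than $\mathcal{O}(\kappa)$, but since this is still $\ll n^{-3/4}$ the conclusion is unaffected.
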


\begin{proof}

Part (c) follows directly from Lemmas \ref{crit-7.2} and \ref{crit-7.3} (c).  
To prove (a) we divide the contour into two parts:  one is the section of 
{$L$} outside of $\mathbb{D}(\beta, \delta)$ and the other is the section of 
{$L$} inside of $\mathbb{D}(\beta, \delta) \setminus \mathbb{D}_c$.  We have choosen $L$ such that $\mbox{Re}\left( \mathcal{P}_3(z) 
{-} \frac{3\kappa}{2} \log(z-\beta)\right) $ is 
{de}creasing along $L$ (see Lemma \ref{lem_steepest}).
The proof of Lemma \ref{lem_steepest} can be modified to show that $\mbox{Re}\left( P_3(z) \right) $ is also 
{de}creasing along $L$.  When $z\notin \mathbb{D}(\beta, \delta)$, by Lemma \ref{crit-7.2} $\mbox{Re}\left( \mathcal{P}_3(z) 
{-} \frac{3\kappa}{2} \log(z-\beta)\right) $ converges to $P_3(z)$ for $\kappa \to 0$, hence by Lemma \ref{crit-7.3} (a) there is a $\kappa$ small enough so that $\mbox{Re}\left( \mathcal{P}_3(z) \right)$ is 
{de}creasing along $L$ and is therefore 
{negative} on $L \setminus \mathbb{D}_c$, and bounded away from zero on $L \setminus \mathbb{D}(\beta, \delta)$.  

From (\ref{hdef}) one can see that 
\begin{equation}
\mathcal{P}_3(z) = -\frac{1}{2} \mathcal{P}_1(z) + h(z)+ \frac{3\kappa}{2} \log\zeta = h(\beta)+ \frac{2}{3} c_1^{3/2} (z-\beta)^{3/2} +\frac{\tau}{n^{1/3}} c_1 (z-\beta) +\frac{3\kappa}{2} \log\zeta + \mathcal{O}\left( (z-\beta)^2\right).
\end{equation}
We start by noting that $
 h(\beta)
 {<} 0 $ for $n$ sufficiently large so this term will only improve our bound.
Next for $z\in L \cap (\mathbb{D}(\beta, \delta) \setminus \mathbb{D}_c) $, we will show that, for $n$ large enough, the 
$\frac{2}{3} c_1^{3/2}(z-\beta)^{3/2}$ term dominates the other two.  
We have  
\begin{equation}
\left| \frac{ -\frac{\tau}{n^{1/3}} c_1 (z-\beta) }{ - \frac{2}{3} c_1^{3/2} (z-\beta)^{3/2} }\right| = 
\frac{3 \tau}{2 c_1^{1/2}} \frac{1}{n^{1/3} |z-\beta|^{1/2}} 
\leq \frac{3\tau}{2 c_1^{1/2}} \frac{1}{n^{1/12}},
\end{equation}
and
\begin{equation}
\left| \frac{ - \frac{3\kappa}{2} \log\zeta }{-\frac{2}{3} c_1^{3/2} (z-\beta)^{3/2} } \right| = 
\frac{9\kappa}{4c_1^{3/2} } \frac{ |\log\zeta|}{|z-\beta|^{3/2}} 
\leq \frac{3c}{8 c_1^{3/2}} n^{\gamma-\frac{1}{4}} \left[ \log(n) + \mathcal{O}\left( n^{-1/2}\right) \right]
\end{equation}
for some constant $c>0$.

To conclude the proof of (a) we note that there is a $b>0$ such that $\mbox{Re}\left[
\frac{2}{3} c_1^{3/2} (z-\beta)^{3/2} \right] 
{<} b n^{-3/4} $ provided that the segments of $L$ lie in the sector $\pi/3 < \theta < 5 \pi / 3$.

To prove (b) we divide the contour into two parts:  one is the sections of the inner lenses outside of $\mathbb{D}_{\mbox{Ai}}$ and $\mathbb{D}(\beta, \delta)$, the other is the sections of the inner lenses inside of $\mathbb{D}(\beta,\delta)\setminus \mathbb{D}_c $.  That $\mbox{Re}\left[ \mathcal{P}_1\right] $ is positive and bounded away from zero follows from Lemmas \ref{crit-7.2} and \ref{crit-7.3}(b).  Inside $\mathbb{D}(\beta, \delta) \setminus \mathbb{D}_c$ we have 
\begin{equation} \label{crit-7.3-bform}
\mathcal{P}_1(z) = -\frac{4}{3} c_1^{3/2} (z-\beta)^{3/2} + \mathcal{O}\left( (z-\beta)^2\right), 
\end{equation}
and the result follows as above.

To prove (d) and (e) we divide the contour into two parts:  one is the interval $[\beta+\delta, \infty)$ the other is $[\beta+ n^{-1/2}, \beta+\delta)$.  
That $\mbox{Re}\left[ \mathcal{P}_1\right] $ and $\mbox{Re}\left[ \mathcal{P}_2\right] $ are negative and bounded away from zero on the first interval follows from Lemmas \ref{crit-7.2} and \ref{crit-7.3}(c)--(d).  
Within the interval $[\beta+n^{-1/2}, \beta+\delta)$, (\ref{crit-7.3-bform}) gives the result for (d), and for (e) we have 
\begin{equation}
\mathcal{P}_2(z) =  h(\beta) -\frac{2}{3} c_1^{3/2} (z-\beta)^{3/2} + \frac{\tau}{n^{1/3}} c_1 (z-\beta) + \frac{3\kappa}{2}\log\zeta  + \mathcal{O}\left( (z-\beta)^2\right).
\end{equation}
Again, $h(\beta) < 0 $ for $n$ sufficiently large so this term improves the 
bound.
For $z \in [\beta+n^{-1/2}, \beta+\delta)$, we will show that for $n$ large enough the $-\frac{2}{3} c_1^{3/2} (z-\beta)^{3/2}$ term dominates the other two.  We have 
\begin{equation}
\left| \frac{\frac{\tau}{n^{1/3}} c_1 (z-\beta)}{-\frac{2}{3} c_1^{3/2} (z-\beta)^{3/2}} \right| = 
\frac{3\tau}{2 c_1^{1/2}} \frac{1}{n^{1/3}|z-\beta|^{1/2}} \leq \frac{3\tau}{2 c_1^{1/2}} \frac{1}{n^{1/12}},
\end{equation}
and 
\begin{equation}
\left| \frac{\frac{3\kappa}{2}\log\zeta}{-\frac{2}{3} c_1^{3/2} (z-\beta)^{3/2}} 
\right| = 
\frac{9\kappa}{4 c_1^{3/2}} \frac{ | \log\zeta | }{|z-\beta|^{3/2} }
\leq 
\frac{3c}{8 c_1^{3/2}} n^{\gamma-\frac{1}{4}} \left[ \log(n) + \mathcal{O}\left( n^{-1/2} \right) \right]
\end{equation}
for some constant $c>0$.  
To conclude the proof of (e) we note that there is a $b > 0 $ such that 
\newline
$\ds\mbox{Re}\left[ -\frac{2}{3} c_1^{3/2} (z-\beta)^{3/2} \right] < - b n^{-3/4}.$
\end{proof}


We can now give bounds on the jumps ${\bf V}_{(2)}^{\bf (E)}(z)$ of the error problem.
\begin{lemma} \label{crit-error-lem}
In the near-critical regime, for large $n$, 
\begin{enumerate}[(a)]

\item 
Off the boundaries of $\mathbb{D}_c$ and $\mathbb{D}_{\rm{Ai}}$: 
There is a constant $b > 0$ such that 
\begin{equation}
{\bf V}_{(2)}^{\bf (E)}(z) = {\bf I} + \mathcal{O}\left( e^{-bn^{1/4}} \right), \quad 
z \notin (\partial\mathbb{D}_c \cup \partial\mathbb{D}_{\rm{Ai}}).
\end{equation}

\item On the boundary of $\mathbb{D}_c$:
\eq
\label{VE2-bound}
{\bf V}_{(2)}^{\bf(E)}(z) = {\bf I} + \mathcal{O}\left(\frac{1}{n^{2/3}} \right), \quad z\in\partial\mathbb{D}_c.
\endeq

\item On the boundary of $\mathbb{D}_{\rm{Ai}}$:
\eq
\label{V2E-Dalpha}
{\bf V}_{(2)}^{\bf(E)}(z) = {\bf I} + \mathcal{O}\left(\frac{1}{n} \right), \quad z\in\partial\mathbb{D}_{\rm{Ai}}.
\endeq

\end{enumerate}
\end{lemma}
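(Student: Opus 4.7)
The plan is to handle the three parts separately, exploiting the explicit factorizations of ${\bf V}_{(2)}^{\bf(E)}$ on the different pieces of the jump contour.

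For part (a), I would start from the representation \eqref{crit-version-63}, noting that the matrix ${\bf V}^{\bf(A)}(z)$ differs from the identity only by a single exponential entry $e^{\pm n \mathcal{P}_j(z)}$ on each piece of the contour (an $e^{n\mathcal{P}_3}$ on $L$, an $e^{-n\mathcal{P}_1}$ on the lens boundary, an $e^{n\mathcal{P}_1}$ and/or $e^{n\mathcal{P}_2}$ on the appropriate parts of the real axis). Away from $\partial\mathbb{D}_c$ and $\partial\mathbb{D}_{\rm Ai}$, Lemma \ref{crit-lem-1} gives a uniform lower bound of the form $-bn^{-3/4}$ on the relevant real parts, so each offending exponential is $\mathcal{O}(e^{-bn^{1/4}})$. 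Combined with uniform boundedness of ${\bf \Psi}_{(2)}^\infty(z)$ off the disks (which follows from Lemma \ref{H-lemma}, boundedness of ${\bf S}(\zeta)$ away from turning points, and the fact that ${\bf R}(z) = {\bf I} + \mathcal{O}(r/n^{1/3})$ by \eqref{F2m} together with \eqref{H1bound}), this yields the estimate in (a).

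For part (b), the factorization \eqref{v2-jump-dc}
\[
{\bf V}_{(2)}^{({\bf E})}(z) = {\bf H}_{(2)}(z){\bf D}(z)^{-1}{\bf L}_{(2)}(\zeta){\bf D}(z){\bf H}_{(2)}(z)^{-1}
\]
reduces everything to controlling the three conjugating factors. The matrix ${\bf L}_{(2)}(\zeta)$ has been constructed via the sequence of Schlesinger transforms precisely to satisfy ${\bf L}_{(2)}(\zeta) = {\bf I} + \mathcal{O}(n^{-2/3})$ on $\partial\mathbb{D}_c$, i.e.\ the bound \eqref{L2-bound}. The key observation is that both ${\bf D}(z)^{\pm 1}$ and ${\bf H}_{(2)}(z)^{\pm 1}$ are uniformly bounded on $\partial\mathbb{D}_c$: for ${\bf D}(z)$ this comes from the fact that $n(h(z)-h(\beta)-\tau\zeta/n) = n\,\mathcal{O}((z-\beta)^2) = \mathcal{O}(1)$ on $\mathbb{D}_c$ by \eqref{h-behavior}, while boundedness of ${\bf H}_{(2)}$ follows from its holomorphicity together with \eqref{H2-n-decay}. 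Conjugation preserves the $\mathcal{O}(n^{-2/3})$ error, giving \eqref{VE2-bound}.

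For part (c), I would use the representation \eqref{crit-135}. The factor ${\bf R}(z)$, by its definition \eqref{R-def} and the bounds \eqref{F2m} on the ${\bf F}_{(2,m)}$ (and the analogous bound on ${\bf F}_{(1)}$), is ${\bf I}+\mathcal{O}(r/n^{1/3})$ uniformly for $z\in\partial\mathbb{D}_{\rm Ai}$. The heart of the argument is then the standard matching relation ${\bf H}_{(0)}(z){\bf S}_{\rm Ai}(\zeta_{\rm Ai}(z)){\bf P}_{\rm Ai}(z)^{-1} = {\bf I}+\mathcal{O}(1/n)$ at each classical Airy endpoint, which is essentially verbatim from \cite{Bleher:2004b}; a routine check of the matching of the quarter-root singularities using the known asymptotics of the Airy functions yields \eqref{V2E-Dalpha}.

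The main technical obstacle is part (a), not because any single contour estimate is difficult, but because the Schlesinger transform adds the factor ${\bf R}(z)$ outside the disks, and one must verify that ${\bf R}(z)$ is still bounded uniformly in $n$ (this is where the bound \eqref{F2m} of the form $\mathcal{O}(r/n^{1/3})$ is essential — if the Schlesinger correction were not small, conjugation by ${\bf R}$ could in principle spoil the exponential decay on the lens and $L$-contour estimates). Once this is verified, the three parts follow from a straightforward combination of Lemma \ref{crit-lem-1}, the construction of ${\bf L}_{(2)}$, and the standard Airy matching.
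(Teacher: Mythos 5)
Your proposal is correct and follows essentially the same route as the paper: part (a) from \eqref{crit-version-63}, Lemma \ref{crit-lem-1}, and boundedness of ${\bf \Psi}_{(2)}^\infty$; part (b) from the conjugation structure \eqref{v2-jump-dc}, the bound \eqref{L2-bound} on ${\bf L}_{(2)}$, and uniform boundedness of ${\bf D}$ and ${\bf H}_{(2)}$ via \eqref{h-behavior} and \eqref{H2-n-decay}; part (c) from \eqref{crit-135} together with ${\bf R}(z)={\bf I}+\mathcal{O}(r/n^{1/3})$ and the standard $\mathcal{O}(1/n)$ Airy matching. Your additional emphasis on verifying the boundedness of ${\bf R}(z)$ outside the disks is a correct (and useful) fleshing-out of a point the paper leaves implicit.
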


\begin{proof}

Part (a) follows from equation (\ref{crit-version-63}), Lemma \ref{crit-lem-1}, and the boundedness of ${\bf \Psi}_{(2)}(z)$.

Part (b) follows from \eqref{v2-jump-dc} along with \eqref{L2-bound}, 
\eqref{H2-n-decay}, and the uniform boundedness of ${\bf D}(z)$ inside 
$\mathbb{D}_c$ as $n\to\infty$.

For part (c), first recall from the Schlesinger calculations that 
${\bf F}_{(1,m)}=\mathcal{O}(r/n^{1/2})$ and 
${\bf F}_{(2,m)}=\mathcal{O}(r/n^{1/3})$ (as follows from 
\eqref{Fj-bound}, \eqref{H1bound}, and \eqref{F2m}).  Recalling the definition of ${\bf R}(z)$ in 
\eqref{R-def}, we see
that
\eq
{\bf R}(z;n) = {\bf I} + \mathcal{O}\left(\frac{r}{n^{1/3}}\right).
\endeq
Now from (\ref{crit-135}) we have
\begin{equation}
{\bf V}_{(2)}^{\bf (E)}(z) = {\bf R}(z) \left( {\bf I} + \mathcal{O}\left(\frac{1}{n}\right) \right) {\bf R}(z)^{-1},
\end{equation}
from which \eqref{V2E-Dalpha} follows.
\end{proof}

\begin{lemma}
\label{E2-error}
In the critical regime, for $n$ large,
\[ {\bf E}_{(2)}(z) = {\bf I} + \mathcal{O}\left( \frac{1}{n^{2/3}} \right) \]
uniformly in $z$. 
\end{lemma}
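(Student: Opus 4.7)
The plan is a standard small-norm argument for the error Riemann--Hilbert problem, using Lemma \ref{crit-error-lem} as the only input. First I would observe that ${\bf E}_{(2)}(z)$ is analytic off a finite union of contours (the lens contours, the outer contour $L$, the real line outside the bands, and the circles $\partial\mathbb{D}_c$ and $\partial\mathbb{D}_{\rm Ai}$), that it tends to ${\bf I}$ as $z\to\infty$ (inherited from ${\bf W}(z)$ and ${\bf\Psi}_{(2)}^\infty(z)$), and that its boundary values are H\"older continuous since both ${\bf W}$ and ${\bf\Psi}_{(2)}^\infty$ are.

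Next I would assemble the jump bounds. Let $\Sigma^{(E)}$ denote the full contour on which ${\bf E}_{(2)}(z)$ has jumps and set ${\bf V}_{(2)}^{({\bf E})}(z)={\bf I}+\Delta(z)$. Lemma \ref{crit-error-lem} gives, uniformly in $z\in\Sigma^{(E)}$,
\begin{equation}
\|\Delta(z)\|_{L^\infty(\Sigma^{(E)})}=\mathcal O(n^{-2/3}),
\end{equation}
with the worst contribution coming from $\partial\mathbb{D}_c$ (the contributions from $\partial\mathbb{D}_{\rm Ai}$ are $\mathcal O(n^{-1})$ and all other contours contribute exponentially small terms $\mathcal O(e^{-bn^{1/4}})$). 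Because the lens and $L$ contours may be deformed to decay exponentially at infinity and the real-line pieces have $\mathcal P_1,\mathcal P_2$ with exponential decay in $n$ by Lemma \ref{crit-lem-1}(c)--(e), the same bound also holds in $L^1\cap L^2$ on $\Sigma^{(E)}$.

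I would then invoke the standard small-norm theory for Riemann--Hilbert problems (see e.g.\ \cite{Deift:1999a} or the normalization step of \cite{Deift:1998-book}). Writing ${\bf E}_{(2)}={\bf I}+\mathcal{C}_-[\mu\Delta]+{\rm (analogous)}$, where $\mathcal{C}_\pm$ are the Cauchy projection operators on $\Sigma^{(E)}$, the singular integral equation for the density $\mu$ reads $(\mathbf 1-\mathcal{C}_\Delta)\mu={\bf I}$ with $\mathcal{C}_\Delta f:=\mathcal{C}_-(f\Delta)$. The operator norm satisfies
\begin{equation}
\|\mathcal{C}_\Delta\|_{L^2\to L^2}\le\|\mathcal{C}_-\|_{L^2\to L^2}\,\|\Delta\|_{L^\infty}=\mathcal O(n^{-2/3}),
\end{equation}
so $\mathbf 1-\mathcal{C}_\Delta$ is invertible by Neumann series for $n$ large and $\mu-{\bf I}=\mathcal O(n^{-2/3})$ in $L^2(\Sigma^{(E)})$. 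The Plemelj reconstruction formula
\begin{equation}
{\bf E}_{(2)}(z)={\bf I}+\frac1{2\pi i}\int_{\Sigma^{(E)}}\frac{\mu(s)\Delta(s)}{s-z}\,ds
\end{equation}
then yields ${\bf E}_{(2)}(z)-{\bf I}=\mathcal O(n^{-2/3})$ uniformly in $z$ bounded away from $\Sigma^{(E)}$, via Cauchy--Schwarz: $|{\bf E}_{(2)}(z)-{\bf I}|\le\|\mu\|_{L^2}\|\Delta\|_{L^2}\|(s-z)^{-1}\|_{L^2(s\in\Sigma^{(E)})}$.

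The uniform statement across $z$ (including near $\Sigma^{(E)}$) follows because ${\bf E}_{(2)}$ has H\"older continuous boundary values with constants bounded in $n$: away from the contour we use the Cauchy integral bound above, and approaching the contour we use $|{\bf E}_{(2)\pm}-{\bf I}|=|\mathcal{C}_\pm[\mu\Delta]|$ together with the boundedness of $\mathcal{C}_\pm$ on H\"older classes. The only genuine obstacle is checking the $L^2$-norm estimate on the $\partial\mathbb{D}_c$ contribution carefully, since the circle shrinks at rate $n^{-1/2}$; but its length is $\mathcal O(n^{-1/2})$ so the $L^2$ norm of $\Delta$ there is dominated by $\mathcal O(n^{-2/3})\cdot\mathcal O(n^{-1/4})=o(n^{-2/3})$, which only improves the bound. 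This completes the proposed proof.
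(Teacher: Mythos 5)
Your argument is correct and takes essentially the same route as the paper: split the jump contour into the disk boundaries $\partial\mathbb{D}_c\cup\partial\mathbb{D}_{\rm Ai}$ and the remaining contours, obtain $L^2\cap L^\infty$ bounds on ${\bf V}_{(2)}^{({\bf E})}-{\bf I}$ from Lemma \ref{crit-error-lem} with the dominant $\mathcal O(n^{-2/3})$ contribution coming from $\partial\mathbb D_c$, and conclude by the standard small-norm Neumann-series theory. The paper simply cites \cite{Deift:1999b} Section 7.2 and \cite{Ercolani:2003} Section 3.5 for that last step rather than writing out the singular-integral-equation machinery as you do.
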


\begin{proof}
Denote $\Gamma_C := \partial \mathbb{D}_c \cup \partial \mathbb{D}_{\mbox{Ai}} $ and 
$\Gamma_N$ the remaining contours on which the error $E_{(2)}(z)$ has a jump: the inner and outer lenses, and the real axis, outside of the regions $\mathbb{D}_c$ and $\mathbb{D}_{\mbox{Ai}}$.
From Lemma \ref{crit-error-lem}(b)--(c), 
\begin{equation}
{\bf V}_{(2)}^{({\bf E})}(z) = {\bf I} + \mathcal{O}\left( \frac{1}{n^{2/3}}\right), \quad z \in \Gamma_C.
\end{equation}
Then for $n$ sufficiently large there exists a constant $c$ such that 
\begin{equation}
|| {\bf V}_{(2)}^{({\bf E})} - {\bf I} ||_{L^2(\Gamma_C)} + || {\bf V}_{(2)}^{({\bf E})} - {\bf I} ||_{L^\infty(\Gamma_C )} \leq c n^{\insrt{-}2/3}.
\end{equation}
From Lemma \ref{crit-error-lem}(a), for $n$ sufficiently large there is a constant $c$ such that 
\begin{equation}
|| {\bf V}_{(2)}^{({\bf E})} - {\bf I} ||_{L^2(\Gamma_N)} + || {\bf V}_{(2)}^{({\bf E})} - {\bf I} ||_{L^\infty(\Gamma_N )} \leq c e^{-cn} .
\end{equation}
The lemma then follows by a standard technique that consists of writing the solution to the Riemann-Hilbert problem in terms of a Neumann series involving ${\bf V}^{({\bf E})} - {\bf I} $ ( see, for instance, \cite{Deift:1999b} Section 7.2 or \cite{Ercolani:2003} Section 3.5). 
\end{proof}


\section{The kernel near the critical region}
\label{kernel-section}

Our main goal now is to obtain the asymptotic form of the kernel 
$K_n(x(\zeta_x),y(\zeta_y))$ (see \eqref{mop-kernel}) uniformly
for 
$\zeta_x,\zeta_y$  in compact subsets. 
The first observation is that $K_{n}(x,y)$ is smooth in $x$ and $y$ because i) the first column of ${\bf Y}$ has no jump and ii) the second and the third rows of ${\bf Y}^{-1}$ have no jump.   However, it is still possible for the 
leading-order asymptotics of the kernel to exhibit a Stokes--like phenomenon, 
namely, a discontinuous change with respect to parameters.
We now show that the leading asymptotics of $K_{n}$ are also smooth.

Combining \eqref{W}, \eqref{E2}, \eqref{crit-psi-2}, and \eqref{L0-of-zeta}, 
we see we can write ${\bf Y}$ for $z\in\mathbb{D}_c$ as
\begin{equation}
\label{Y-expression1}
{\bf Y}= \sqrt\pi\left(\frac{z-\beta}\zeta\right)^{r/2} {\bf \Lambda}{\bf E}_{(2)}\, {\bf H}_{(2)}{\bf D}^{-1}\left(\begin{array}{ccc}n^{1/6} & 0 & 0 \\0 & n^{-1/6} & 0 \\0 & 0 & 1\end{array}\right){\cal A}_{r} 
 {\bf J}^{-1}\left(\begin{array}{ccc}e^{-\frac{n}{2}V} & 0 & 0 \\0 & e^{\frac{n}{2}V} & 0 \\0 & 0 & e^{\frac{n}{2}(V-2a z)}\end{array}\right)^{-1}.
\end{equation}
We note that ${\cal A}_{r}{\bf J}^{-1}$ and 
${\bf Y}\times\text{diag}[e^{-\frac{n}{2}V}, e^{\frac{n}{2}V} , e^{\frac{n}{2}(V-2a z)}]$ are the same up to a holomorphic prefactor.  Therefore they 
have the same jump, which is
\begin{equation}
\left({\cal A}_{r}{\bf J}^{-1}\right)_{+}=\left({\cal A}_{r}{\bf J}^{-1}\right)_{-}\left(\begin{array}{ccc}1 & 1 & 1 \\0 & 1 & 0 \\0 & 0 & 1\end{array}\right), \quad {\zeta\in\mathbb R}.
\end{equation}
This means that the first column of ${\cal A}_{r}{\bf J}^{-1}$ and the 
second and third rows of ${\bf J}{\cal A}_{r}^{-1}$ have no jump.  
Since the leading term in the asymptotic expansion of the kernel is written 
in terms of those column and rows, our asymptotic expression of the kernel 
is smooth.   Therefore, it is enough to consider, say, $\zeta$ in region I 
(see Figure \ref{fig_critloc}).

To arrive at our final expression for the kernel we will need to express 
$\mathcal{A}_r(\zeta_y)^{-1}\mathcal{A}_r(\zeta_x)$ in a simple form 
involving contour integrals.  To 
do so we take advantage of the machinery of the \emph{bilinear concomitant}.  
This requires writing $\mathcal{A}_r(\zeta)$ as a constant multiple of a 
Wronskian matrix.  Specifically, in region I,
\eq
\label{Ar-ito-W}
\mathcal{A}_r(\zeta) = \bpm \tau & 1 & 0 \\ 0 & \tau & 1 \\ 1 & 0 & 0 \epm {\bs \chi}_{r-1}(\zeta), \quad \zeta\in\text{I},
\endeq
where
\eq
{\bs \chi}_r(\zeta):=\bpm {\rm Ai}^{(r)}_{{\cal C}_1}(\zeta) & {\rm Ai}^{(r)}_{{\cal C}_2}(\zeta)& {\rm Ai}^{(r)}_{{\cal C}_3}(\zeta) \\ \partial_\zeta {\rm Ai}^{(r)}_{{\cal C}_1}(\zeta) &\partial_\zeta {\rm Ai}^{(r)}_{{\cal C}_2}(\zeta)&\partial_\zeta {\rm Ai}^{(r)}_{{\cal C}_3}(\zeta) \\ \partial_\zeta^2{\rm Ai}^{(r)}_{{\cal C}_1}(\zeta) &\partial_\zeta^2{\rm Ai}^{(r)}_{{\cal C}_2}(\zeta) & \partial_\zeta^2{\rm Ai}^{(r)}_{{\cal C}_3}(\zeta)\epm.
\endeq
Eq. \eqref{Ar-ito-W} is obtained from the identity
\begin{equation}
\partial_{\zeta}{\rm Ai}_{{\cal C}_j}^{(r-1)}(\zeta)={\rm Ai}_{{\cal C}_j}^{(r)}(\zeta)-\tau {\rm Ai}_{{\cal C}_j}^{(r-1)}(\zeta).
\end{equation}
From \eqref{Y-expression1} and \eqref{Ar-ito-W}, we can write 
\begin{equation}
\begin{split}
\label{Y-expression2}
{\bf Y}(z)&= \sqrt\pi\left(\frac{z-\beta}\zeta\right)^{r/2} {\bf \Lambda}{\bf E}_{(2)}{\bf H}_{(2)}{\bf D}^{-1} \left(\begin{array}{ccc}n^{1/6} & 0 & 0 \\0 & n^{-1/6} & 0 \\0 & 0 & 1\end{array}\right) \times \\  
& \qquad \qquad \times \left(\begin{array}{ccc}\tau & 1 & 0 \\0 & \tau & 1 \\1 & 0 & 0\end{array}\right)
{\bs \chi}_{r-1}(\zeta) \left(\begin{array}{ccc}e^{-\frac{n}{2}V} & 0 & 0 \\0 & e^{\frac{n}{2}V} & 0 \\0 & 0 & e^{\frac{n}{2}(V-2a z)}\end{array}\right)^{-1}, \quad \zeta\in\text{I}.
 \end{split}
\end{equation}
We will now express ${\bs \chi}_{r-1}(\zeta_y)^{-1}{\bs \chi}_{r-1}(\zeta_x)$ in a simple 
form using the bilinear concomitant.

\subsection{Simplifying 
\texorpdfstring{$\bs{\chi_{r-1}(\zeta_y)^{-1}\chi_{r-1}(\zeta_x)}$}{21932} using the bilinear concomitant}

This proposition is a specific example of the more general results on the bilinear concomitant given in \cite{Bertola:2007a}.  
\begin{prop}
\label{prop-4-15}
Recall the contours $\mathcal{C}_1$, $\mathcal{C}_2$, and $\mathcal{C}_3$ are 
defined in Figure \ref{fig_airycontour}.  Let the dual contours 
$\widehat{\mathcal{C}}_1$, $\widehat{\mathcal{C}}_2$, and $\widehat{\mathcal{C}}_3$ 
be defined as in Figure \ref{alt-k-airy-contours}.  Then the entries of 
${\bs \chi}_{r-1}(\zeta_y)^{-1} {\bs \chi}_{r-1}(\zeta_x)$ for $i\neq j$ are given by
\be
\le({\bs \chi}_{r-1}(\zeta_y)^{-1} {\bs \chi}_{r-1}(\zeta_x)\ri)_{ij} = \frac{1}{2\pi i} (\zeta_x-\zeta_y) \int_{\wh{\mathcal C}_i} \d s \int_{{\mathcal C}_j} \d t \frac {(t+\tau)^r}{(s+\tau)^r}\frac{ {\rm e}^{\frac {s^3-t^3}3 + \zeta_x t-\zeta_y s}}{t-s}.
\ee
\end{prop}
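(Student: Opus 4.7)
My plan is to apply the bilinear concomitant framework for the third-order ODE satisfied by the generalized Airy functions, specializing the more general machinery of \cite{Bertola:2007a}. First, differentiation under the integral followed by integration by parts in the integral representation of $\mathrm{Ai}^{(r-1)}_{\mathcal{C}_j}$ shows that each $f_j(\zeta) := \mathrm{Ai}^{(r-1)}_{\mathcal{C}_j}(\zeta;\tau)$ solves the third-order linear ODE
\begin{equation}
L_\zeta y := y''' + \tau y'' - \zeta y' - (\tau\zeta + r)\,y = 0,
\end{equation}
so ${\bs \chi}_{r-1}(\zeta)$ is a Wronskian fundamental matrix. A parallel calculation shows that the dual integrals $z_i(\zeta) := \frac{1}{2\pi i}\int_{\widehat{\mathcal{C}}_i}(s+\tau)^{-r} e^{s^3/3 - s\zeta}\,ds$ satisfy the formal adjoint equation $L^*_\zeta z = 0$.

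The bilinear concomitant obtained from $zLy - yL^*z = \partial_\zeta B(y,z)$ reads
\begin{equation}
B(y,z) = y'' z + y'(\tau z - z') + y(z'' - \tau z' - \zeta z) = (y,y',y'')\,Q(\zeta)\,(z,z',z'')^T,
\end{equation}
with $Q(\zeta)$ an explicit $3\times 3$ matrix. Since $B$ is constant in $\zeta$ on solutions, the product ${\bs \chi}_{r-1}^T(\zeta)\,Q(\zeta)\,\widehat{\bs \chi}(\zeta)$, with $\widehat{\bs\chi}_{ki}(\zeta) := z_i^{(k-1)}(\zeta)$, is independent of $\zeta$. The dual contours $\widehat{\mathcal{C}}_i$ of Figure \ref{alt-k-airy-contours} are chosen so that this constant equals $(2\pi i)\,{\bf I}$, yielding ${\bs \chi}_{r-1}^{-1}(\zeta) = (2\pi i)^{-1}\widehat{\bs \chi}(\zeta)^T Q(\zeta)^T$.

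Substituting the integral representations for $f_j^{(n-1)}(\zeta_x)$ and $z_i^{(m-1)}(\zeta_y)$ into this formula produces a double integral whose integrand is, after evaluating the algebraic sum arising from $Q(\zeta_y)$,
\begin{equation}
\frac{(t+\tau)^{r-1}}{(s+\tau)^r}\,P(s,t,\zeta_y)\,e^{\phi}, \qquad P = (t+\tau)(t+s) + (s^2 - \zeta_y),
\end{equation}
with $\phi := (s^3-t^3)/3 + \zeta_x t - \zeta_y s$. To match the target expression, observe that $(s^2 - \zeta_y)e^{\phi} = \partial_s e^{\phi}$, so integrating by parts in $s$ converts the $(s^2-\zeta_y)$ piece into $r/(s+\tau)$. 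Conversely, the key identity $(\zeta_x - \zeta_y)e^{\phi} = (\partial_t + \partial_s)e^{\phi} - (s^2 - t^2)e^{\phi}$, applied to the target integrand and followed by integration by parts in both $t$ and $s$ on the kernel $(t+\tau)^r/[(s+\tau)^r(t-s)]$, causes the $1/(t-s)^2$ contributions to cancel and produces the very same integrand
\begin{equation}
\frac{(t+\tau)^r(t+s)}{(s+\tau)^r}\,e^{\phi} + \frac{r(t+\tau)^{r-1}}{(s+\tau)^{r+1}}\,e^{\phi}
\end{equation}
modulo total derivatives. Matching the two expressions gives the identity.

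The main technical obstacle lies in the normalization step: one must verify that the specific dual contours of Figure \ref{alt-k-airy-contours} produce ${\bs \chi}_{r-1}^T Q\,\widehat{\bs \chi} = (2\pi i)\,{\bf I}$ rather than some more general invertible constant matrix. This is settled by a steepest-descent evaluation of $B(f_j,z_i)$ at a convenient base point together with a careful enumeration of the sectors at infinity shared by $\mathcal{C}_j$ and $\widehat{\mathcal{C}}_i$; the restriction $i\neq j$ reflects an additional topological linking for $i=j$ that generates a residue contribution at $t=s$ producing the $\delta_{ij}$ on the diagonal.
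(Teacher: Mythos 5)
Your proposal follows the same overall framework as the paper's proof: derive the third-order ODE and its formal adjoint, express $\boldsymbol\chi_{r-1}^{-1}$ via the dual Wronskian and the bilinear concomitant matrix, and then use integration by parts to convert the polynomial-times-exponential integrand into the $(\zeta_x-\zeta_y)/(t-s)$ form. The one place you diverge is the normalization step: the paper does not use steepest descent to verify that the pairing of the two bases gives the identity matrix, but instead invokes the intersection-pairing result of \cite{Bertola:2007a} (Lemma 3.3), which states that $\mathcal B(\mathrm{Ai}^{(r-1)}_{\mathcal C_j},\widehat{\mathrm{Ai}}^{(r-1)}_{\widehat{\mathcal C}_i})=\mathcal C_j\sharp\widehat{\mathcal C}_i$. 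Since the contours in Figure \ref{alt-k-airy-contours} are chosen to have $\mathcal C_j\sharp\widehat{\mathcal C}_i=\delta_{ij}$, the normalization falls out topologically with no asymptotics required. Your steepest-descent route would also work but is considerably heavier, and you would still need to check that the off-diagonal pairings vanish (e.g.\ via the sector-sharing argument you mention), at which point you have essentially reconstructed the intersection pairing by hand.

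Two smaller points of caution. First, you conflate two distinct reasons for the $i\neq j$ restriction: the $\delta_{ij}$ in $\widehat{\boldsymbol\chi}{\bf F}\boldsymbol\chi={\bf I}$ comes from the intersection pairing (a statement valid for all $i,j$), while the restriction $i\neq j$ in the final display is needed so that $\widehat{\mathcal C}_i$ and $\mathcal C_j$ are disjoint — otherwise the integration by parts that introduces the factor $1/(t-s)$ would encounter a singularity of the integrand on the domain of integration; these are related but not the same observation, and framing the $\delta_{ij}$ as a "residue at $t=s$" obscures this. Second, your normalization has a bookkeeping slip: you put a $1/(2\pi i)$ into both the $\mathrm{Ai}^{(r-1)}_{\mathcal C_j}$ and the dual $z_i$, while the paper puts it only in the former, so the resulting constant should be $(2\pi i)^{-1}{\bf I}$ in your convention, not $(2\pi i){\bf I}$; also your phrase "modulo total derivatives" should be replaced by an explicit check that the boundary terms of the iterated integration by parts vanish, which the paper's chain of equalities performs.
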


\begin{proof}

In the proof 
we will use the same notation as \cite{Bertola:2007} Section 3, and the technical details whose proofs we leave out are given there.

Recall
\begin{equation}
{\rm Ai}_{\mathcal{C}_j}^{(r-1)}(\zeta) =
 \frac{1}{2\pi i} \int_{\mathcal{C}_j} (t + \tau)^{r-1} e^{\zeta t - t^3 / 3} dt, \quad j=1,2,3.
\end{equation}
The main observation is that, for fixed $r$,  each ${\rm Ai}_{\mathcal C_j}^{(r-1)}\ (j=1,2,3)$ satisfies the same 
 ordinary differential equation of third order which we now derive using integration by parts: 
\eq
\begin{split}
0 & =\frac{1}{2i\pi}\int_{{\cal C}_j}\frac{d}{dt}\left((t+\tau)^{r}e^{\zeta t-t^3/3}\right)dt =\frac{1}{2i\pi}\int_{{\cal C}_j}\big(r+(\zeta-t^2)(t+\tau)\big)(t+\tau)^{r-1}e^{\zeta t-t^3/3}dt  \\
  & =\frac{1}{2i\pi}\int_{{\cal C}_j}\left(r+(\zeta-\partial_\zeta^2)(\partial_\zeta+\tau)\right)(t+\tau)^{r-1}e^{\zeta(t+\tau)-t^3/3}dt \\
  &= \left(\zeta (\partial_\zeta+\tau)+(r-\partial_\zeta^3-\tau\partial_\zeta^2)\right){\rm Ai}_{{\cal C}_j}^{(r-1)}(\zeta).
\end{split}
\endeq

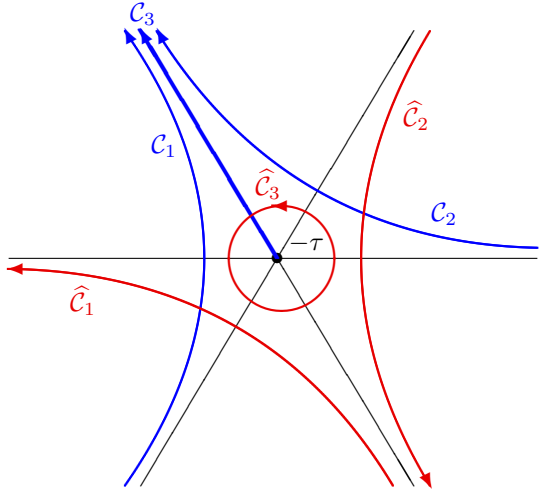
\begin{figure}
\setlength{\unitlength}{2pt}
\begin{center}
\begin{picture}(100,100)(-50,-50)
\put(-50,0){\line(1,0){100}}
\put(-25,43){\line(3,-5){51.7}}
\put(-25,-43){\line(3,5){51.7}}
\put(0.8,0){\circle*{2}}
\put(3,1.5){$-\tau$}
\thicklines
\blue{
\put(-23,20){$\mathcal{C}_1$}
\qbezier(-28,43)(2,0)(-28,-43)
\put(-27.1,41.6){\vector(-1,2){1}}
\put(30,7){$\mathcal{C}_2$}
\qbezier(-22,43)(3,3)(50,2)
\put(-21,41.6){\vector(-1,2){1}}
\put(-27,45){$\mathcal{C}_3$}
\put(0.8,0){\line(-3,5){26}}
\put(0.6,0){\line(-3,5){26}}
\put(1,0){\line(-3,5){26}}
\put(-24.1,41.6){\vector(-1,2){1}}
}
\red{
\put(-40,-10){$\widehat{\mathcal{C}}_1$}
\qbezier(-51,-2)(-3,-3)(21,-43)
\put(-51,-2){\vector(-1,0){1}}
\put(23,25){$\widehat{\mathcal{C}}_2$}
\qbezier(28,43)(2,0)(28,-43)
\put(27.4,-41.6){\vector(1,-2){1}}
\put(-5,12){$\widehat{\mathcal{C}}_3$}
\put(0,0){\circle{20}}
\put(-1,10){\vector(-1,0){1}}
}
\end{picture}
\end{center}
\caption{The choice of contours for the $r$-Airy parametrix in region I. \label{alt-k-airy-contours}}
\end{figure}

Associated to each of these equations is an \emph{adjoint equation} 
\cite {Ince} obtained by replacing $\pa_\zeta\mapsto -\pa_{\zeta}$ and interchanging the order of multiplications and differentiation operators.
Its 
solutions have the form
\be
\wh{\rm Ai}_{\wh{\cal C}_j}^{(r-1)}(\zeta)  := \int_{\wh {\cal C}_j} \frac 1{(s+\tau)^r} {\rm e}^{\frac {s^3}3 - \zeta s}\d s, \quad j=1,2,3,
\ee
where the \emph{dual contours} $\wh{\cal C}_1$, $\wh{\cal C}_2$, and 
$\wh{\cal C}_3$ are defined in Figure 
\ref{alt-k-airy-contours}. 
The bilinear concomitant \insrt{\cite{Ince}} for this differential equation is a bilinear pairing between the solution space of an equation and its adjoint. For the case at hand it admits the following double--integral representation \cite{Bertola:2007a}:
\be \label{Bdef}
\mathcal B\left({\rm Ai}_{{\cal C}_j}^{(r-1)}, \wh{\rm Ai}_{\wh{\cal C}_i}^{(r-1)}\right) = \frac{1}{2\pi i} \int_{\wh {\cal C}_i} \int_{ {\cal C}_j} \le( (t+s)(t+\tau)+s^2-\zeta \ri) \frac {(t+\tau)^{r-1}}{(s+\tau)^r} {\rm e}^{\frac {s^3-t^3}3 + \zeta(t-s)}dt\,ds.
\ee
Although $\zeta$ appears in \eqref{Bdef} it can be seen by direct differentiation that the expression \eqref{Bdef} is {\em independent of $\zeta$}.
We may follow the steps in the proof of Lemma 3.3 in \cite{Bertola:2007a} to show that \footnote{The paper \cite{Bertola:2007} contains a wrong sign in front of the intersection pairing. In fact formulas (3.38) and (3.39) should have the opposite sign in front of the second term in the respective integrands.
Moreover, the wrong overall sign was obtained in using Lemma 3.3 to derive the final formula since formula (3.37) was to be  the difference of (3.39) minus (3.38)  but apparently it was miscomputed later as (3.38) minus (3.39).}
\be \label{Bpairs}
\mathcal B\left({\rm Ai}_{{\cal C}_j}^{(r-1)}, \wh{\rm Ai}_{\wh{\cal C}_i}^{(r-1)}\right) =  {\cal C}_j \sharp \wh {\cal C}_i,
\ee
where ${\cal C}_j\sharp\wh{\cal C}_i$ is the intersection number of the 
contours ${\cal C}_j$ and $\wh{\cal C}_i$.  
Introduce the Wronskian $\wh{\bs \chi}_r(\zeta)$ of solutions to the 
adjoint equations with entries
\be
\left(\wh{\bs \chi}_{r}(\zeta)\right)_{i m}:= (-1)^{ m-1}\partial_\zeta^{ m-1}\wh{\rm Ai}_{\wh{\cal C}_i}^{(r)}(\zeta), \quad 1\leq i, m\leq 3,
\ee
and denote
\begin{equation}
{\bf F}(\zeta) := \begin{pmatrix} \zeta & -\tau & -1 \\ - \tau & -1 & 0 \\ -1 & 0 & 0 \end{pmatrix}.
\end{equation}
Then, by the definition of the bilinear concomitant in \eqref{Bdef} and the 
pairing \eqref{Bpairs},
\be
\left[ \mathcal B \left({\rm Ai}_{{\cal C}_j}^{(r-1)}, \wh{\rm Ai}_{\wh{\cal C}_i}^{(r-1)}\right) \right]_{1 \leq i, j \leq 3}  = 
\wh {\bs \chi}_{r-1}(\zeta) {\bf F}(\zeta) {\bs \chi}_{r-1}(\zeta) = {\bf I}.
\ee
Therefore $ \wh {\bs \chi}_{r-1}{\bf F}$ is the inverse of ${\bs \chi}_{r-1}$.
Expanding out the entries of ${\bs \chi}_{r-1}$ and $\wh {\bs \chi}_{r-1}$ in  
\be 
\le({\bs \chi}_{r-1}(\zeta_y)^{-1} {\bs \chi}_{r-1}(\zeta_x)\ri)_{ij} = 
\sum_{ k,  m} \left(\wh{\bs \chi}_{r-1}(\zeta_y)\right)_{i  k} \left({\bf F}(\zeta_y)\right)_{ k  m} \left({\bs \chi}_{r-1}(\zeta_x)\right)_{ m j}
\ee
gives
\be
\le({\bs \chi}_{r-1}(\zeta_y)^{-1} {\bs \chi}_{r-1}(\zeta_x)\ri)_{ij} = \frac{1}{2\pi i}\int_{\wh {\cal C}_i} \int_{ {\cal C}_j} \le( (t+s)(t+\tau)+s^2-\zeta_y \ri) \frac {(t+\tau)^{r-1}}{(s+\tau)^r} {\rm e}^{\frac {s^3-t^3}3 + \zeta_x t- \zeta_y s}dt\,ds.
\label{617}
\ee
We are interested only in the case $i\neq j$ and hence the contours of integration in \eqref{617} have no intersection;  this allows us to integrate by parts and obtain in the integrand a harmless denominator $(t-s)$.
Then we have the straightforward chain of equalities, where only integration by parts is used:
\eq
\begin{split}
\int_{\wh{\mathcal C}_i } \int_{{\mathcal C}_j} \frac {(t+\tau)^{r-1}}{(s+\tau)^r}&\Big((t+s)(t+\tau)+s^2-\zeta_y \Big) {\rm e}^{\frac {s^3-t^3}3+\zeta_x t-\zeta_y s}dt\,ds \\
  & = \int_{\wh{\mathcal C}_i} \int_{{\mathcal C}_j} \frac {(t+\tau)^{r-1}}{(s+\tau)^r}\Big((t+s)(t+\tau)+\partial_s \Big) {\rm e}^{\frac {s^3-t^3}3+\zeta_x t-\zeta_y s}dt\,ds \\
  & = \int_{\wh{\mathcal C}_i} \int_{{\mathcal C}_j} \left((t+s)\frac {(t+\tau)^r}{(s+\tau)^r}-\left[\partial_s\frac {(t+\tau)^{r-1}}{(s+\tau)^r}\right] \right) {\rm e}^{\frac {s^3-t^3}3+\zeta_x t-\zeta_y s} dt\,ds \\
  & = \int_{\wh{\mathcal C}_i} \int_{{\mathcal C}_j} \left((t+s)+\frac{r}{(t+\tau)(s+\tau)} \right) \frac {(t+\tau)^r}{(s+\tau)^r} {\rm e}^{\frac {s^3-t^3}3+\zeta_x t-\zeta_y s} dt\,ds \\
  & = \int_{\wh{\mathcal C}_i} \int_{{\mathcal C}_j} \frac{{\rm e}^{\zeta_x t-\zeta_y s}}{t-s}\left(\frac{-r}{t+\tau}+t^2+\frac{r}{s+\tau}-s^2\right) \frac {(t+\tau)^r}{(s+\tau)^r} {\rm e}^{\frac {s^3-t^3}3} dt\,ds \\
  & = \int_{\wh{\mathcal C}_i} \int_{{\mathcal C}_j} \frac{{\rm e}^{\zeta_x t-\zeta_y s}}{t-s}(-\pa_t - \pa_s) \frac {(t+\tau)^r}{(s+\tau)^r} {\rm e}^{\frac {s^3-t^3}3} dt\,ds \\
  & =  \int_{\wh{\mathcal C}_i} \int_{{\mathcal C}_j} \frac {(t+\tau)^r}{(s+\tau)^r} {\rm e}^{\frac {s^3-t^3}3}(\pa_t + \pa_s)\frac{{\rm e}^{\zeta_x t-\zeta_y s}}{t-s} dt\,ds \\
  & = (\zeta_x-\zeta_y) \int_{\wh{\mathcal C}_i} \int_{{\mathcal C}_j} \frac {(t+\tau)^r}{(s+\tau)^r}\frac{ {\rm e}^{\frac {s^3-t^3}3 + \zeta_x t-\zeta_y s}}{t-s} dt\,ds,
\end{split} 
\endeq
as desired.
\end{proof}

\subsection{Proof of Theorem \ref{crit-kernel-thm}}
\label{kernel-proof}

\begin{proof}
From \eqref{zeta-ito-zmbeta}, it follows that 
\eq
\label{H2invH2}
{\bf H}_{(2)}(y)^{-1} {\bf H}_{(2)}(x) = {\bf I} + \mathcal{O}(x-y) = {\bf I} + \mathcal{O}\left( \frac{\zeta_x - \zeta_y}{n^{2/3}}\right), \quad x,y\in\mathbb{D}_c.
\endeq
Recalling ${\bf E}_{(2)}(z)={\bf I}+{\cal O}(1/n^{2/3})$ (see Lemma \ref{E2-error}), we see
\eq
\begin{split}
&\left(\begin{array}{ccc}\tau & 1 & 0 \\0 & \tau & 1 \\1 & 0 & 0\end{array}\right)^{-1}
\left(\begin{array}{ccc}n^{1/6} & 0 & 0 \\0 & n^{-1/6} & 0 \\0 & 0 & 1\end{array}\right)^{-1}{\bf H}_{(2)}(x)^{-1}{\bf E}_{(2)}(y)^{-1} \times \\
& \qquad \qquad \qquad \times {\bf E}_{(2)}(x){\bf H}_{(2)}(x)\left(\begin{array}{ccc}n^{1/6} & 0 & 0 \\0 & n^{-1/6} & 0 \\0 & 0 & 1\end{array}\right)\left(\begin{array}{ccc}\tau & 1 & 0 \\0 & \tau & 1 \\1 & 0 & 0\end{array}\right)={\bf I}+{\cal O}\left(\frac{\zeta_{x}-\zeta_{y}}{n^{1/3}}\right).
\end{split}
\endeq
Using this and \eqref{Y-expression2}, the kernel is given by
\eq
\begin{split}
K_{n}(x,y)&=\frac{e^{-\frac n2(V(x)-V(y))}}{2\pi i(x-y)}\left[0,e^{-nV(y)},e^{-n(V(y)-a y)}\right]{\bf Y}(y)^{-1}{\bf Y}(x)\left[\begin{array}{c}1 \\0 \\0\end{array}\right]
\\&=\frac{e^{-\frac n2(V(x)-V(y))}}{2\pi i(x-y)}\left[0,e^{-\frac n2V(y)},e^{-\frac n2V(y)}\right]{\bs \chi}(\zeta_y)^{-1}\left({\bf I}+{\cal O}\left(\frac{\zeta_{x}-\zeta_{y}}{n^{1/3}}\right)\right){\bs \chi}(\zeta_x)\left[\begin{array}{c}e^{\frac n2V(x)} \\0 \\0\end{array}\right]
\\&=\frac{1}{2\pi i(x-y)}\left[0,1,1\right]{\bs \chi}(\zeta_y)^{-1}{\bs \chi}(\zeta_x)\left[\begin{array}{c}1 \\0 \\0\end{array}\right]\left(1+{\cal O}\left(\frac{\zeta_{x}-\zeta_{y}}{n^{1/3}}\right)\right)
\\&=\frac{ (\zeta_x - \zeta_y)}{(2\pi i)^2 (x-y)}
\int_{\hat{\mathcal{C}}_2 + \hat{\mathcal{C}}_3} \int_{\mathcal{C}_1} 
\frac{ (t+\tau)^r}{(s+\tau)^r} \frac{e^{\frac{1}{3} (s^3 - t^3) + \zeta_x t - \zeta_y s }}{t-s} dt\,ds \left(1+{\cal O}\left(\frac{\zeta_{x}-\zeta_{y}}{n^{1/3}}\right)\right),
\label{624}
\end{split}
\endeq
where, in the last equality, we have used Proposition \ref{prop-4-15}.
We now use 
\begin{equation}
\frac{\zeta_{x}-\zeta_{y}}{x-y}=n^{2/3}c_{1}(\kappa)\left(1+{\cal O}(n^{-2/3})\right), 
\end{equation}
which follows from \eqref{zeta-ito-zmbeta}. 

We now recall that 
\begin{equation}\label{625}\begin{split}
\zeta_z(\kappa)=\zeta(z;\kappa) &=n^{\frac 23}  c_1(\kappa) (z-\beta(\kappa))(1 + \mathcal O(z-\beta(\kappa)))
\\&=n^{2/3} c_1(0) (z-\beta(0)-\kappa\dot\beta)+{\cal O}(n^{-2/3})
\end{split}
\end{equation}
as long as $z-\beta(0)={\cal O}(n^{-2/3})$, i.e. as long as $\zeta_z$ stays finite for $n\to\infty$.
\\
This implies that we can replace $\zeta_x$ and $\zeta_y$ in the kernel by the leading approximation in \eqref{625} without changing the error in the kernel.
The mismatch from this approximation produces an error of smaller order than the one already indicated in  \eqref{624} and yields the overall error term $\mathcal O(n^{-1/3})$ advocated in Theorem \ref{crit-kernel-thm}.


By dropping the $\mathcal O(n^{-2/3})$ in \eqref{625}
and using $\delta:=c_1\dot\beta\kappa n^{2/3}$ in Theorem \ref{crit-kernel-thm},
\begin{equation}
\zeta_{z} := n^{2/3}c_1(0)(z-\beta)-\delta\quad\Longleftrightarrow \quad z=\beta+\frac{\zeta_z+\delta}{n^{2/3}c_1(0)} ,
\end{equation}
 and evaluating $K(x,y)$ at 
\be
x \equiv  \beta + \frac {\zeta_x+\delta}{n^\frac 23 c_1(0)}\ ,\ \ \ 
y \equiv  \beta + \frac {\zeta_y+\delta}{n^\frac 23 c_1(0)}\ ,
\ee
we obtain the statement in Theorem \ref{crit-kernel-thm}.

We conclude the proof noting that $\mathcal{C}_1$ may be deformed to $\mathcal{C}$ and that $\hat{\mathcal{C}}_2 + \hat{\mathcal{C}}_3$ may be deformed to $\widetilde{\mathcal{C}}$.
\end{proof}

\subsection{Connection to the kernel in \cite{Adler:2009a} for nonintersecting Brownian motions}
\label{brownian-motions}

In conclusion
 we observe that our expression for the kernel 
near the critical point is the same 
that 
was
found previously in the 
literature for nonintersecting Brownian motions:  Theorem 0.1 of 
\cite{Adler:2009a} gives the formula
\begin{equation} \label{Adler:2009-kernel}
 K_n(\zeta_x, \zeta_y) = \frac{1}{4\pi^2}
 \int_0^\infty \int_C \int_C 
 \frac{ (-ib -\tau)^r }{ ( ia - \tau)^r } e^{\frac{1}{3} ( i a^3 + i b^3 ) + i a ( w + \zeta_x) + i b ( w + \zeta_y) } db\,da\,dw ,
 \end{equation}
 where $C$ is a contour proceeding from $\infty e^{5\pi i / 6}$ to $\infty e^{\pi i / 6} $ and such that $-i \tau$ is above $C$.
To see that our formula matches \eqref{Adler:2009-kernel} we first 
 compute the $w$-integral to find
\begin{equation}
 K_n(\zeta_x, \zeta_y) = \frac{1}{(2 \pi i)^2}
\int_C \int_C 
\frac{ (-ib -\tau)^r }{(ia -\tau)^r} \frac{ e^{ \frac{1}{3} ( i a^3 + i b^3) + i a \zeta_x + i b  \zeta_y } }{ i a + i b } db\,da.
\end{equation}
One makes the changes of variables $ i b \mapsto t$ and $ -ia \mapsto  s$.
\begin{equation}
K_n(\zeta_x, \zeta_y) = \frac{1}{(2 \pi i)^2}
\int_{ - i C } \int_{ i C}  \frac{ (t+\tau)^r}{(s+\tau)^r} \frac{ e^{\frac{1}{3} (s^3 - t^3) + \zeta_x t - \zeta_y s }}{ t- s} dt\,ds.
\end{equation}
We conclude by observing
 that $i C = \mathcal{C}$ (i.e. $C$ rotated counter-clockwise by 90 degrees) and $-i C = \widetilde{\mathcal{C}} $ (i.e. $C$ rotated clockwise by 90 degrees) (with $\mathcal C, \widetilde {\mathcal C}$ depicted in Figure \ref{AiryKernelContours}).  Note that the $dt$ contour can be deformed (in the finite complex plane) as the integrand lacks a pole in $t$-space.

\begin{appendix}

\section{The asymptotic expansion of 
\texorpdfstring{$\bs{\mathcal{A}_{(r)}(\zeta)}$}{3252}}
\label{proofs-appendix}

\subsection{Proof of Proposition \ref{Ar-col12}}
Here we prove the existence of general asymptotic expansions for the first 
and second columns of $\mathcal{A}_{(r)}(\zeta)$ with arbitrarily small 
errors (i.e. (\ref{arbitrary-v1})--(\ref{arbitrary-32})).  
First, observe that \eqref{arbitrary-21} and \eqref{arbitrary-31} follow from 
\eqref{arbitrary-v1}, while \eqref{arbitrary-22} and \eqref{arbitrary-32} follow from 
\eqref{arbitrary-v2}.  We then perform a steepest-descent analysis on the original integral representation of the generalized Airy function
\begin{align}\label{Aiexpo}
&{\rm Ai}_{{\cal C}_j}^{(r)}(\zeta):=\frac{1}{2i\pi}\int_{{\cal C}_j}(t+\tau)^r e^{\zeta t-t^3/3}dt=\frac{1}{2i\pi}\int_{{\cal C}_j}e^{r\log\left(t+\tau\right)+\zeta t-\frac{t^3}3}dt .
\end{align}
Let us define $Z$, ${\cal T}$, and $T$ to be scaled versions of $\zeta$, 
$\tau$, and $t$, respectively:
\begin{equation}
Z:=\frac{\zeta}{r^{2/3}},\quad {\cal T}:=\frac{\tau}{r^{1/3}},\quad T:=\frac{t}{r^{1/3}}.
\end{equation}
Using these new variables, the exponent in \eqref{Aiexpo} becomes 
\begin{equation}
r\log\left(t+\tau\right)+\zeta t-\frac{t^3}3=\frac{r}{3}\log r+r F(T),\quad F(T)=F(T;\{Z,{\cal T}\}):=\log(T+{\cal T})+Z T-\frac{T^3}{3}.
\end{equation}
We define the saddle points $T_{\bf s}$ for ${\bf s}=1,2,3$, as the solutions of $\partial_T F(T)=0$.   As solutions to a cubic equation, the $T_{\bf s}$'s can be written explicitly.  Let us, however, only write their asymptotic expressions at large $Z$:
\begin{align}\label{t1t2t3}
\begin{split}
T_1 &=-\sqrt{Z}+\frac{1}{2 Z}+\frac{{\cal T}}{2Z^{3/2}}+\frac{{\cal T}^2}{2 Z^2}+{\cal O}\left(\frac{1}{Z^{5/2}}\right) \\
&= -\sqrt{Z} + \sum_{j=3}^{3m-1} r_j^{(1)} Z^{(1-j)/2} + {\cal O}\left(\frac{1}{Z^{3m/2}}, \frac{\cal T}{Z^{(3m+1)/2}}, \frac{{\cal T}^2}{Z^{(3m+2)/2}}\right),
\\T_2&=\sqrt{Z}+\frac{1}{2 Z}-\frac{{\cal T}}{2Z^{3/2}}+\frac{{\cal T}^2}{2   Z^2}+{\cal O}\left(\frac{1}{Z^{5/2}}\right) \\
&=\sqrt{Z} + \sum_{j=3}^{3m-1} r_j^{(2)} Z^{(1-j)/2} + {\cal O}\left(\frac{1}{Z^{3m/2}}, \frac{\cal T}{Z^{(3m+1)/2}}, \frac{{\cal T}^2}{Z^{(3m+2)/2}}\right),
\\&T_3=-{\cal T}-\frac{1}{Z}-\frac{{\cal T}^2}{Z^2}+{\cal O}\left(\frac{1}{Z^{5/2}}\right).
\end{split}
\end{align}
For large $r$ the coefficients have the asymptotic form $r_j^{(i)} = {\cal O}\left( {\cal T}^{j \mathrm{mod}(3)} \right)$.  This can be seen from the cubic structure of the equation $\partial_T F(T) = 0$. 
It should be mentioned that the above bounds are uniform in ${\cal T}$ inside a finite disk around ${\cal T}=0$.
 
Given a saddle point $T_{\bf s}$, one can expand $F(T)$ by
\begin{equation}
F(T)=F(T_{\bf s})+\frac{F''(T_{\bf s})}{2}(T-T_{\bf s})^2+\delta F(T),\quad \delta F(T)={\cal O}((T-T_{\bf s})^3).
\end{equation}

The standard steepest descent method gives the leading behavior (in large $r$) by
\begin{equation}\label{AiLead}
{\rm Ai}_{{\cal C}_j}^{(r)}(r^{2/3} Z;r^{1/3}{\cal T})\sim\frac{r^{(r+1)/3}}{2\pi{\rm i}}{\rm e}^{rF(T_{\bf s})}\int {\rm e}^{\frac{r}{2} F''(T_{\bf s})(T-T_{\bf s})^2}dT=\frac{r^{(r+1)/3}}{2\pi{\rm i}}\frac{{\rm e}^{rF(T_{\bf s})}}{\sqrt{-\frac{r}{2} F''(T_{\bf s})}} \left(\pm\int_{-\infty}^{\infty}\!\!\!\!{\rm e}^{-x^2}dx\right),
\end{equation}
where the overall sign must be determined from the direction of the contour ${\cal C}_j$.

We find the expansions 
\begin{lemma} \label{exp-F}
For $s=1,2$:
\begin{align}
F(T_s) &= (-1)^{s} \frac{2}{3} Z^{3/2} + \log\left( (-1)^s \sqrt{Z} + {\cal T}\right) + \sum_{j=3}^{3m-1} F_{0, j}^{(s)} Z^{-j/2} + {\cal O}\left(\frac{1}{Z^{3m/2}}, \frac{\cal T}{Z^{(3m+1)/2}}, \frac{{\cal T}^2}{Z^{(3m+2)/2}}\right), \\
\partial_T^2 F(T_s) &= - 2 (-1)^s \sqrt{Z} \left[ 1 + \sum_{j=3}^{3m-1} F_{2, j}^{(s)} Z^{-j/2} + {\cal O}\left(\frac{1}{Z^{3m/2}}, \frac{\cal T}{Z^{(3m+1)/2}}, \frac{{\cal T}^2}{Z^{(3m+2)/2}}\right) \right], \\
\partial_T^3 F(T_s) &= -2 + \sum_{j=3}^{3m-1} F_{3, j}^{(s)} Z^{-j/2} + {\cal O}\left(\frac{1}{Z^{3m/2}}, \frac{\cal T}{Z^{(3m+1)/2}}, \frac{{\cal T}^2}{Z^{(3m+2)/2}}\right), \\
\begin{split}
\partial_T^k F(T_s) &= (-1)^{k-1} (-1)^{sk} (k-1)! Z^{-k/2} + \sum_{j=k+1}^{3m+k-1} 
F_{k, j}^{(s)} Z^{-j/2} \\
&\phantom{=}+ {\cal O}\left(\frac{1}{Z^{(3m+k)/2}}, \frac{\cal T}{Z^{(3m+k+1)/2}}, \frac{{\cal T}^2}{Z^{(3m+k+2)/2}}\right), \quad j>3.
\end{split}
\end{align}
Furthermore, for large $r$ the coefficients satisfy $ F_{k, j}^{(s)} = \mathcal{O}\left( {\cal T}^{j \mathrm{mod}(3)}\right) $ for $k=0, 2$, and $F_{k, j}^{(s)} = \mathcal{O}\left( {\cal T}^{(j-k) \mathrm{mod}(3)}\right) $ for $k \geq 3$.
\end{lemma}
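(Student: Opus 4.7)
The plan is to reduce every item in the lemma to an algebraic manipulation driven by the saddle point equation, and then read off the expansions by substituting \eqref{t1t2t3}. The master identity is $\partial_T F(T_s) = 0$, which in cleared form reads
\begin{equation*}
(T_s+\mathcal{T})(T_s^2 - Z) = 1, \qquad \text{equivalently} \quad \frac{1}{T_s+\mathcal{T}} = T_s^2 - Z.
\end{equation*}
This lets us eliminate $(T_s+\mathcal{T})^{-1}$ in favor of $T_s^2-Z$, which in turn, thanks to \eqref{t1t2t3}, has the simple leading behavior $T_s^2-Z = (-1)^s Z^{-1/2}(1 + O(Z^{-3/2}))$.

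For the value $F(T_s)$: substituting $T_s^3 = -\mathcal{T}T_s^2 + Z T_s + Z\mathcal{T} + 1$ into $ZT_s - T_s^3/3$ gives $ZT_s - T_s^3/3 = \tfrac{2}{3}ZT_s + \tfrac{1}{3}\mathcal{T}T_s^2 - \tfrac{1}{3}Z\mathcal{T} - \tfrac{1}{3}$, whose leading term $\tfrac{2}{3}ZT_s \sim (-1)^s \tfrac{2}{3}Z^{3/2}$ produces the advertised dominant exponent. The logarithm term is split as
\begin{equation*}
\log(T_s+\mathcal{T}) = \log\bigl((-1)^s\sqrt{Z}+\mathcal{T}\bigr) + \log\!\Bigl(1 + \tfrac{\delta_s}{(-1)^s\sqrt{Z}+\mathcal{T}}\Bigr),
\end{equation*}
with $\delta_s := T_s - (-1)^s\sqrt{Z} = O(Z^{-1})$, and the second logarithm is expanded as a convergent Taylor series in its small argument. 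Plugging in the expansion \eqref{t1t2t3} then yields the stated form with remainder $O(Z^{-3m/2},\mathcal{T}/Z^{(3m+1)/2},\mathcal{T}^2/Z^{(3m+2)/2})$.

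For the derivatives: $\partial_T^2 F = -(T+\mathcal{T})^{-2} - 2T$, so the master identity gives $\partial_T^2 F(T_s) = -(T_s^2-Z)^2 - 2T_s$, and the $-2T_s$ term supplies the leading $-2(-1)^s\sqrt{Z}$. For $k\ge 3$, since $ZT - T^3/3$ only contributes $-2$ to $\partial_T^3$ and vanishes for higher derivatives,
\begin{equation*}
\partial_T^k F(T_s) = (-1)^{k-1}(k-1)!\,(T_s^2-Z)^k - 2\,\delta_{k,3},
\end{equation*}
and the leading asymptotic $(T_s^2-Z)^k = (-1)^{sk} Z^{-k/2}(1+O(Z^{-3/2}))$ produces exactly the claimed $(-1)^{k-1}(-1)^{sk}(k-1)!\,Z^{-k/2}$. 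Substituting the full expansion \eqref{t1t2t3} into these closed-form expressions yields the subleading coefficients $F^{(s)}_{k,j}$ algorithmically.

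For the $\mathcal{T}$-scaling of the coefficients: this is proved by induction on $j$, exploiting the cubic saddle equation $T^3 + \mathcal{T}T^2 - ZT - Z\mathcal{T} - 1 = 0$ with $Z^{-1/2}$ as the expansion parameter and $\mathcal{T}$ as a fixed parameter in a fixed disk around $0$. The degree in $\mathcal{T}$ of the coefficient at order $Z^{(1-j)/2}$ grows linearly and is cyclically reset by the cubic monomial $T^3$, producing the pattern $r^{(s)}_j = O(\mathcal{T}^{j\bmod 3})$; propagating this through the algebraic formulas above gives the bounds $F^{(s)}_{k,j} = O(\mathcal{T}^{j\bmod 3})$ for $k=0,2$ and $F^{(s)}_{k,j} = O(\mathcal{T}^{(j-k)\bmod 3})$ for $k\ge 3$, uniformly in $\mathcal{T}$. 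The main (and only) real difficulty is this bookkeeping of the $\mathcal{T}$-degree through the substitution; everything else is a direct algebraic consequence of the saddle identity.
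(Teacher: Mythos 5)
Your proof is correct, and it fills a genuine gap: the paper states Lemma~\ref{exp-F} without any proof beyond the one-line remark that the $\mathcal{T}$-scaling ``can be seen from the cubic structure of $\partial_T F(T)=0$.'' Your master identity $(T_s+\mathcal{T})(T_s^2-Z)=1$ is exactly the right device; it gives the closed forms $\partial_T^2 F(T_s)=-(T_s^2-Z)^2-2T_s$ and, for $k\ge3$, $\partial_T^k F(T_s)=(-1)^{k-1}(k-1)!\,(T_s^2-Z)^k-2\delta_{k,3}$, and the rewriting $ZT_s-T_s^3/3=\tfrac{2}{3}ZT_s+\tfrac{1}{3}\mathcal{T}T_s^2-\tfrac{1}{3}Z\mathcal{T}-\tfrac{1}{3}$ for the value itself. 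Substituting \eqref{t1t2t3} is then indeed enough to produce all the stated expansions, and the claimed error scales match because the errors in \eqref{t1t2t3} propagate through these algebraic closed forms unchanged in grade.

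One small inaccuracy should be fixed: you claim $T_s^2-Z=(-1)^sZ^{-1/2}\bigl(1+O(Z^{-3/2})\bigr)$, but since $T_s+\mathcal{T}=(-1)^s\sqrt{Z}+\mathcal{T}+O(Z^{-1})$, the relative first correction to $T_s^2-Z=(T_s+\mathcal{T})^{-1}$ is $\mathcal{T}/\bigl((-1)^s\sqrt{Z}\bigr)=O(\mathcal{T}Z^{-1/2})$, not $O(Z^{-3/2})$. This does not affect your leading-order identifications (which only use the principal term), and it is in fact consistent with the lemma's assertion that the first subleading coefficient $F^{(s)}_{k,k+1}$ is $O(\mathcal{T}^{(1)\bmod 3})=O(\mathcal{T})$; but as written the intermediate bound is wrong for $\mathcal{T}$ merely bounded. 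Also, the $\mathcal{T}$-grading argument is only sketched; for completeness you should write, e.g., $T_s=(-1)^s\sqrt{Z}(1+v)$ and reduce the saddle equation to $2v+3v^2+v^3+\mathcal{T}\epsilon(2v+v^2)=\epsilon^3$ with $\epsilon:=(-1)^s/\sqrt{Z}$, from which the induction $v_j=O(\mathcal{T}^{\,j\bmod3})$ on the coefficients of $v=\sum_{j\ge3}v_j\epsilon^j$ follows directly by matching powers of $\epsilon$ and checking that every product of lower-order terms respects the grading modulo $3$.
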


Based on Lemma \ref{exp-F}, 
we may obtain the higher order expansion using the general formula
\begin{align}
\begin{split}
\frac{1}{2i\pi}\int e^{rF(T)}dT&=
\frac{e^{r F(T_s)}}{2 i \pi} \int \exp\left( r \sum_{j=2}^\infty \frac{f_j}{j!} (T-T_s)^j \right) dT \quad \left(\, \mbox{where}\; f_j:= \partial_T^j F(T_s)\right) \\
&= \frac{e^{r F(T_s)}}{2i\pi} \int e^{ \frac{ r f_2}{2} (T-T_s)^2 } 
\left[ 1 + \sum_{k=3}^\infty S_k\left(0, 0, \frac{r f_3}{3!}, \frac{r f_4}{4!}, \dots, 
\frac{ r f_k}{k!}\right) (T-T_s)^k \right] \\
&= \pm \frac{e^{r F(T_s)}}{2i\pi} \left( 
\frac{\Gamma\left(1/2\right)}{\left(-\frac{rf_2}{2}\right)^{1/2}} + 
\sum_{k=2}^\infty S_{2k}\left(0, 0, \frac{r f_3}{3!}, \frac{r f_4}{4!}, \dots, 
\frac{ r f_{2k} }{(2k)!} \right) \frac{\Gamma\left((2k+1)/2\right)}{\left(-\frac{rf_2}{2}\right)^{(2k+1)/2}} \right) \label{subs}
\end{split}
\end{align}
where the overall sign is related to the contour involved, and the $S_{k}$ are the  polynomials defined by 
\begin{equation}
\exp\left( \sum_{j=1}^\infty x_j z^j \right) = 1 + \sum_{j=1}^\infty S_k(x_1, x_2, \dots, x_k) z^k .
\end{equation}
Note that only the even powered terms contribute after the Gaussian integrals, which all come form the formula
\begin{equation}
\int_{-\infty}^\infty{\rm e}^{-x^2}x^j dx=\Gamma\left(\frac{1+j}{2}\right).
\end{equation}

We require a lemma to control the growth of the $S_k(0, 0, r f_3 / 3!, rf_4/4!, \dots, r f_k/k!) $ for large $r$ and $Z$:
\begin{lemma} \label{poly-lemma}
We find for an integer $M\geq 0$ that 
\begin{align}
\begin{split}
S_{3M+1}\left(0, 0, \frac{r f_3}{3!}, \frac{r f_4}{4!}, \dots \right)  &= 
{\cal O}\left( \frac{ r^M }{Z^{2}} \right), \\
S_{3M+2}\left(0, 0, \frac{r f_3}{3!}, \frac{r f_4}{4!}, \dots\right)  &= 
{\cal O}\left( \frac{ r^M }{Z^{5/2}} \right), \\
S_{3M+3}\left(0, 0, \frac{r f_3}{3!}, \frac{r f_4}{4!}, \dots\right)  &= 
{\cal O}\left( r^{M+1}  \right).
\end{split}
\end{align}
\end{lemma}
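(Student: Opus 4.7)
The plan is to use the explicit multinomial expansion of the Schur-like polynomials $S_k$ together with the decay estimates on $f_j = \partial_T^j F(T_s)$ from Lemma \ref{exp-F}, and then optimize over the partitions of $k$ to extract the claimed worst-case behavior.

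First, I would record the standard combinatorial identity
\begin{equation*}
S_k(x_1,\dots,x_k) = \sum_{\substack{j_1+2j_2+\cdots+kj_k = k \\ j_i\geq 0}} \prod_{i=1}^{k} \frac{x_i^{j_i}}{j_i!},
\end{equation*}
which follows immediately from the generating-function definition. Setting $x_1=x_2=0$ kills all partitions with $j_1>0$ or $j_2>0$, so $S_k(0,0,x_3,\dots,x_k)$ is a sum over partitions of $k$ into parts of size $\geq 3$, each partition contributing $\prod_{i\geq 3}(x_i^{j_i}/j_i!)$ with $x_i = rf_i/i!$.

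Next I would insert the decay bounds $f_3 = -2 + O(Z^{-3/2}) = O(1)$ and $f_i = O(Z^{-i/2})$ for $i\geq 4$ (directly from Lemma \ref{exp-F}). A single term of the multinomial expansion then satisfies
\begin{equation*}
\prod_{i\geq 3}\frac{(rf_i/i!)^{j_i}}{j_i!} = O\!\left( r^{|j|}\, Z^{-(k-3j_3)/2} \right), \qquad |j| := \sum_{i\geq 3} j_i,
\end{equation*}
since the $j_3$ copies of $f_3$ contribute no $Z$-decay, while the remaining copies of $f_i$ with $i\geq 4$ contribute total decay $Z^{-\sum_{i\geq 4} i j_i/2} = Z^{-(k-3j_3)/2}$.

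The heart of the proof is then a finite optimization over partitions of $k$ into parts $\geq 3$. The key observation is that increasing $j_3$ at the expense of higher $j_i$ simultaneously increases $|j|$ (bad) and reduces the $Z$-decay (also bad), so the extremal partition is the one with $j_3$ maximal subject to the constraint $\sum_{i\geq 3} i j_i = k$. I would then split into cases by $k \bmod 3$:
(i) if $k=3M+3$, the extremal choice is $j_3 = M+1$ with all other $j_i=0$, giving $|j|=M+1$ and no residual decay, hence $O(r^{M+1})$;
(ii) if $k=3M+2$, $j_3=M$ leaves a budget of $2$ with parts $\geq 4$, which is impossible, so the best admissible choice is $j_3=M-1$, $j_5=1$ (or $j_3=M-2$, $j_4=2$, with the same $|j|$), yielding $|j|=M$ and decay $Z^{-5/2}$;
(iii) if $k=3M+1$, a parallel argument forces $j_3=M-1$, $j_4=1$, giving $|j|=M$ and decay $Z^{-2}$.

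Since only finitely many partitions of $k$ exist, bounding each by its worst case and summing gives the stated $O$-estimates. The only real obstacle is keeping the bookkeeping straight in the case analysis (in particular verifying that no partition in cases (ii) and (iii) beats the claimed bound), which is a routine but slightly tedious integer-optimization check; I anticipate no analytic difficulty beyond the already-established expansions in Lemma \ref{exp-F}.
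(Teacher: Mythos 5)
Your proof is correct and follows essentially the same strategy as the paper: write out the multinomial expansion of $S_k$ restricted to partitions of $k$ into parts $\geq 3$, use Lemma \ref{exp-F} to assign $f_3 = \mathcal{O}(1)$ and $f_i = \mathcal{O}(Z^{-i/2})$ for $i\geq 4$, and observe that the dominant contribution comes from the partition with $j_3$ maximal. The paper simply asserts that the other partitions are lower order in $r$ and higher order in $1/Z$ and reads off the three leading terms directly, whereas you justify the optimization over partitions explicitly via the case split on $k \bmod 3$; the two arguments are otherwise identical.
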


\begin{proof}
Using Proposition \ref{exp-F} we see that the leading terms in $S_{k}$ are those with the highest powers of $f_3$ possible, therefore we have 
\begin{align}
\begin{split}
S_{3M+1} &= r^M \frac{f_3^{M-1} f_4}{M (3!)^{M-1} 4!}  + \dots \\
S_{3M+2} &= r^M \frac{f_3^{M-1} f_5}{M (3!)^{M-1} 5!} + \dots \\
S_{3M+3} &= r^{M+1} \frac{ f_3^{M+1} }{ (3!)^{M+1} } +\dots.
\end{split}
\end{align}
The terms left off are lower order in $r$ and higher order in $1/Z$.  
An application of the $f_3$, $f_4$, and $f_5$ entries of Proposition \ref{exp-F} gives the result.
\end{proof}

The series in \eqref{subs} does not converge in most cases, and one must use the original quantity (not the series expansion) to estimate the error.
We can use that ${\rm e}^{rF(T)}$ is analytic at $T=T_{\bf 1}$ and therefore
\begin{equation}\label{error1}
\begin{split}
{\rm e}^{rF(T)-rF(T_1)-\frac{r}{2}F''(T_1)(T-T_1)^2}-\left[1+
\sum_{k=3}^{6M} S_k(0, 0, \frac{rf_3}{3!}, \frac{rf_4}{4!}, \dots, \frac{r f_k}{k!}) (T- T_1)^k \right] \hspace{.8in}\\
=\mathcal{O}\left( \frac{r^{2M}}{Z^2} (T-T_1)^{6M+1}, \frac{r^{2M}}{Z^{5/2}} (T-T_1)^{6M+2} , r^{2M+1} (T-T_1)^{6M+3}, \frac{r^{2M+1}}{Z^2} (T-T_1)^{6M+4} ,
\right. \\ \left.
 \frac{r^{2M+1}}{Z^{5/2}} (T-T_1)^{6M+5} , r^{2M+2} (T-T_1)^{6M+6} \right) 
\end{split}
\end{equation}
on a finite disk around $T_1$ of order $n^{-1/3}$ (the radius of convergence is determined by that of $F(T)$).  The right-hand side of this expression follows from Lemma \ref{poly-lemma}.  Note also that only the even terms will contribute to the Gaussian integration.
  Then the error is given by
\begin{equation} \label{71eq}
\begin{split}
&\left( - \frac{r}{2} F''(T_1) \right)^{1/2} 
\int {\rm e}^{\frac{r}{2}F''(T_1)(T-T_1)^2}
\left[ 
\mathcal{O}\left( \frac{r^{2M}}{Z^2} (T-T_1)^{6M+1}, \frac{r^{2M}}{Z^{5/2}} (T-T_1)^{6M+2},  r^{2M+1} (T-T_1)^{6M+3}, \right. \right. \\ & \phantom{\left( - \frac{r}{2} F''(T_1) \right)^{1/2} 
\int {\rm e}^{\frac{r}{2}F''(T_1)(T-T_1)^2}} \left. \left.
\frac{r^{2M+1}}{Z^2} (T-T_1)^{6M+4},  \frac{r^{2M+1}}{Z^{5/2}} (T-T_1)^{6M+5}, r^{2M+2} (T-T_1)^{6M+6} \right) 
\right]
 d T \\ 
& = \mathcal{O}\left( \frac{r^{2M}}{Z^{5/2}} \left( -\frac{r}{2} f_2 \right)^{-3M-1}, \frac{r^{2M+1}}{Z^2} \left( -\frac{r}{2} f_2 \right)^{-3M-2}, r^{2M+2}\left( -\frac{r}{2} f_2 \right)^{-3M-3} \right)
=
\mathcal{O}\left( \left( \frac{r}{\zeta^3}\right)^{M+1} \right) ,
\end{split}
\end{equation}
where in the last step we used Proposition \ref{exp-F}.
Note that 
\eq
\frac{r}{n^{3(\frac 23-\mu)}} = \mathcal{O}\left(n^{\gamma-2 +3\mu}\right) < \mathcal{O}\left(n^{\gamma-2+(2-2\gamma)}\right) = \mathcal{O}(n^{-\gamma}).
\endeq

We will give here a general argument for the existence of the expansion to arbitrary order of $v_1^{(r)}$.
A similar argument will give the existence of the expansion to arbitrary order of $v_2^{(r)}$.  
\begin{equation} \label{eq185}
\begin{split}
v_1^{(r)}(\zeta) = \mbox{Ai}_{\mathcal{C}_1}^{(r)}(\zeta) = & 
- \frac{r^{(r+1)/3}}{2 \sqrt{\pi} i} \frac{ e^{r F(T_1)}}{\left( -\frac{r}{2} F''(T_1) \right)^{1/2} }\times \\
& \times\left( 1 + \sum_{k=2}^{3M} S_{2k} \frac{ \Gamma\left( (2k+1)/2\right) }{\sqrt{\pi}} \left( -\frac{r}{2} F''(T_1) \right)^{-k} + \mathcal{O}\left( \left( \frac{r}{\zeta^3}\right)^{M+1} \right) \right).
\end{split}
\end{equation}
Each of the terms in (\ref{eq185}) have a truncated expansion whose error is dominated by $\mathcal{O}\left( r^{4M+4} / \zeta^{3M + 3} \right)$.  

To see this we begin with 
\begin{align}
\begin{split}
\left( -\frac{r}{2} F''(T_1) \right)^{-1/2} &= \frac{1}{i r^{1/3} \zeta^{1/4}} \left[ 1 - \sum_{j=3}^{6M+5} F_{2, j}^{(1)} \frac{ r^{j/3} }{\zeta^{j/2}} + 
\mathcal{O}\left( \frac{r^{2M+2}}{\zeta^{3M+3}}, \frac{r^{2M+3}}{\zeta^{3M+9/2}}\right) \right]^{-1/2}
\\
&= \frac{1}{i r^{1/3} \zeta^{1/4}} \left[ 1 + \sum_{j=3}^{6M+5} R_j(0, 0, F_{2, 3}^{(1)} r, F_{2, 4}^{(1)} r^{4/3}, \dots) \zeta^{-j/2} + \mathcal{O}\left( \frac{r^{2M+2}}{\zeta^{3M+3}}, \frac{r^{2M+3}}{\zeta^{3M+9/2}}\right) \right],
\end{split}
\end{align}
where
\begin{equation}
\left( 1 + \sum_{j=1}^\infty x_j z^j \right)^{-1/2} = 1 + \sum_{j=1}^\infty R_j(x_1, x_2, \dots, x_j) z^j ,
\end{equation}
and we have used Proposition \ref{exp-F}.
One checks that indeed $ R_j = \mathcal{O}\left( \frac{ r^{\lfloor j/3 \rfloor}}{\zeta^{j/2}} \right)$.  

Next we examine
\begin{align}
\begin{split}
e^{r F(T_1)} &= \exp\left( -\frac{2}{3} \zeta^3 + r \pi i - \frac{r}{3} \log(r) + r \log( \zeta^{1/2} - \tau) + \sum_{j=3}^{6M} r^{1+j/3} F_{0, j}^{(1)} \zeta^{-j/2} + \mathcal{O}\left( \frac{r^{2M+3}}{\zeta^{3M+3}}, \frac{r^{2M+4}}{\zeta^{3M+9/2}} \right) \right)\\
&= (-1)^r (\zeta^{1/2} - \tau)^r r^{-r/3} \exp\left( -\frac{2}{3}\zeta^{3/2} + \sum_{j=3}^{6M+6} r^{1+j/3} F_{0, j}^{(1)} \zeta^{-j/2} \right) \left[ 1 + \mathcal{O}\left( \frac{r^{2M+3}}{\zeta^{3M+3}}, \frac{r^{2M+4}}{\zeta^{3M+9/2}} \right) \right].
\label{eq-186}
\end{split}
\end{align}
We control each of the remaining terms in the exponent of (\ref{eq-186}) 
separately:
\begin{equation}\label{source-j}
e^{ r^{1 + j/3} F_{0, j}^{(1)} \zeta^{-j/2} } = \sum_{k=0}^{\lceil (6M+6)/j \rceil -1} 
\frac{1}{k!} \left( r^{1+j/3} F_{0, j}^{(1)} \zeta^{-j/2} \right)^k + 
\mathcal{O}\left( \left( \frac{r^{\lfloor j/3 \rfloor + 1} }{ \zeta^{j/2}} \right)^{\lceil (6M+6)/j \rceil} \right),
\end{equation}
where we have used Proposition \ref{exp-F} and the identity $j/3 - ( j \mathrm{mod}(3) )/3 = \lfloor j/3 \rfloor$.  
This error will be the dominant one (for some choices of $j$).  We have 
\begin{equation}
\frac{r^{\lfloor j/3 \rfloor +1}}{\zeta^{j/2}} 
\end{equation}
is largest when $j \mbox{mod}(3) = 0$.  Suppose $j=3 l$, then we have the error arising
from the expansion (\ref{source-j}) is 
\begin{equation}
\mathcal{O}\left( \frac{r^{(l+1)\lceil(2M+2)/l\rceil}}{\zeta^{ \frac{3}{2} l \lceil(2M+2)/l \rceil }} \right) .
\end{equation}
Noting that 
\begin{equation}
\gamma (l+1) - 3 l / 2 <  2 \gamma - 3/2 , \quad \mbox{for} \quad l > 1,
\end{equation}
we conclude that the dominant error contribution to (\ref{eq-186}) from the expansions (\ref{source-j}) for $3 \leq j \leq 6M+5 $ is 
\begin{equation} \label{dom-err}
\mathcal{O}\left( \frac{ r^{4M+4}}{\zeta^{3M+3}} \right) .
\end{equation}  
The remaining terms in (\ref{eq185}) may be analyzed in a similar way, the error term (\ref{dom-err}) remains the dominant one, and our conclusion is that $v_1^{(r)}$ may be written as a finite collection of terms of decreasing order plus an error term of the form
\begin{equation}
\mathcal{O}\left( \frac{r^{4M+4}}{\zeta^{3M+3}}\right).
\end{equation}
This is small (in large $n$) for $\gamma < \frac{1}{8}$.  A similar analysis may be carried out for $v_2^{(r)}(\zeta)$ to derive an identical error bound.
The entries of the second row are then computed directly by taking a derivative in $\zeta$ of these expressions; the entries of the third row are also computed directly by substituting for $r = r-1$ in the formulas for the first row.
 This completes the proof that the expansions in (\ref{arbitrary-v1})--(\ref{arbitrary-32}) in Proposition \ref{Ar-col12} are asymptotic.

\subsection{Proof of Proposition \ref{Ar-col3}}
Here we prove the existence of general asymptotic expansions for the 
third column of $\mathcal{A}_{(r)}(\zeta)$ (see \eqref{Ar13}).  
To obtain the expansion of $\mathcal{A}_{(r,13)}$ we 
use the definition \eqref{148} in terms of Cauchy transforms.
\begin{align}
\begin{split}
v_3^{(r)}(\zeta)&=\frac{{\rm e}^{-\tau\zeta}}{2i\pi}\left(
\int_{{\cal C}_6}\frac{{\rm Ai}_{{\cal C}_6}^{(r)}(t){\rm e}^{\tau t}dt}{t-\zeta}-
\int_{{\cal C}_2}\frac{{\rm Ai}_{{\cal C}_2}^{(r)}(t){\rm e}^{\tau t}dt}{t-\zeta}\right)
\\&=\frac{{\rm e}^{-\tau\zeta}}{2i\pi}\left(
\int_{{\cal C}_5}\frac{{\rm Ai}_{{\cal C}_6}^{(r)}(t){\rm e}^{\tau t}dt}{t-\zeta}-
\int_{{\cal C}_3}\frac{{\rm Ai}_{{\cal C}_2}^{(r)}(t){\rm e}^{\tau t}dt}{t-\zeta}+
\int_{{\cal C}_4}\frac{{\rm Ai}_{{\cal C}_1}^{(r)}(t){\rm e}^{\tau t}dt}{t-\zeta}\right).
\end{split}
\end{align}
The above equality is simply the recombination of contours.  Then, from Definition \ref{155}, we can see that the numerator is a total derivative, i.e. ${\rm Ai}_{{\cal C}_6}^{(r)}(\zeta){\rm e}^{\tau t}=\partial_\zeta\left({\rm Ai}_{{\cal C}_6}^{(r-1)}(\zeta){\rm e}^{\tau t}\right)$.  Performing integration by parts, we get
\begin{equation}
\begin{split}
(\mbox{above})=&
\frac{{\rm e}^{-\tau\zeta}}{2i\pi}\left(
\left.\frac{{\rm Ai}_{{\cal C}_6}^{(r-1)}(t){\rm e}^{\tau t}}{t-\zeta}\right|_{{\cal C}_5}-
\left.\frac{{\rm Ai}_{{\cal C}_2}^{(r-1)}(t){\rm e}^{\tau t}}{t-\zeta}\right|_{{\cal C}_3}+
\left.\frac{{\rm Ai}_{{\cal C}_1}^{(r-1)}(t){\rm e}^{\tau t}}{t-\zeta}\right|_{{\cal C}_4}\right)
\\ &+\frac{{\rm e}^{-\tau\zeta}}{2i\pi}\left(
\int_{{\cal C}_5}\frac{{\rm Ai}_{{\cal C}_6}^{(r-1)}(t){\rm e}^{\tau t}dt}{(t-\zeta)^2}-
\int_{{\cal C}_3}\frac{{\rm Ai}_{{\cal C}_2}^{(r-1)}(t){\rm e}^{\tau t}dt}{(t-\zeta)^2}+
\int_{{\cal C}_4}\frac{{\rm Ai}_{{\cal C}_1}^{(r-1)}(t){\rm e}^{\tau t}dt}{(t-\zeta)^2}\right).
\end{split}
\end{equation}
The first three terms are evaluated at the endpoints of the contours, ${\cal C}_3, {\cal C}_4$, and ${\cal C}_5$.   Each contour starts from the same point and goes to infinity.   By ${\rm Ai}^{(r)}_{{\cal C}_1}-{\rm Ai}^{(r)}_{{\cal C}_2}+{\rm Ai}^{(r)}_{{\cal C}_6}\equiv 0$ and the decay property of each integrand at the corresponding infinity, the first three terms vanish. 

We can perform the integration by parts on the remaining three integrals recursively to finally get
\begin{equation}
v_3^{(r)}(\zeta)=r!\frac{{\rm e}^{-\tau\zeta}}{2i\pi}\left(
\int_{{\cal C}_5}\frac{{\rm Ai}_{{\cal C}_6}^{(0)}(t){\rm e}^{\tau t}dt}{(t-\zeta)^{r+1}}-
\int_{{\cal C}_3}\frac{{\rm Ai}_{{\cal C}_2}^{(0)}(t){\rm e}^{\tau t}dt}{(t-\zeta)^{r+1}}+
\int_{{\cal C}_4}\frac{{\rm Ai}_{{\cal C}_1}^{(0)}(t){\rm e}^{\tau t}dt}{(t-\zeta)^{r+1}}\right).
\end{equation}
We intend to obtain the asymptotic expansion of the above in large $\zeta$.   
We define $\Delta_{M}$ such that
\begin{equation}
v_{3}^{(r)}(\zeta)=\frac{r!}{(-1)^{r+1}}\frac{{\rm e}^{-\tau\zeta}}{2 i\pi} 
\left[\Delta_M+\sum_{(A,B)} \int_{{\cal C}_{B}}{\rm Ai}_{{\cal C}_{A}}^{(0)}(t)\frac{{\rm e}^{\tau t}}{\zeta^{r+1}}\left(1+\frac{r+1}{\zeta}t+\cdots+\frac{(r+M)!}{r!M!}\frac{t^M}{\zeta^M}\right)dt\right],
\end{equation}
that is,
\begin{equation}
\Delta_M:=\sum_{(A,B)} \int_{{\cal C}_{B}}{\rm Ai}_{{\cal C}_{A}}^{(0)}(t){\rm e}^{\tau t}\left[\frac{(-1)^{r+1}}{(t-\zeta)^{r+1}}-\left(\frac{1}{\zeta^{r+1}}+\frac{r+1}{\zeta^{r+2}}t+\cdots+\frac{(r+M)!}{r!M!}\frac{t^M}{\zeta^{r+M+1}}\right)\right]d t,
\end{equation}
where $(A,B)=(6,5),(-2,3),(1,4)$.

We divide the above $t$-integral (on ${\cal C}_{B}$) into two parts: one for $|t|<L$ and the other for $|t|\geq L$ where $L$ is set by 
\begin{equation}
L= n^{\delta},\qquad 0<\delta<1/12.
\end{equation}
There exists $n_{0}>0$ such that, for $n>n_{0}$,
\begin{equation}
\left|{\rm Ai}^{(0)}_{{\cal C}_A}(t)\right|=\frac{1}{2\pi}\frac{{\rm e}^{-\frac23 |t|^{3/2}}}{|t|^{1/4}}\left(1+{\cal O}\left(\frac{1}{|t|^{3/2}}\right)\right),\quad \mbox{$t\in{\cal C}_B$, $|t|\geq L=n^{\delta}$}.
\end{equation}

For $r=\mathcal{O}(n^\gamma)$, $0\leq\gamma<1/12$, and $|\zeta|=\mathcal{O}(n^{1/6})$, there exists $n_{0}$ such that the following crude estimate holds for all $n>n_{0}$:
\begin{equation}
\left|\frac{(-1)^{r+1}}{(t-\zeta)^{r+1}}-\left(\frac{1}{\zeta^{r+1}}+\frac{r+1}{\zeta^{r+2}}t+\cdots+\frac{(r+M)!}{r!M!}\frac{t^M}{\zeta^{r+M+1}}\right)\right|<{\rm e}^{ |t|},\qquad |t|\geq L=n^{\delta}.
\end{equation}
This is true since each term in the left hand side is bounded by ${\cal O}( |t|^{M})$ for $n$ large enough.

Using the above two estimates, we conclude that, for $n>n_{0}$ (where $n_{0}$ satisfies the above two characterizations), the contribution of $|t|\geq L$ to $\Delta_{M}$ is bounded by
\begin{align}
\begin{split}
\left|\Delta_{M}\right|_{\text{from $|t|\geq L$}}&<3|\zeta|^{r+1}\int_{|t|>L} \frac{1}{2\pi}\frac{{\rm e}^{-\frac23|t|^{3/2}}}{|t|^{1/4}}{\rm e}^{(|\tau|+1)|t|} |d t|
\\
&<3|\zeta|^{r+1}\int_{L}^{\infty} {\rm e}^{-\frac13 t ^{3/2}} dt<6|\zeta|^{r+1}{\rm e}^{-\frac13 L^{3/2}}\sim 3\,n^{(r+1)/6}{\rm e}^{-\frac13 n^{3\delta/2}}. \label{t>L}
\end{split}
\end{align}
The second inequality holds if we choose $n_{0}$ large enough.  The third inequality is obtained by $\int_{L}^{\infty}{\rm e}^{-\frac13 t^{3/2}}dt<\int_{L}^{\infty}{\rm e}^{-\frac13 t^{3/2}}\frac32 t^{1/2}d t$ for $L$ large enough.  The last estimate is for $|\zeta|\sim n^{1/6}$.

Now we consider $|t|<L$. 

An error bound of a finite Taylor expansion for a general analytic function $f$ is given by
\begin{equation}
\begin{split}
\left|f(x+y)-\left(f(x)+y f'(x)+...\frac{y^M}{M!} f^{(M)}(x)\right)\right|&=\left|\int_0^yd s_1\int_0^{s_1}d s_2\cdots \int_0^{s_M}d s_{M+1} f^{(M+1)}(x+s_{M+1})\right|
\\&\leq \frac{y^{M+1}}{(M+1)!} \max_{s\in [0,y]}\left| f^{(M+1)}(x+s ) \right|.
\end{split}
\end{equation}
Let us apply this to our case, $f(x)=1/x^{r+1}$ and $y=-t$.  We get, for $|t|<L$,
\begin{align}
\begin{split}
\left|\frac{(-1)^{r+1}}{(t-\zeta)^{r+1}}-\left(\frac{1}{\zeta^{r+1}}+\frac{r+1}{\zeta^{r+2}}t+\cdots+\frac{(r+M)!}{r!M!}\frac{t^M}{\zeta^{r+M+1}}\right)\right|\leq 
\frac{|t|^{M+1}}{(M+1)!}\max_{s\in [-t,0]}\left|\frac{(r+M+1)!}{r!(\zeta+s)^{r+M+2}}\right|
\\\leq  \frac{|t|^{M+1}}{(M+1)!}\frac{(r+M+1)!}{r!(|\zeta|-|L|)^{r+M+2}}=\left({r+M+1}\atop M+1\right)\frac{|t|^{M+1}}{|\zeta|^{r+M+2}}\frac{1}{(1-|L/\zeta|)^{r+M+2}}.
\end{split}
\end{align}
The last factor is bounded by 2 for $n$ large enough because
\begin{equation}
\frac{1}{(1-|L/\zeta|)^{r+M+1}}=\left(1-\left|\frac{L}{\zeta}\right|\right)^{\left|\frac{\zeta}{L}\right|\times \left|\frac{L}{\zeta}(r+M+1)\right|} \to 1
\end{equation}
 as $n\to\infty$.  Here we have used that $L r/\zeta \sim n^{\delta + \gamma-\frac16}\to 0$ for $\delta<1/12$.

The contribution of $|t|<L$ to $\Delta_{M}$ is then bounded by
\begin{align}
\begin{split}
\left|\Delta_{M}\right|_{\text{from $t< L$}}&<\left({r+M+1}\atop {M+1}\right)\frac{2}{|\zeta|^{r+M+2}}\sum_{(A,B)}\int_{t\in {\cal C}_B, t<L} \left|{\rm Ai}_{{\cal C}_A}^{(0)}(t)\right|{\rm e}^{|\tau t|}|t|^{M+1}|d t|
\\&<\left({r+M+1}\atop {M+1}\right)\frac{6}{|\zeta|^{r+M+2}}\int_{0}^{\infty} {\rm Ai}(t){\rm e}^{|\tau| t} t^{M+1}d t < \text{const.} \frac{r^{M+1}}{|\zeta|^{r+M+2}}, \label{t<L}
\end{split}
\end{align} 
where the constant factor depends only on $M$ and $\tau$, not on $n$. 

From \eqref{t>L} and \eqref{t<L} we get 
\begin{equation}
\left|\Delta_{M}\right|= {\cal O}\left( \frac{r^{M+1}}{|\zeta|^{r+M+2}}\right)
\end{equation}
and we obtain the asymptotic expansion
\begin{equation}
v_3^{(r)}(\zeta)=\frac{r!}{(-1)^{r+1}}\frac{{\rm e}^{-\tau \zeta}}{2 i\pi}\left(\sum_{j=0}^{M} \frac{\eta_{j}(\tau)}{\zeta^{r+j+1}}+  {\cal O}\left( \frac{r^{M+1}}{|\zeta|^{r+M+2}}\right) \right),\qquad \zeta\in \partial {\mathbb D}_{c},
\end{equation}
where 
\begin{equation}
\eta_{j}(\tau):=\sum_{(A,B)}\int_{{\cal C}_B}{\rm Ai}_{{\cal C}_A}^{(0)}(t){\rm e}^{\tau t} t^{j} d t.
\end{equation}
This concludes the proof.

\subsection{The leading-order terms in the expansion of 
\texorpdfstring{$\bs{\mathcal{A}_{(r)}(\zeta)}$}{13212}}
\label{Ar-explicit}
We include here the explicit form of the first few terms of the asymptotic 
expansions of the entries of $\mathcal{A}_{(r)}(\zeta)$ to give an idea of 
their form.  Note that more terms would be needed to carry out the 
Schlesinger calculations explicitly, but we do not need the exact formulas 
for our results.  We find:
\begin{align}
\begin{split}
\mathcal{A}_{(r,11)}(\zeta) & = \frac{(-1)^r}{2 \sqrt{\pi}} \left(\zeta^{1/2} - \tau\right)^r \zeta^{ - 1/4} e^{-\frac{2}{3} \zeta^{3/2}} \times \\
  & \hspace{.2in}\left( 1 + \left(-\frac{r^2}{4} +\frac{r}{2}-\frac{5}{48} \right) \frac{1}{\zeta^{3/2}} + \left( - \frac{r^2\tau}{2} + \frac{3 r\tau}{4} \right) \frac{1}{\zeta^2} + \frac{1}{32} \frac{r^4}{\zeta^3} + \mathcal{O}\left( \frac{r^2}{\zeta^{5/2}}\right) + \mathcal{O}\left( \frac{r^3}{\zeta^3}\right) \right), \end{split}\\
\begin{split}
\mathcal{A}_{(r,21)}(\zeta) &  = \frac{(-1)^r}{2\sqrt{\pi}} (\zeta^{1/2}-\tau)^r \zeta^{ 1/4} e^{-\frac{2}{3} \zeta^{3/2}} \times \\
  & \phantom{=\frac{(-1)^r}{2\sqrt{\pi}} } \left( -1 + \left( \frac{r^2}{4} - \frac{7}{48} \right) \frac{1}{\zeta^{3/2}} + \left( \frac{r^2\tau}{2} - \frac{r\tau}{4} \right) \frac{1}{\zeta^2} - \frac{1}{32} \frac{r^4}{\zeta^3} + \mathcal{O}\left(\frac{r^2}{\zeta^{5/2}}\right) + \mathcal{O}\left(\frac{r^3}{\zeta^3}\right) \right),
\end{split}
\\
\mathcal{A}_{(r,31)}(\zeta) & = \frac{(-1)^{r-1}}{2\sqrt{\pi}} (\zeta^{1/2}-\tau)^r e^{-\frac{2}{3}\zeta^{3/2}} \left( \frac{1}{\zeta^{3/4}} + \frac{\tau}{\zeta^{5/4}} + \frac{\tau^2}{\zeta^{7/4}} + \mathcal{O}\left(\frac{r^2}{\zeta^{9/4}}\right) \right)\,, \\
\begin{split}
\mathcal{A}_{(r,12)}(\zeta) &  = -\frac{1}{2\sqrt{\pi} i} \left(\zeta^{1/2} + \tau\right)^r \zeta^{-1/4} e^{\frac{2}{3}\zeta^{3/2} } \times \\ 
 & \left( 1 + \left( \frac{r^2}{4} - \frac{r}{2} + \frac{5}{48} \right) \frac{1}{\zeta^{3/2}} +\left( - \frac{r^2 \tau}{2}+ \frac{3 r\tau}{4} \right) \frac{1}{\zeta^2} + \frac{1}{32} \frac{r^4}{\zeta^3} + \mathcal{O}\left( \frac{r^2}{\zeta^{5/2}}\right) + \mathcal{O}\left( \frac{r^3}{\zeta^3}\right)\right)\,, 
\end{split} \\
\begin{split}
\mathcal{A}_{(r,22)}(\zeta) & = -\frac{1}{2\sqrt{\pi} i} \left(\zeta^{1/2}+\tau\right)^r \zeta^{1/4} e^{\frac{2}{3} \zeta^{3/2} } \times 
\\
&\phantom{=}
\left( 1 + \left(\frac{r^2}{4} - \frac{7}{48}\right) \frac{1}{\zeta^{3/2}} + \left( - \frac{r^2 \tau}{2} + \frac{r\tau}{4} \right) \frac{1}{\zeta^2} + \frac{1}{32} \frac{r^4}{\zeta^3} + \mathcal{O}\left( \frac{r^2}{\zeta^{5/2}}\right)+\mathcal{O}\left( \frac{r^3}{\zeta^3}\right) \right), 
\end{split}\\
\mathcal{A}_{(r,32)}(\zeta) &  = - \frac{1}{2\sqrt{\pi} i} \left(\zeta^{1/2}+\tau\right)^{r} e^{\frac{2}{3}\zeta^{3/2} } \left( \frac{1}{\zeta^{3/4}} 
-\frac{\tau}{\zeta^{5/4}} + \frac{\tau^2}{\zeta^{7/4}} 
+ \mathcal{O}\left(\frac{r^2}{\zeta^{9/4}}\right) \right),\\
\mathcal{A}_{(r,13)}(\zeta) & = \mbox{Ai}_{\mathcal{C}_3}^{(r)}(\zeta) = \frac{(-1)^{r+1}}{2\pi i} \zeta^{-r} r! e^{-\tau \zeta} \eta_0(\tau) \left( \frac{1}{\zeta} + \mathcal{O}\left(\frac{r}{\zeta^2}\right) \right), \\
\mathcal{A}_{(r,23)}(\zeta) & = \partial_\zeta \mbox{Ai}_{\mathcal{C}_3}^{(r)}(\zeta) = 
\frac{(-1)^{r+1}}{2\pi i} \zeta^{-r} r! e^{-\tau\zeta} \eta_0(\tau) \left( -\frac{\tau}{\zeta} + \mathcal{O}\left( \frac{r}{\zeta^2}\right) \right),\\
\mathcal{A}_{(r,33)}(\zeta) & = \mbox{Ai}_{\mathcal{C}_3}^{(r-1)}(\zeta) = 
\frac{(-1)^r }{2\pi i} \zeta^{-r} (r-1)! e^{-\tau\zeta} \eta_0(\tau) \left( 1 + \mathcal{O}\left( \frac{r}{\zeta}\right) \right),
\end{align} 
where $\eta_0(\tau)$ is given by \eqref{eta-of-tau}.

\end{appendix}

%

\bibliographystyle{alpha}

\end{document}